\newtheorem{myTheo}{Theorem}
\newtheorem{myCor}{Corollary}
\newtheorem{myDef}{Definition}
\newtheorem{myLem}{Lemma}
\newtheorem{myOpt}{Optimization}
\newtheorem{myPro}{Proposition}
\begin{document}

\title{Collaborative Computing in Non-Terrestrial Networks: A Multi-Time-Scale Deep Reinforcement Learning Approach }

\author{\IEEEauthorblockN{Yang Cao, Shao-Yu Lien, Ying-Chang Liang, {\it{Fellow, IEEE}},\\ Dusit Niyato, {\it{Fellow, IEEE}}, and Xuemin (Sherman) Shen, {\it{Fellow, IEEE}}}
\thanks{Part of this work was presented in IEEE PIMRC 2023 \cite{Cao2023}.

Y. Cao is with the School of Information Science and Technology, Southwest Jiaotong University, Chengdu 611756, China. (e-mail: cyang9502@gmail.com)

Y.-C. Liang is with the Center for Intelligent Networking and Communications (CINC), University of Electronic Science and Technology of China, Chengdu 611731, China. (e-mail: liangyc@ieee.org).

S.-Y. Lien is with the Institute of Intelligent Systems, National Yang Ming Chiao Tung University, Tainan City 711, Taiwan. (e-mail: sylien@nycu.edu.tw).

D. Niyato is with the School of Computer Science and Engineering, Nanyang Technological University, Singapore 639798 (e-mail: dniyato@ntu.edu.sg).

X. Shen  is with the Department of Electrical and Computer Engineering, University of Waterloo, 200 University Avenue West, Waterloo, Ontario, Canada, N2L 3G1. (e-mail: sshen@uwaterloo.ca)}}

\maketitle

\thispagestyle{fancy}
\lhead{\small{© 2023 IEEE. Personal use of this material is permitted.  Permission from IEEE must be obtained for all other uses, in any current or future media, including reprinting/republishing this material for advertising or promotional purposes, creating new collective works, for resale or redistribution to servers or lists, or reuse of any copyrighted component of this work in other works.
DOI: 10.1109/TWC.2023.3323554}}
\renewcommand{\headrulewidth}{0pt} 

\begin{abstract}
Constructing earth-fixed cells with low-earth orbit (LEO) satellites in non-terrestrial networks (NTNs) has been the most promising paradigm to enable global coverage. The limited computing capabilities on LEO satellites however render tackling resource optimization within a short duration a critical challenge. Although the sufficient computing capabilities of the ground infrastructures can be utilized to assist the LEO satellite, different time-scale control cycles and coupling decisions between the space- and ground-segments still obstruct the joint optimization design for computing agents at different segments. To address the above challenges, in this paper, a multi-time-scale deep reinforcement learning (DRL) scheme is developed for achieving the radio resource optimization in NTNs, in which the LEO satellite and user equipment (UE) collaborate with each other to perform individual decision-making tasks with different control cycles. Specifically, the UE updates its policy toward improving value functions of both the satellite and UE, while the LEO satellite only performs finite-step rollout for decision-makings based on the reference decision trajectory provided by the UE. Most importantly, rigorous analysis to guarantee the performance convergence of the proposed scheme is provided. Comprehensive simulations are conducted to justify the effectiveness of the proposed scheme in balancing the transmission performance and computational complexity.
\end{abstract}

\begin{IEEEkeywords}
Non-terrestrial networks (NTNs), earth-fixed cell, beam management, resource allocation, deep reinforcement learning (DRL), multi-time-scale Markov decision process (MMDPs).
\end{IEEEkeywords}

\section{Introduction}

Toward fulfilling the ubiquitous connectivity scenario of the International Mobile Telecommunications for 2030 (IMT-2030), non-terrestrial networks (NTNs) involving various flying stations (i.e., satellites, high-altitude platforms or unmanned aerial vehicles (UAVs)) have been integrated with new radio (NR) technology of the third generation partnership project (3GPP) to provide wireless services to rural areas uncovered by traditional terrestrial networks \cite{3GPP2021}. In 3GPP NR based NTNs (NR-NTNs), the stations with the transparent payload can only repeat the signals, while the stations with the regenerative payload can further provide coding/modulation, switching/routing and radio resource management (RRM) functions of the BS. From 3GPP Release 15, the geostationary earth-orbit (GEO) and low earth-orbit (LEO) satellites have been regarded as key deployment scenarios \cite{Lin2021}. Particularly, to achieve satellite direct-to-device services, LEO satellites have emerged as a major paradigm with the merits of a lower signal propagation delay and a lower signal attenuation level over GEO satellites. Toward implementing global coverage, more than 17,000 LEO satellites are projected to be launched by 2030 to form constellations. Nevertheless, since LEO satellites revolve around the earth with extremely high speeds (i.e., 7,000 meters per second), the throughput performance of the LEO satellite service is significantly subject to the varying link quality and short dwell duration between the satellite and user equipment (UE) on the ground. Therefore, the moving cell pattern of the LEO satellite raises critical challenges in pursuing stable high-throughput satellite links.

To implement seamless services for the UE, two cell scenarios based on whether an LEO cell is stationary respect to the earth surface (i.e., earth-moving cell and earth-fixed cell) have been considered in 3GPP NR-NTNs \cite{3GPP2021}. In the earth-moving cell scenario, the beam of the LEO satellite fixes toward a certain direction, and thus the beam coverage moves along with the moving position of the LEO satellite. In this case, due to the severe time-varying propagation attenuation incurred by the rapid movement of the LEO satellite, the throughput performance of a specific set of UEs degrades sharply, and these UEs should be handed over frequently between the leaving satellite/beam and a newly arriving satellite/beam. To enhance the consistency of services, in the earth-fixed cell scenario, the LEO satellite needs to adjust its beam direction in a nearly real-time manner to keep its beam pointing a specific terrestrial area within its dwell duration. At the same time, radio resources in the time domain or the frequency domain also need to be optimized timely according to the variations on link qualities and throughput requirements of UEs. To perform the above real-time RRM function, a regenerative payload is mandatory for LEO satellites, which demands considerable on-board computing capabilities. However, the on-board computing capability of the LEO satellite is largely constrained by the energy supply and payload weight, and it may be harmed by cosmic radiations of the space environment in practical deployments. In this case, alleviating the computing burdens of LEO satellites turns out to be the key to empower NTNs.

Fortunately, different from the fact that both LEO satellites and terrestrial sensors have limited energy supplies in 3GPP Narrowband Internet-of-Thing (NB-IoT) based NTNs \cite{Ortigueira2021}, the terrestrial components (i.e., hand-held devices and vehicles) in 3GPP NR-NTNs normally have sufficient energy supplies and can achieve powerful computing capabilities over those non-terrestrial components \cite{Yu2022}. Therefore, to implement the earth-fixed cell scenario, part of huge computations of the RRM function can be offloaded to the UE so that the computing burden on the LEO satellite can be alleviated. To this end, the multi-agent scheme can be adopted as an effective methodology to form collaborative decision-making mechanism to optimize the overall performance, as it can decompose the high-dimensional overall performance optimization into low-dimensional optimizations for multiple agents. However, there are two critical challenges in constructing such a multi-agent scheme involving multiple stations with different computing capabilities. On the one hand, the control cycle for the resource optimization is mainly subject to the computing capability, and thus the control cycles for the LEO satellite and UE with different computing capabilities may be different. In this case, since the resource decisions optimized at different time-scales are coupled, fully distributed agents may be unable to obtain converged policies. On the other hand, although the backward induction spirit in Stackelberg game \cite{Nie2019} is feasible if the knowledge of different stations are perfectly known, the satellite/UE cannot easily obtain its required knowledge from other stations especially when their control cycles are different. For instance, the LEO satellite with a limited computing capability must consume much energy to infer the variations of the UE before making its decisions.

Recently, the deep reinforcement learning (DRL) technique has been regarded as a promising paradigm for handling sequential decision-making tasks with the merits of inferring environmental knowledge from the continuous interactions with the environment. Therefore, through regarding the others as a part of the environment, the LEO satellite and UE can act as distributed DRL agents to derive the optimal policies, leading to a multi-time-scale multi-agent DRL scheme. In the literature, there is still a lack of analytical foundation to achieve continuous policy improvements in such a DRL scheme, and particularly how to achieve the convergence of different agents' policies still remains a thorny issue \cite{Feriani2021}. To thoroughly address these key issues and pave these analytical foundations, in this paper, we focus on the two-time-scale two-agent scenario in NTNs (i.e., one LEO satellite and one UE are considered), which creates the research roadmap toward the multi-time-scale multi-agent DRL scheme.

In this paper, we propose a two-time-scale collaborative DRL scheme for the two-time-scale two-agent scenario, and apply this scheme to optimize the beam direction and resource allocation of the earth-fixed cell of NTNs. Specifically, the LEO satellite determines its transmitting beam and resource reservation scheme with a large-time-scale control cycle, and the UE optimizes its receiving beam pattern and resource access policy based on its data rate demands with a small-time-scale control cycle. To the best of our knowledge, this scheme is the pioneer to adopt two-time-scale two-agent DRL to solve resource optimizations of the earth-fixed cell of NTNs. Most notably, this paper provides a comprehensive convergence analysis for the investigated two-time-scale two-agent DRL, which can be adopted to other usage scenarios and extended to the general multi-agent scenarios. The main contributions and principles of this scheme include the followings.
\begin{itemize}
\item{With the objective of reducing the optimization problem size, we first propose a two-time-scale two-agent Markov decision process (TTMDP) model considering the mutual-impact between agents with different control cycles. The joint optimization of beam management and resource allocation is designed and formulated as large-time-scale and small-time-scale MDPs for the LEO satellite and UE, respectively. }

\item{Considering the limited computing capability of the LEO satellite, the UE with the sufficient computing capability handles most of computations for tackling the proposed TTMDP model. To this end, the trust region policy optimization (TRPO) is adopted to create monotonic policy improvement directions for both agents (in different tiers) at the UE side. Subsequently, an efficient finite-step rollout algorithm is developed for the LEO satellite to obtain the optimal policy along the improvement direction derived by the UE.}

\item{Since the performance convergence is the key to enable a distributed DRL scheme, we develop rigorous analytical foundations for the proposed TTMDP model, and derive the convergence guarantee of the proposed two-time-scale two-agent DRL scheme. Additionally, for each agent in the proposed scheme, the bounds on the convergence error and convergence time are also provided.}

\item{Extensive simulations are conducted, which show that the proposed scheme substantially improves the overall throughput over existing DRL-based schemes without proper collaboration designs. Additionally, compared to the combinations of searching-based/geometry-based beam optimization schemes and greedy-based/fixed/bandit-based resource allocation schemes, the proposed scheme can also tackle the trade-off between the overall throughput, resource utilization and computational complexity.}

\end{itemize}

The rest of this paper is organized as follows. The related work is presented in Section II. In Section III, the system model and problem formulation are specified. In Section IV, we provide the facilitation details and convergence analysis of the proposed two-time-scale collaborative DRL scheme. Furthermore, the simulation results are presented in Section V, and the paper is concluded in Section VI.

\section{Related Work}

As steerable beams are of most importance in the earth-fixed cell, many researches concentrate on the beam optimization of the LEO satellite. In \cite{Palacios2021}, a family of transmitting beam codebooks was designed through capturing the relative movements of the LEO satellites to UEs. In \cite{Roper2022}, a cooperative multi-satellite transmission scenario was considered, and a distributed precoding scheme for the satellite swarm was proposed to optimize the achievable rate at the ground gateway. Furthermore, the link establishment issue for LEO satellites with beam steering was tackled through a greedy matching algorithm in \cite{Mayorga2021}. While in \cite{Li2022}, the downlink channel model for the LEO-UE pair with uniform planar arrays (UPAs) was derived, and the optimal transmission strategy was also developed to maximize the ergodic sum rate of multiple UEs. Furthermore, to achieve a good trade-off between the performance and complexity, a beam-updating-frequency-reducing scheme was investigated in \cite{Zhao2021} to design long-term effective codebooks based on the slow variations of space-terrestrial links to replace the real-time adjustments. Nevertheless, in those existing works, it is assumed that a line-of-sight (LOS) link always exists between an LEO satellite and any UE, and thus only the slow variations of space-terrestrial links are considered (namely, relative positions and average channel gains) to optimize the statistical performance. However, in the practical NTN deployment, the LOS link may not generally exist. If the elevation angle of the LEO satellite moving path is not sufficiently large, the link between an LEO satellite and a ground UE may be scattered by the landform. In this case, the fast-scale variations (namely, multi-path effects and doppler shifts) should also be adequately taken into considerations to further enhance the instantaneous performance.

With the merit of developing data-driven schemes, DRL utilizing deep neural networks (DNNs) as powerful approximation methods has been introduced to enhance the adaptability of the resource management schemes to the environmental variations both in slow and fast scales. For example, DRL was adopted to optimize the bandwidth allocation for the multi-beam LEO satellite with dynamic traffic loads \cite{Liao2020}. Furthermore, in \cite{Lin2022}, the joint optimization of the beam pattern and bandwidth allocation was solved by a multi-agent DRL scheme through adopting each beam of the satellite as an agent. However, even with a non-DRL method, the LEO satellite with a limited computing capability cannot satisfy the real-time RRM requirement of the earth-fixed cell, especially facing the high-dimensional optimization of beam direction and resource allocation. To successfully perform the decision-making task with low complexity, the LEO satellite needs to offload the DRL model training task to terrestrial stations with powerful computing capabilities. To this end, distributed model-training architecture such as federated learning can also be utilized to train a global DRL model to be employed at the satellite side \cite{Cao2021,Razmi2022}. Nevertheless, since the number of DNN parameters is mainly determined by the optimization dimension, a large amount of signaling overheads can be incurred by frequent DRL-model-parameter exchanges between the LEO satellite and terrestrial stations.

To practically implement the distributed control architecture with low signaling overheads, stations involved in the resource optimization should actively configure corresponding resources, and thus the multi-agent DRL scheme for stations with different characteristics and capabilities should be constructed. Recently, multi-time-scale DRL schemes have been widely investigated for stations with different control cycles \cite{Talwar2021,Tan2019,Qin2019,Han2023,Chai2022}. In \cite{Tan2019}, deep Q-network and particle swarm optimization algorithms were adopted to derive the large-time-scale server selection policy and small-time-scale resource allocation policy, respectively. Additionally, a multi-time-scale coordination model was developed for self-organizing networks, and the corresponding resource optimization was solved by a Q-learning algorithm in \cite{Qin2019}. While in \cite{Han2023}, a hybrid actor-critic algorithm was proposed to tackle a two-time-scale MDP formulated for LEO-assisted task offloading. However, the most crucial and fundamental issue in existing multi-time-scale and multi-agent DRL methods is the lack of analytical foundation to guarantee the performance convergence. Although the convergence analysis of a two-time-scale DRL-based stochastic optimization scheme was provided in \cite{Chai2022}, the convergence of the multi-time-scale multi-agent DRL scheme remains a research hole. Different from prior works, this paper provides the policy improvement mechanism and corresponding convergence performance analysis for the two-time-scale two-agent DRL scheme, which offers a fundamental breakthrough toward multi-time-scale multi-agent DRL schemes.

\section{System Model and Problem Formulation}

\subsection{System Model}

\begin{figure*}
\centering
\includegraphics[scale = 0.48]{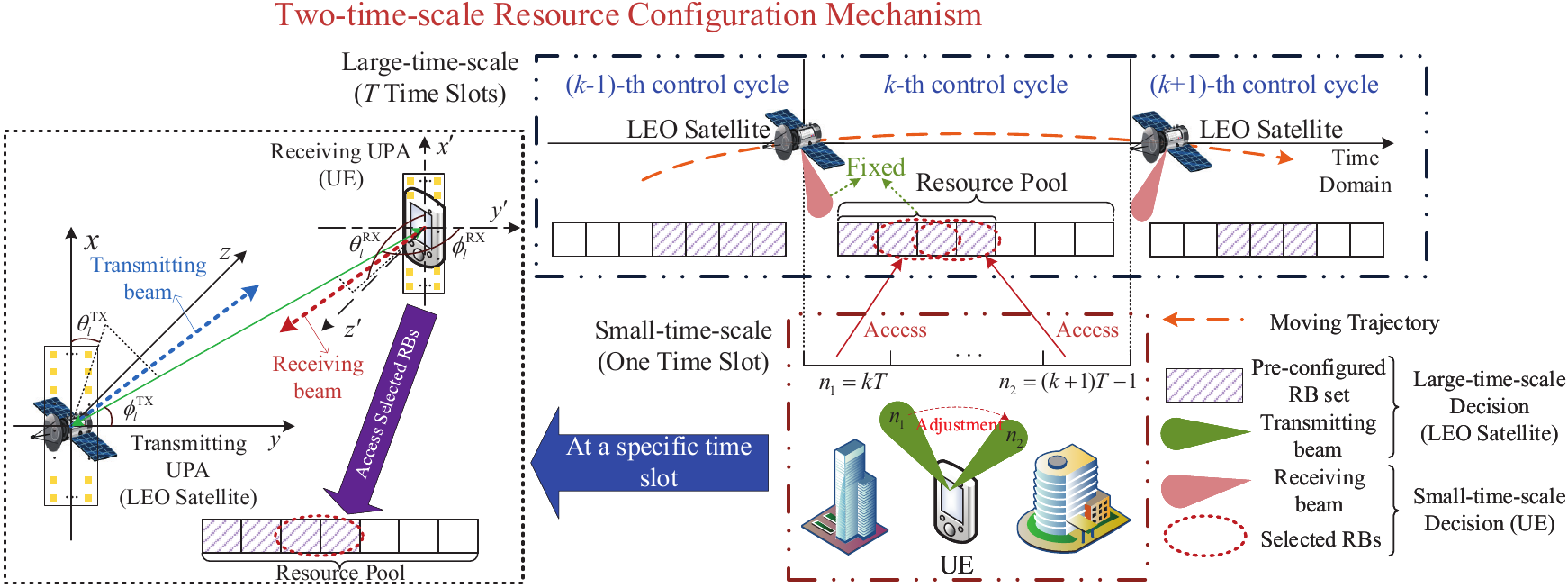}
\caption{The LEO downlink transmission model, in which one moving LEO satellite services the ground UE.}\label{fig:systemodel}
\end{figure*}

As illustrated in Fig. \ref{fig:systemodel}, in this paper, the downlink transmissions in the NTN are considered, in which a specific ground UE is serviced by an LEO satellite moving along a predesigned orbit. UPAs with different numbers of antennas are deployed both at the LEO satellite and ground UE, and consequently there are $N_t^x$ and $N_t^y$ antennas in $x$- and $y$-axis of LEO satellite’s UPA, respectively. The total number of antennas on an LEO satellite is therefore $N_t = N_t^x \times N_t^y $. At the UE side, there are  $N_r^{x'}$ and $N_r^{y'}$ antennas  in $x'$- and $y'$-axis of its UPA, respectively, and the total antenna number is $N_r = N_r^{x'} \times N_r^{y'}$.

\subsubsection{Channel Model}
According to \cite{Li2022}, the (small-scale) downlink channel gain between the LEO satellite and UE at time instant $t$ and frequency $f$ can be given as,
{\begin{equation}\label{eq:channel}
{\bm{H}}_{t,f}  = \sum\limits_{l=0}^{L-1}\alpha_l e^{j2\pi[tv_l-f\tau_l]}{\bm{a}}_r(\theta_l^{\text{Rx}}, \phi_l^{\text{Rx}}){\bm{a}}_t^{\text{H}}(\theta_l^{\text{Tx}}, \phi_l^{\text{Tx}}),
\end{equation}}
where $L$ is the number of multi-paths, $\alpha_l$, $v_l$ and $\tau_l$ are the complex-valued gain, Doppler shift and propagation delay at path $l$, respectively, and $(\cdot)^{\text{H}}$ is the conjugate transpose operation. Additionally, the transmitting 3D-steering vector at path $l$ with angles-of-departures (AoDs) of azimuth angle $\theta_l^{\text{Tx}}$ and elevation angle $\phi_l^{\text{Tx}}$ is defined as
{\begin{equation}
{\bm{a}}_t(\theta_l^{\text{Tx}}, \phi_l^{\text{Tx}}) = {\bm{a}}_{t,x}(\theta_l^{\text{Tx}}, \phi_l^{\text{Tx}}) \otimes {\bm{a}}_{t,y}(\theta_l^{\text{Tx}}, \phi_l^{\text{Tx}}),
\end{equation}}
where ${\bm{a}}_{t,x}(\theta_l^{\text{Tx}}, \phi_l^{\text{Tx}})$ and ${\bm{a}}_{t,y}(\theta_l^{\text{Tx}}, \phi_l^{\text{Tx}})$ are the steering vectors on the $x$- and $y$-directions, which can be given by
{\begin{align}\nonumber
{\bm{a}}_{t,x}(\theta_l^{\text{Tx}}, \phi_l^{\text{Tx}}) &= \frac{1}{\sqrt{N_{t}^{x}}}[1, e^{j\frac{2\pi}{\lambda}d_t\sin(\phi_l^{\text{Tx}})\cos(\theta_l^{\text{Tx}})}, \ldots,\\
&\quad  e^{j\frac{2\pi}{\lambda}(N_{t}^{x} - 1)d_t\sin(\phi_l^{\text{Tx}})\cos(\theta_l^{\text{Tx}})} ]^{\text{T}},\\\nonumber
{\bm{a}}_{t,y}(\theta_l^{\text{Tx}}, \phi_l^{\text{Tx}}) &= \frac{1}{\sqrt{N_{t}^{y}}}[1, e^{j\frac{2\pi}{\lambda}d_t\cos(\phi_l^{\text{Tx}})}, \ldots,\\
&\quad  e^{j\frac{2\pi}{\lambda}(N_{t}^{y} - 1)d_t\cos(\phi_l^{\text{Tx}})} ]^{\text{T}},
\end{align}}
where $\lambda$ is the wave length, $(\cdot)^{\text{T}}$ is the transpose operator and $d_t$ is the inter-antenna spacing of the transmitting UPA.

Similarly, the three-dimensional (3D) steering vector of receiving UPA at path $l$ with angles-of-arrivals (AoAs) of azimuth angle $\theta_l^{\text{Rx}}$ and elevation angle $\phi_l^{\text{Rx}}$ is given by ${\bm{a}}_r(\theta_l^{\text{Rx}}, \phi_l^{\text{Rx}}) = {\bm{a}}_{r,x'}(\theta_l^{\text{Rx}}, \phi_l^{\text{Rx}}) \otimes {\bm{a}}_{r,y'}(\theta_l^{\text{Rx}}, \phi_l^{\text{Rx}})$, where the steering vectors with inter-antenna spacing of $d_r$ on the ${x}^{\prime}$- and ${y}^{\prime}$-directions are expressed by
{\begin{align}
{\bm{a}}_{r,x'}(\theta_l^{\text{Rx}}, \phi_l^{\text{Rx}}) &= \frac{1}{\sqrt{N_{r}^{x'}}}[1, e^{j\frac{2\pi}{\lambda}d_r\sin(\phi_l^{\text{Rx}})\cos(\theta_l^{\text{Rx}})}, \ldots,\\\nonumber
&  e^{j\frac{2\pi}{\lambda}(N_{r}^{x'} - 1)d_r\sin(\phi_l^{\text{Rx}})\cos(\theta_l^{\text{Rx}})} ]^{\text{T}},\\\nonumber
{\bm{a}}_{r,y'}(\theta_l^{\text{Rx}}, \phi_l^{\text{Rx}}) &= \frac{1}{\sqrt{N_{r}^{y'}}}[1, e^{j\frac{2\pi}{\lambda}d_r\cos(\phi_l^{\text{Rx}})}, \ldots,\\
&  e^{j\frac{2\pi}{\lambda}(N_{r}^{y'} - 1)d_r\cos(\phi_l^{\text{Rx}})} ]^{\text{T}}.
\end{align}}

\subsubsection{Downlink Transmission Model}
The \emph{orthogonal frequency-division multiple access} (OFDMA) has been widely adopted by state-of-the-art NTNs such as the 3GPP NR-NTN. For OFDMA downlink transmissions, a radio resource pool is composed of $M$ time-frequency resource blocks (RBs) indexed by ${\cal{M}} = \{0, \ldots, M-1\}$. Particularly, multiple RBs can be allocated to the UE at each time slot, and a binary indicator $b_{n,m}$ is adopted to indicate whether RB $m$ is allocated to the UE at time slot $n$, i.e., $b_{n,m} = 1$ if RB $m$ is allocated to the UE at time slot $n$, and $b_{n,m} = 0$, otherwise. In this case, the downlink channel gain in \eqref{eq:channel} can be rewritten as ${\bm{H}}_{n,m}$ by replacing $t = nT_s$ and $f= \frac{m}{T_s}$, where $T_s$ is the time duration of one OFDM symbol.

Denote transmitting beam of angles $(\theta_t^n, \phi_t^n)$ at the LEO satellite and receiving beam of angles $(\theta_r^n, \phi_r^n)$ at the UE as ${\bm{w}}_t(\theta_t^n, \phi_t^n) \in \mathbb{C}^{N_t \times 1}$ and ${\bm{w}}_r(\theta_r^n, \phi_r^n) \in \mathbb{C}^{N_r \times 1}$, respectively, with the same definitions as ${\bm{a}}_t(\cdot)$ and ${\bm{a}}_r(\cdot)$. The received signal at the UE side in RB $m$ at time slot $n$ can be expressed based on the downlink channel gain defined in \eqref{eq:channel}, i.e.,
{\begin{align}\nonumber
&\quad {{y}}_{n, m}(\theta_t^n, \phi_t^n, \theta_r^n, \phi_r^n)\\\nonumber
& = \sqrt{P_tL_n}{{\bm{w}}_r}(\theta_r^n, \phi_r^n)^{\text{H}}{\bm{H}}_{n, m}{\bm{w}}_t(\theta_t^n, \phi_t^n){{x}}_{n, m}\\
&\quad + {{\bm{w}}_r}(\theta_r^n, \phi_r^n)^{\text{H}}{\bm{z}},
\end{align}}
where  $P_t$ is the transmitting power, $L_n$ is the pathloss between the LEO satellite and UE at time slot $n$, ${{x}}_{n, m}$ is the transmit signal with unit power, and $\bm{z} \sim {\cal{CN}}({\bm{0}}, \delta_z^2{\bm{I}^{N_r \times N_r}})$ is the additive white Gaussian noise (AWGN) with zero mean and variance of $\delta_z^2 = k_BT_sB$, where $k_B$, $T_s$ and $B$ are the Boltzmann constant, noise temperature, and bandwidth of each RB, respectively. Therefore, the signal-to-noise-ratio (SNR) of the UE in RB $m$ at time slot $n$ is given as
{\begin{align}\nonumber
&\quad \varpi_{n, m}(\theta_t^n, \phi_t^n, \theta_r^n, \phi_r^n)\\\label{eq:snr}
& = \frac{P_tL_n|{{\bm{w}}_r(\theta_r^n, \phi_r^n)}^{\text{H}}{\bm{H}}_{n, m}{\bm{w}}_t(\theta_t^n, \phi_t^n)|^2}{N_r\delta_z^2}.
\end{align}}
From \eqref{eq:snr}, it can be found that the SNR at the UE side is mainly determined by the beam gain $|{{\bm{w}}_r}^{\text{H}}{\bm{H}}_{n,m}{\bm{w}}_t|^2$. The receiving rate of the UE in RB $m$ at time slot $n$ therefore can be expressed as
{\begin{align}\nonumber
&\quad  c_{n, m}(\theta_t^n, \phi_t^n, \theta_r^n, \phi_r^n) \\
& = B\log_2(1 + \varpi_{n, m}(\theta_t^n, \phi_t^n, \theta_r^n, \phi_r^n)).
\end{align}}

\vspace{-1em}
\subsection{Two-Time-Scale Resource Configuration Mechanism}

In 3GPP NR-NTNs, the UE with a sufficient energy supply can act as a powerful computing platform, while the LEO satellite with a limited energy supply only has limited computing capability. In this case, as shown in Fig. \ref{fig:systemodel}, we propose a two-time-scale resource configuration mechanism, in which the LEO satellite and UE configure radio resources with different time-scales (or control cycles). Specifically, aiming at alleviating the computing burdens of the LEO satellite, the periodic resource configuration policy  \cite{Zhao2021} is adopted for the LEO satellite to configure its resource with a large control cycle. Then, the UE performs small-scale adjustments on corresponding resources to enhance its receiving rate performance.

For the LEO satellite, toward constructing the earth-fixed cell for the UE, the transmitting beam should be optimized along the moving trajectory. Moreover, since the number of available RBs is normally limited, the utilization of RBs is of crucial importance. If the LEO satellite allocates all the RBs to the UE within its large control cycle, severe wasting of RBs may happen. To improve the utilization of RBs, only a minimum number of RBs with respect to the best channel gains between the LEO satellite and UE should be reserved by the LEO satellite for the UE to form a pre-configured RB set, which is denoted as ${\cal{M}}_{\text{LEO}}^n$ at time slot $n$. Therefore, the LEO satellite should optimize the transmitting beam and the pre-configured RB set every $T$ time slots. This indicates that one single control cycle of the LEO satellite is composed of $T$ time slots, and the control cycle index $k$ can be given by $k = \lfloor{\frac{n}{T}}\rfloor$. Therefore, ${\cal{M}}_{\text{LEO}}^n \subseteq   {\cal{{M}}}$ can also be denoted as ${\cal{{M}}}_{\text{LEO}}^k$. Particularly, the control cycle length $T$ can be determined based on the ephemeris of the LEO satellite. Following the transmitting beam direction and pre-configured RB set of the LEO satellite, the UE should adjust its receiving beam direction and access RBs selected from ${\cal{{M}}}_{\text{LEO}}^k$ to satisfy its receiving rate demand $D_{\text{UE}}^n$ at time slot $n$. Thus, the control cycle of the UE is one time slot{\small{\footnote{Generally, the control cycle of the UE is defined based on the transmission duration of the satellite link, and can be with an arbitrary length in the time domain. In this paper, one time slot is adopted as an example to indicate the real-time RRM optimization.}}}, and its control cycle index is equivalent to the time slot index $n$, i.e., $i = n$. Hence, its receiving rate demand $D_{\text{UE}}^n$ can also be denoted as $D_{\text{UE}}^i$. Please also note that the time slot lengths of the LEO satellite and UE are the same and their time slot boundaries are aligned, and the time synchronization is out of the scope of this paper.

\subsection{Problem Formulation}

Based on the above two-time-scale resource configuration mechanism, an RB minimization problem is mathematically formulated.

\begin{myOpt}
A long-term minimization problem for the number of utilized RBs with respect to transmitting-receiving beam optimization and RB selection from time slot $0$ to time slot $N - 1$ is given by
{\setlength{\abovedisplayskip}{3pt}
\setlength{\belowdisplayskip}{3pt}
\begin{align}
& \mathop{\min}\limits_{\{\theta_t^n, \phi_t^n, \theta_r^n, \phi_r^n\}, \{b_{n, m}\}, \{{\cal{{M}}}_{\text{LEO}}^n\} } \frac{1}{N}\sum\limits_{n = 0}^{N-1}\sum\limits_{m=0}^{M - 1}b_{n,m}\\\label{eq:cons_rate}
\mathrm {s.t.} & \sum\limits_{m \in {\cal{{M}}}_{\text{LEO}}^n}b_{n, m}c_{n, m}(\theta_t^n, \phi_t^n, \theta_r^n, \phi_r^n) \ge D_{\text{UE}}^n, \forall n,\\
& b_{n,m} \in \{0, 1\}, \forall m \in {\cal{{M}}}, \forall n,\\
& 0 \le \zeta \le \pi, \forall \zeta \in \{\theta_t^n, \phi_t^n, \theta_r^n, \phi_r^n\}, \forall n,\\
& {\cal{{M}}}_{\text{LEO}}^n \subseteq   {\cal{{M}}}, \forall n.
\end{align}}
\end{myOpt}

In \textbf{Optimization 1}, the variables belonging to heterogeneous agents (namely, LEO satellite and UE) are of different characteristics (e.g., valid during multiple time slots or at each time slot) and coupled with each other. Consequently, if a centralized approach to solve this optimization is applied, the joint optimization composed of these variables with different time scales results in an extremely high dimensional decision space, and a huge number of decision trajectories over multiple time slots need to be explored repeatedly to estimate the performances under all the variable combinations, leading to an unaffordable computational overheads.

To address the above issue, the multi-agent approach becomes a promising remedy. Specifically, these optimization variables can be separated into distinct sets, and a series of single-stage optimizations are defined by different sets of variables. To this end, according to whether a variable should be determined by the LEO satellite or UE, \textbf{Optimization 1} can be transformed into multiple distinct but dependent MDPs. Regard the LEO satellite and UE as the high-tier agent (with a longer control cycle) and low-tier agent (with a shorter control cycle), respectively, and their corresponding MDPs can be denoted as $<{\cal{S}}_H, {\cal{A}}_H, {\cal{P}}_H(\pi_H, \pi_L), {R}_H(\pi_H, \pi_L)>$ and $<{\cal{S}}_L, {\cal{A}}_L, {\cal{P}}_L(\pi_H, \pi_L), {R}_L(\pi_H, \pi_L)>$, where ${\cal{S}}_H$, ${\cal{A}}_H$, $\pi_H$, ${\cal{P}}_H(\pi_H, \pi_L)$ and ${R}_H(\pi_H, \pi_L)$ (${\cal{S}}_L$, ${\cal{A}}_L$, $\pi_L$, ${\cal{P}}_L(\pi_H, \pi_L)$ and ${R}_L(\pi_H, \pi_L)$) are state space, action space, policy function, i.e., ${\bm{a}}_H = \pi_H({\bm{s}}_H)$ (${\bm{a}}_L = \pi_L({\bm{s}}_L)$), state transition probability and reward function of the higher-tier (lower-tier) agent, respectively. Since the state transition probabilities and reward functions of agents are influenced by the policies of each other, the value functions can be given as
{
\begin{align}\nonumber
V^{\pi_H}({\bm{s}}_H;\pi_L) &= \mathbb{E}_{\pi_H}[\sum_{k=0}^{\infty}{\gamma_H}^{k}R_H({\bm{s}}_H^k, \pi_H({\bm{s}}_H^k), \\
& \quad \tau_L(\pi_L))|{\bm{s}}_H^0 = {\bm{s}}_H],\\\nonumber
V^{\pi_L}({\bm{s}}_L;\pi_H) &= \mathbb{E}_{\pi_L}[\sum_{i=0}^{\infty}{\gamma_L}^{i}R_L({\bm{s}}_L^i, \pi_L({\bm{s}}_L^i), \\
& \quad \tau_H(\pi_H))|{\bm{s}}_L^0 = {\bm{s}}_L],
\end{align}}
where $\mathbb{E}[\cdot]$ is the expectation operator, and $\gamma_H$ and $\tau_H(\pi_H)$ ($\gamma_L$ and $\tau_L(\pi_L)$) are the discount factor and decision trajectory of the higher-tier (lower-tier) agent, respectively. Thus, the combination of the above two MDPs of the LEO satellite and UE leads to a TTMDP model.

Due to the difference of $T$ time slots between control cycles of the LEO satellite and UE, the optimization of deriving the optimum policies of the TTMDP can be formulated as the sum of accumulated rewards of agents at different tiers, i.e.,
{
\begin{align}\nonumber
& \mathop{\max}\limits_{\pi_H, \pi_L} \mathbb{E}[\sum\limits_{k=0}^{N_H}[R_H({\bm{s}}_H^k, \pi_H({\bm{s}}_H^k), \tau_L(\pi_L))\\
&\quad + \sum\limits_{p=0}^{T-1}R_L({\bm{s}}_L^{Tk+p}, \pi_L({\bm{s}}_L^{Tk+p}), \tau_H(\pi_H))]],
\end{align}}
where $N_H = \lfloor{\frac{N-1}{T}}\rfloor$. To tackle different agents’ MDPs, the concept of the leader-follower game \cite{Lien2016} architecture can be adopted when accurate transition probabilities between states of these MDPs are available at the leader side, which is used for the leader to calculate the best decisions not only for itself but also for the follower (also known as the backward induction). How to solve the TTMDP with backward induction will be detailed in the following section.

\section{Two-time-scale Collaborative DRL Scheme in NTN}

Through transforming \textbf{Optimization 1} as the TTMDP, the LEO satellite and UE as higher- and lower-tier agents perform decision-making tasks with different control cycles to collaboratively solve corresponding MDPs, leading to a two-time-scale feature. Please note that the basic time-scale is not fixed in the time domain, and it refers to the time duration when one decision-making task is completed by the agent, namely, the control cycle. In this case, there are two main obstacles: 1) the design of collaboration operations for heterogeneous agents; 2) the convergence guarantee of the proposed scheme. In 3GPP NR-NTNs, with the rapid development of the moving computing platform and the sufficient energy supply, the UE (such as a hand-held device or vehicle) with the sufficient energy supply has superior computing capabilities over the LEO satellite (whose computing platforms are affected negatively by the space environment), and it can therefore support manifold computing tasks in DRL-model-updating and policy-prediction. In this case, motivated by the backward induction, the UE with the powerful computing capability should improve both the value functions of the LEO satellite and UE simultaneously (similar to a leader in the leader-follower game). Therefore, owing to the coupled policies of the LEO satellite and UE, we should first derive the effects of their decision-makings on respective value functions, which are then adopted by the UE to calculate the improvement directions of both the value functions at the UE side and the LEO satellite side. Then, the LEO satellite only needs to perform a finite-step rollout algorithm to determine its best policies based on the obtained reference decision trajectory from the UE. Through updating policies at the LEO satellite and UE iteratively, both the value functions can converge, and we will present corresponding analysis.

\subsection{MDP Formulations of the LEO satellite and UE}

\begin{figure*}
\centering
\includegraphics[scale = 0.33]{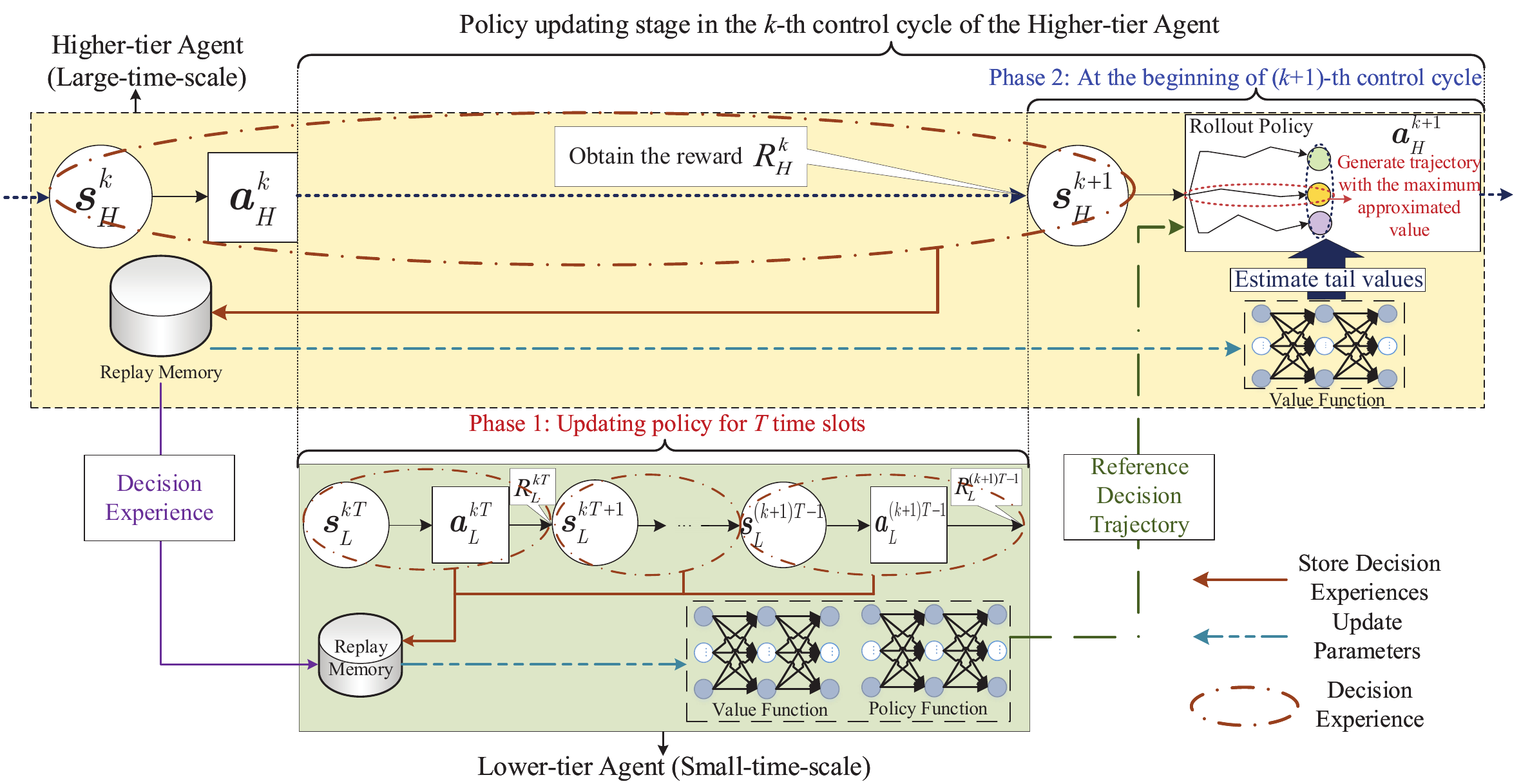}
\caption{The proposed two-time-scale collaborative DRL scheme in this paper.}\label{fig:mmdp}
\end{figure*}

As shown in Fig. \ref{fig:mmdp}, the two-time-scale collaborative DRL scheme constituted by the LEO satellite and UE with different control cycles is proposed. Since the performance of the DRL-based algorithm is determined by the design of state space, action space and reward function of the MDP, we should first present the state spaces, action spaces and reward functions for respective MDPs of different agents in the TTMDP model according to \textbf{Optimization 1}.

\begin{myDef} MDP of the LEO satellite

\begin{itemize}
\item{State Space:} Since the LEO satellite needs to capture the variation patterns on relative position and receiving data demand of the UE within its control cycle, a state of the LEO satellite is composed of two groups: 1) position of the LEO satellite, i.e., $p_{\text{LEO}}^k = (x_{\text{LEO}}^k, y_{\text{LEO}}^k, z_{\text{LEO}}^k)$; 2) average SNR over the selected RBs at each time slot within its $(k-1)$-th control cycle, i.e.,  $\bar{\varpi}_{k-1} = \frac{1}{\sum\limits_{m \in {{\cal{{M}}}}}{b^L_{k-1, m}}}\sum\limits_{m \in {{\cal{{M}}}}} b^L_{k-1, m}\varpi_{(k-1)T+p, m}, p = 0, \ldots, T-1$. Thus, the state can be given as
\begin{equation}
{\bm{s}}_H^k = \{p_{\text{LEO}}^k; \bar{\varpi}_{(k-1)T}, \ldots, \bar{\varpi}_{(k-1)T + (T -1)}\}.
\end{equation}
where $b_{k,m}^{L} = 1$ if RB $m$ is selected by the LEO satellite within $k$-th control cycle, and $b_{k,m}^{L} = 0$, otherwise.

\item{Action Space:} The LEO satellite needs to determine its transmitting beam direction and select a set of RB candidates for each control cycle, i.e.,
\begin{equation}
{\bm{a}}_H^k =\{\theta_t^k, \phi_t^k; {\cal{{M}}}_{\text{LEO}}^k\}.
\end{equation}

To decrease the dimension of action space, a non-codebook design with the discrete adjustment angle is adopted to determine the transmitting beam direction{\small\footnote{If continuous beam directions are considered, there is a continuous-discrete hybrid action space. In this case, multiple DNNs outputting continuous/discrete actions should be introduced, and the integration of those DNNs should be carefully designed, which is similar to the multi-pass deep Q-network\cite{Bester2019}. Therefore, the solution for such a hybrid action space will be investigated in the future research.}}. Namely, an adjustment angle is selected from the LEO satellite's discrete angular set (e.g., $\{-\Delta, 0, \Delta \}$, where $\Delta$ is the unit adjustment angle), which is added to the initial beam direction as a new beam direction within its current control cycle. Particularly, ${\cal{{M}}}_{\text{LEO}}^k = \{b_{k,m}^{L}, m \in {\cal{M}}\}$.

\item{Reward Function:} According to constraint \eqref{eq:cons_rate} in \textbf{Optimization 1}, the LEO satellite should enhance the SNR in \eqref{eq:snr}, and then select RB sets with respect to the highest average receiving rate so that the number of RBs utilized by the UE can be reduced. Therefore, the reward function is defined as the average receiving rate within time slots satisfying constraint \eqref{eq:cons_rate} within each control cycle, i.e.,
{\begin{align}\label{eq:h_reward}\nonumber
&\quad R_H^k  = \frac{1}{T}\sum\limits_{p = 0}^{T - 1}\mathbb{I}_{\text{Demand}}\cdot\{\\
& \quad \frac{\sum\limits_{m \in {\cal{M}}}b^L_{k,m}b_{kT + p, m}c_{kT + p, m}(\theta_t^k, \phi_t^k, \theta_r^{kT + p}, \phi_r^{kT + p})}{\sum\limits_{m \in {\cal{{M}}}}b^L_{k,m}b_{kT + p, m}}\},
\end{align}}
where $\mathbb{I}_{\text{Demand}}$ is the indicator function of constraint \eqref{eq:cons_rate}, i.e., $\mathbb{I}_{\text{Demand}} = 1$ if \eqref{eq:cons_rate} is satisfied, and $\mathbb{I}_{\text{Demand}} = 0$, otherwise.
\end{itemize}
\end{myDef}

\begin{myDef} MDP of the UE

\begin{itemize}
\item{State Space:}To capture the channel variations, there are two groups in a state of UE: 1) SNR in each RB at the last time slot, i.e., $\{b_{k,m}^{L}b_{i-1, m}\varpi_{i-1, m}, \forall m \in {\cal{{M}}}\}$; 2) average signal strengths at receiving antennas over selected RBs at the last time slot, i.e., ${\bm{\Gamma}}_{i-1} = \frac{1}{\sum_{m \in {\cal{M}}}b_{k,m}^{L}b_{i-1, m}}\sum\limits_{m \in \cal{M}}b_{k,m}^{L}b_{i-1, m}|{\bm{H}}_{i-1, m}{\bm{w}}_t(\theta_t^k, \phi_t^k)|$. Thus, the state can be given as
{\begin{align}\nonumber
{\bm{s}}_L^i & = \{b_{k,0}^{L}b_{i-1, 0}\varpi_{i-1, 0}, \ldots,\\
& \quad b_{k,M-1}^{L}b_{i-1, M-1}\varpi_{i-1, M-1}; {\bm{\Gamma}}_{i-1}\}.
\end{align}}

\item{Action Space:} The UE at each time slot needs to adjust its receiving beam direction and access proper RBs from pre-planned RB candidates, i.e.,
\begin{equation}
{\bm{a}}_L^i = \{\theta_r^i, \phi_r^i; b_{k,0}^{L}b_{i,0}, \ldots, b_{k,M-1}^{L}b_{i, M-1}\}.
\end{equation}

The UE also determines its receiving beam direction through adding an adjustment angle sampled from its discrete angular set, e.g., $\{-3\Delta, -2\Delta, \ldots , 3\Delta \}$.

\item{Reward Function:} To minimize the number of utilized RBs, the RB with respect to the highest receiving rate should be selected first to satisfy the constraint \eqref{eq:cons_rate}. To this end, the instantaneous reward function is composed of the average receiving rate of selected RBs and a satisfactory punishment to avoid wasting RBs, i.e.,
     \begin{equation}\label{eq:l_reward}
{\hat{R}}_L^i = \frac{\sum\limits_{m \in {\cal{M}}}b_{k,m}^{L}b_{i,m}c_{i,m}(\theta_t^k, \phi_t^k, \theta_r^i, \phi_r^i)}{\sum\limits_{m \in {\cal{M}}}b_{k,m}^{L}b_{i,m}} + \eta \Omega_{i},
\end{equation}
where $\Omega_{i} = \min[{\sum\limits_{m \in {\cal{M}}}b_{k,m}^{L}b_{i,m}c_{i,m} - D_{\text{UE}}^{i}}, 0]$ is the satisfactory punishment, and $\eta$ is a punishment coefficient. Moreover, to alleviate fast variations in each time slot, a first-in-first-out (FIFO) buffer ${\cal{B}}_r$ is adopted to perform moving average over instantaneous rewards, and then the average value of ${\cal{B}}_r$ is adopted as the final reward function of the UE, i.e., $R_L^i = \mathbb{E}_{{\cal{B}}_r}[{\hat{R}}_L^i]$, where $\mathbb{E}_{{\cal{B}}_r}[\cdot]$ is the expectation over instantaneous rewards in buffer ${\cal{B}}_r$.
\end{itemize}
\end{myDef}

\subsection{Design for Two-time-scale Collaborative DRL Scheme}

Before introducing details of the proposed two-time-scale scheme, we should first overview the conventional schemes for addressing ``multi-time-scale MDP'' issues. In \cite{Chang2003}, although the state and action space for each agent at different time scales are non-overlapping, the state transition probabilities of the lower-tier agent is mainly determined by the large-time-scale policy. Moreover, the proposed solution in \cite{Chang2003} is only suitable for MDP with a low-dimensional decision space. However, in our MDP formulation, the state transition probabilities of agents at different time scales are influenced by each other. In this case, our considered MDP model is also different from other works on MDPs with ``goals''\cite{Kulkarni2016}, ``options''\cite{Sutton1999} or ``skills''\cite{Li2019}, in which only either state spaces or action spaces at different time scales are distinct and the state transition probabilities of the MDP are not influenced by these hierarchical variables. To formulate the mutual influences between agents at different time scales, the game-theoretic view \cite{Wen2021} has been widely adopted in recent works, and the multi-agent system is normally formulated as a differential games such as differential Stackelberg game \cite{Yang2021}, in which strong assumptions of knowing variations of the environment and the existence of the unique best response at a smaller time scale need to be imposed.

With the merit of addressing non-stationarity in multi-agent systems, the sequential updating scheme \cite{Bertsekas2021} is adopted for agents updating their policies independently at different time scales through observing other agent's decision trajectory $\tau_L(\pi_L)$ or $\tau_H(\pi_H)$. This indicates that our proposed scheme is synchronous. Additionally, with the spirit of backward induction, the DRL scheme to address the TTMDP should be designed based on the practical computing capabilities of the LEO satellite and UE. Considering the sufficient computing capability, the UE as the lower-tier agent needs to calculate the policy improvement directions for both agents in different tiers, and then the LEO satellite as the higher-tier agent updates its policy based on the reference improvement direction provided by the UE. Therefore, as shown in Fig. \ref{fig:mmdp}, one policy updating stage in our proposed scheme is composed of two phases. In the first phase, the lower-tier agent needs to maximize its value function over its original policy $\pi'_L$ given a fixed higher-tier policy. Moreover, since the higher-tier MDP is influenced by the lower-tier policy, the lower-tier agent also needs to reconsider its impact on the higher-tier agent and provides a ``correct'' ascent direction for improving the value function of the higher-tier agent over its policy $\pi_H$. To this end, the policy of the lower-tier agent should be updated along an average direction for improving advantages of different agents simultaneously. If we regard the future higher-tier policy $\bar{\pi}_H$  as a function of $\pi_L$, i.e., $\bar{\pi}_H = f(\pi_L)$, the policy updating rule at the lower-tier agent can be
{\begin{align}\label{eq:sum_value}\nonumber
\bar{\pi}_L &= \mathop{\arg\max}\limits_{\pi_L \in {\cal{A}}_L}\{(\rho_L(\pi_L, f(\pi_L))-\rho_L(\pi'_L, \pi_H)) \\
&\quad + (\rho_H(\pi_H, \pi_L)-\rho_H(\pi_H, \pi'_L))\},
\end{align}}
where $\rho_L(\pi'_L, \pi_H) = \sum_{{\bm{s}}_L}d_L^{\pi'_L, \pi_H}(s_H)\sum_{{\bm{a}}_L \in {\cal{A}}_L}\pi_L({\bm{a}}_L|{\bm{s}}_L)\\Q_{\pi'_L}({\bm{s}}_L, {\bm{a}}_L;\pi_H)$ with the state distribution of $d_L^{\pi'_L, \pi_H}({\bm{s}}_L)$ is the expected value function of the lower-tier agent under policies of $\pi'_L$ and $\pi_H$, in which $Q_{\pi'_L}({\bm{s}}_L, {\bm{a}}_L;\pi_H) = \mathop{\mathbb{E}}_{{\bm{s}}'_H \sim P_L({\bm{s}}_H, \pi'_L, \pi_H)}[R_H({\bm{s}}_H, {\bm{a}}_H, \tau_L(\pi'_L))+{\gamma_H}V^{\pi_H}({\bm{s}}'_H;\pi'_L)] \\ = A_{\pi'_L}({\bm{s}}_L, {\bm{a}}_L; \tau_H(\pi_H)) + V^{\pi_H}({\bm{s}}_H;\pi'_L)$ is the state-action value function, while $\rho_H(\pi_H, \pi'_L)$ is the expected value function of the higher-tier agent under policies of $\pi'_L$ and $\pi_H$, and can be defined in a similar way to the state distribution of $d_H^{\pi'_L, \pi_H}(s_H)$.

To obtain the desired policy improvement direction for the lower-tier agent, we need to analyze the impact of the lower-tier agent's policy to the higher-tier agent. Based on the independent decision-making assumption in \cite{Wen2021} and trust region policy assumption in \cite{Schulman2015}, the policy improvement bounds incurred by the lower-tier agent can be derived, which are shown in the following \textbf{Lemma 1}.

\begin{myLem}
When the lower-tier agent updates its policy from $\pi'_L$ to $\pi_L$ with a policy-updating coefficient $\alpha_L$, the policy improvement bounds for the higher-tier can be given as
\begin{align}\nonumber
& \quad \rho_H(\pi_H, \pi_L) - \rho_H(\pi_H, \pi'_L) \\ \label{eq:pi_htier}
& \ge L_{\pi_H}(\pi_L) - \frac{4\epsilon_H(\pi_L)(1 -(1 -\alpha_L)^T)^2\gamma_H}{(1-\gamma_H)(1-(1-\alpha_L)^T\gamma_H)},
\end{align}
where $\epsilon_H(\pi_L) = \max_{{\bm{s}}_H, {\bm{a}}_H \sim \pi_H,  \tau'_L \sim \pi'_L}|A_{\pi_H}({\bm{s}}_H, {\bm{a}}_H; \tau_L(\pi'_L))|$ is the maximum value of the higher-tier agent's original advantages. $L_{\pi_H}(\pi_L)$ is the higher-tier agent's expected advantage function from policies of $(\pi'_L, \pi_H)$ to policies of $(\pi_L, \pi_H)$, which can be given as
{\begin{align}\nonumber
&\quad L_{\pi_H}(\pi_L) \\
&= \sum\limits_{{\bm{s}}_H}d_H^{\pi'_L, \pi_H}(s_H)\sum\limits_{{\bm{a}}_H \in {\cal{A}}_H}\pi_H({\bm{a}}_H|{\bm{s}}_H)A_{\pi_L}({\bm{s}}_H, {\bm{a}}_H; \tau_L(\pi_L)).
\end{align}}
On the other hand, the policy improvement bounds for the lower-tier agent
\begin{align}\nonumber
&\quad \rho_L(\pi_L, f(\pi_L)) - \rho_L(\pi'_L, \pi_H) \\\nonumber
& \ge L_{\pi'_L, \pi_H}(\pi_L, f(\pi_L))\\\nonumber
&\quad - 4(1 - (1-\alpha_L)(1 - \alpha_H)^{\frac{1}{T}}) \cdot \epsilon_L(\pi_H) \cdot \{\frac{1}{1-\gamma_L} \\\label{eq:pi_ltier}
&\quad - \frac{1-{\gamma_L}^T(1-\alpha_L)^T}{(1 - \gamma_L(1-\alpha_L))(1 - {\gamma_L}^T(1-\alpha_L)^T(1- \alpha_H))}\},
\end{align}
where $\alpha_H$ is the higher agent's policy-updating coefficient, and $\epsilon_L(\pi_H) = \max_{{\bm{s}}_L, {\bm{a}}_L \sim \pi'_L, \tau_H \sim \pi_H}|A_{\pi'_L}({\bm{s}}_L, {\bm{a}}_L; \tau_H(\pi_H))|$ is the maximum value of the lower-tier agent's original advantages. $L_{\pi'_L, \pi_H}(\pi_L, f(\pi_L))$ is the lower-tier agent's expected advantage function from policies of $(\pi'_L, \pi_H)$ to policies of $(\pi_L, f(\pi_L))$, which can be given as
{\begin{align}\nonumber
&\quad L_{\pi'_L, \pi_H}(\pi_L, f(\pi_L))\\
& = \sum\limits_{{\bm{s}}_L}d_L^{\pi'_L, \pi_H}({\bm{s}}_L)\sum\limits_{{\bm{a}}_L \in {\cal{A}}_L}\pi_L({\bm{a}}_L|{\bm{s}}_L) A_{\pi_L}({\bm{s}}_L, {\bm{a}}_L; \tau_H(f(\pi_L))).
\end{align}}
\end{myLem}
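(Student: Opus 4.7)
The overall approach is to adapt the performance-difference framework of Schulman \emph{et al.}\ (TRPO) to the two-time-scale coupled setting. Recall the core TRPO identity: for any two policies $\pi,\tilde\pi$ in a single MDP, $\eta(\tilde\pi)-\eta(\pi)=\mathbb{E}_{s\sim d^{\tilde\pi}}[\sum_a \tilde\pi(a|s)A_\pi(s,a)]$, together with the key inequality $|\eta(\tilde\pi)-L_\pi(\tilde\pi)|\le \tfrac{4\epsilon\gamma}{(1-\gamma)^2}\alpha^2$, where $\alpha$ upper-bounds the maximum total-variation distance between $\pi$ and $\tilde\pi$. My plan is to apply this machinery once for each tier, with the added twist that the ``environment'' in which each agent lives is itself parameterized by the other agent's policy.

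For the higher-tier bound \eqref{eq:pi_htier}, I would view the higher-tier MDP as having transition kernel $\mathcal{P}_H(\pi_H,\pi_L)$ and reward $R_H(\pi_H,\pi_L)$, both induced by the $T$-slot lower-tier trajectory $\tau_L$. Switching $\pi_L'\to \pi_L$ with per-slot total-variation bound $\alpha_L$ induces, by a standard coupling argument for product measures over $T$ lower-tier slots, a total-variation perturbation on $\tau_L$ of at most $1-(1-\alpha_L)^T$. I would plug this effective total variation into the Schulman mixing lemma (which bounds the weighted-occupancy shift under a perturbed transition kernel) to obtain the factor $(1-(1-\alpha_L)^T)^2$ in the numerator, while the denominator $(1-\gamma_H)(1-(1-\alpha_L)^T\gamma_H)$ arises because the geometric series of mixing errors is summed with effective contraction $(1-\alpha_L)^T\gamma_H$ rather than $\gamma_H$, tightening the original $(1-\gamma_H)^2$ denominator.

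The lower-tier bound \eqref{eq:pi_ltier} is considerably more delicate because the lower-tier's environment shifts simultaneously with its own policy update: when $\pi_L'\to\pi_L$ the future higher-tier policy slides to $f(\pi_L)$, which reshapes $\mathcal{P}_L$ and $R_L$. Treating the higher-tier's total-variation budget $\alpha_H$ as a quantity that is realized only once per $T$ lower-tier slots, I would distribute it geometrically across slots to obtain an effective per-slot perturbation $(1-\alpha_H)^{1/T}$, combine it with the lower-tier's own per-slot factor $(1-\alpha_L)$, and thereby obtain a joint per-slot total-variation bound $1-(1-\alpha_L)(1-\alpha_H)^{1/T}$. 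Substituting into the same discounted-mixing identity, together with a finite-horizon truncation at $T$ lower-tier slots to capture the fact that $\pi_H$ refreshes only every $T$ slots (giving the fraction involving $\gamma_L^T(1-\alpha_L)^T(1-\alpha_H)$), would produce the stated bound. The independent-decision assumption of Wen 2021 is what justifies decoupling the joint state-action distribution into a product over the two agents, while the TRPO trust-region assumption of Schulman 2015 is what lets me replace $D_{\mathrm{TV}}^{\max}$ by the scalar coefficients $\alpha_H,\alpha_L$.

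The main obstacle is the coupling step in the lower-tier bound: the map $f(\pi_L)$ is not a simple open-loop perturbation but the higher-tier's best response to the lower-tier's updated policy, so one must argue that $f$ is well-approximated by a first-order move whose TV distance from $\pi_H$ is at most $\alpha_H$, and that the compounded per-slot effect is multiplicative rather than additive. This relies on the sequential-update discipline (only one tier updates per stage), which lets me treat the two TV budgets as acting on disjoint segments of the rollout and therefore combine them by independence rather than by a union bound. Once that coupling is established, the remaining calculations reduce to Schulman's geometric series with the modified contraction factor.
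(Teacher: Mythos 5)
Your proposal follows essentially the same route as the paper's proof: both start from the TRPO performance-difference identity, invoke an $\alpha$-coupled-policy argument in which the probability of a policy mismatch after $kT$ lower-tier slots is $1-(1-\alpha_L)^{kT}$ (and $1-(1-\alpha_L)^i(1-\alpha_H)^{\lfloor i/T\rfloor}$ for the lower tier), bound the conditional advantage gap by $4(1-(1-\alpha_L)^T)\epsilon_H$ (resp.\ $4(1-(1-\alpha_L)(1-\alpha_H)^{1/T})\epsilon_L$), and sum the resulting geometric series with the modified contraction factors $(1-\alpha_L)^T\gamma_H$ and $\gamma_L(1-\alpha_L)(1-\alpha_H)^{1/T}$ split over residues modulo $T$. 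Your accounting of where each factor in the numerator and denominator arises matches the paper's derivation, so no substantive gap remains.
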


\begin{proof}
Please refer to Appendix A.
\end{proof}

With the facilitation of \textbf{Lemma 1}, a proper policy can be obtained through continuously improving the derived lower bounds. Considering that the policy of the higher-tier agent is near-stationary (i.e., $f(\pi_L) \approx \pi_H$), maximizing the above policy bounds can be equivalent to maximizing $L_{\pi_H}(\pi_L)$ and $L_{\pi'_L, \pi_H}(\pi_L, \pi_H)$ under the lower-tier agent policy $\pi'_L$  with the trust region constraints of ${\cal{O}}({\alpha_L}^T)$ and ${\cal{O}}({\alpha_H}^{\frac{1}{T}-1})$. To tackle the optimization with given policy constraints, we adopt the TRPO algorithm \cite{Schulman2015} in DRL to update the policy of the lower-tier agent. Trough adopting the second-order Taylor expansion method \cite{Schulman2015} to tackle the trust region constraints, the cumulative reward performance of the TRPO algorithm is similar to that of the commonly adopted proximal policy optimization algorithm. Specifically, at the lower-tier agent, two DNNs are deployed to approximate the policy and value function, respectively. Based on the policy updating rule in \eqref{eq:sum_value}, the loss function for updating the policy function parameters $\bm{\varphi}_L$ is evaluated based on the product of the policy ratio $\frac{\pi_L({\bm{a}}_L^{n}|{\bm{s}}_L^{n}; \bm{\varphi}_L)}{\pi_L({\bm{a}}_L^{n}|{\bm{s}}_L^{n}; \bm{\varphi}^{'}_L)}$ and the sum of all the agents' advantages (i.e., $\tilde{A}_L({\bm{s}}_L^{n}, {\bm{a}}_L^{n})$ and $\tilde{A}_H({\bm{s}}_H^{n}, {\bm{a}}_H^{n})$),
\begin{equation}\label{eq:policy_loss}
{\cal{L}}_{p} = \mathbb{E}[\frac{\pi_L({\bm{a}}_L^{n}|{\bm{s}}_L^{n}; \bm{\varphi}_L)}{\pi_L({\bm{a}}_L^{n}|{\bm{s}}_L^{n}; \bm{\varphi}^{'}_L)}(\tilde{A}_L({\bm{s}}_L^{n}, {\bm{a}}_L^{n})+\tilde{A}_H({\bm{s}}_H^{n}, {\bm{a}}_H^{n}))],
\end{equation}
where the expectation operator $\mathbb{E}[\cdot]$ is executed over the sampled state-action pair $({\bm{s}}_L^{n}, {\bm{a}}_L^{n})$. $\tilde{A}_L({\bm{s}}_L^{n}, {\bm{a}}_L^{n})$ and $\tilde{A}_H({\bm{s}}_H^{n}, {\bm{a}}_H^{n})$ are the estimated advantages for higher-tier and lower-tier agents, respectively, and $\bm{\varphi}^{'}_L$ is the original policy function parameters. Additionally, the value function parameters ${\bm{\theta}}_L$ is updated through minimizing the mean square error (MSE) between the estimated value and practical cumulative reward at each state, i.e.,
\begin{equation}\label{eq:critic_loss}
{\cal{L}}_{v} = \mathbb{E}[(V^{{\bm{\theta}}_L}({\bm{s}}_L^{n}) - \sum\limits_{l = 0}^{T-n}{\gamma_L}^{l}R_L^{(n+l)})^2].
\end{equation}
After updating the policy and value function parameters, the lower-tier agent generates a reference decision trajectory in the future finite steps based on the updated policy $\bar{\pi}_L$, and sends the reference decision trajectory to the higher-tier agent as a reference to update its policy.

In the second phase, after receiving the reference decision trajectory $\tau_L(\bar{\pi}_L)$, the higher-tier agent can improve its policy through performing the one-step rollout policy based on the given lower-tier policy $\bar{\pi}_L$, i.e,
{\begin{align}\nonumber
\bar{\pi}_H &= \mathop{\arg\max}\limits_{\pi_H} \mathbb{E}[R_H({\bm{s}}_H^k, \pi_H({\bm{s}}_H^k); \tau_L(\bar{\pi}_L)) \\
&\quad + \gamma_H V^{\pi_H}(h({\bm{s}}_H^k, \pi_H, \bar{\pi}_L); \bar{\pi}_L)],
\end{align}}
where $h(\cdot)$ is the state transition function at the higher-tier, i.e., ${\bm{s}}_H^{k+1} = h({\bm{s}}_H^k, \pi_H, \pi_L)$. Since ${\gamma_H}^{\bar{n}}V^{\pi_H}({\bm{s}}_H^{k + \bar{n}}; \bar{\pi}_L) \rightarrow 0$ with a large $\bar{n}$, the $\bar{n}$-rollout policy according to $\pi_H$ can be adopted to approximated $V^{\pi_H}({\bm{s}}_H^k; \bar{\pi}_L)$, i.e., $V^{\pi_H}({\bm{s}}_H^k; \bar{\pi}_L) \approx \sum_{p= 0}^{\bar{n} - 1}{\gamma_H}^{p}R_H({\bm{s}}_H^{k+p}, \pi_H({\bm{s}}_H^{k+p}); \tau_L(\bar{\pi}_L))$. However, a small estimation-step $\bar{n}$ may lead to an inaccurate estimate of the value function. To enhance the approximation performance, in this paper, one DNN with parameters ${\bm{\theta}}_H$ is adopted to estimate the tail value $V^{{\bm{\theta}}_H}({\bm{s}}_H^{k+\bar{n}})$ of the higher-tier agent. Similarly, the value function parameters are also updated through minimizing the MSE loss defined in \eqref{eq:critic_loss}. In this case, the approximated value at state ${\bm{s}}_H^k$ under the updated lower-tier policy $\bar{\pi}_L$ can be rewritten as
{\begin{align}\label{eq:rollout}\nonumber
V^{\pi_H}({\bm{s}}_H^k, \bar{\pi}_L)& = \sum\limits_{p = 0}^{\bar{n} - 1}{\gamma_H}^{p}R_H({\bm{s}}_H^{k+p}, \pi_H({\bm{s}}_H^{k+p}); \tau_L(\bar{\pi}_L))\\
&\quad + {\gamma_H}^{\bar{n}}V^{{\bm{\theta}}_H}({\bm{s}}_H^{k+\bar{n}}).
\end{align}}
Through updating policies at different agents iteratively, both the higher-tier and lower-tier value functions can be improved over original ones.
\begin{myPro}
The monotonic policy improvement can be achieved for either higher-tier agent or lower-tier agent after each policy updating stage in the proposed collaborative DRL scheme.
\end{myPro}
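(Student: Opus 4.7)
The plan is to analyze the two phases of a single updating stage separately and show that each yields non-negative change in both $\rho_H$ and $\rho_L$. In Phase 1, the lower-tier agent solves \eqref{eq:sum_value} by TRPO. Lemma 1 decomposes each true value change into a surrogate term $L_{\ast}(\cdot)$ plus a penalty that scales with the policy-updating coefficients $\alpha_L,\alpha_H$. Since $\pi_L=\pi'_L$ is trivially feasible and yields a zero surrogate, the TRPO maximizer $\bar{\pi}_L$ drives the combined surrogate to be non-negative. To lift this to non-negativity of each true improvement separately, I would exploit the TRPO trust-region radius: the penalty terms in \eqref{eq:pi_htier} and \eqref{eq:pi_ltier} vanish as $\alpha_L\to 0$, so a small-enough step (enforced by the trust-region constraint, or, equivalently, by the backtracking line search TRPO performs) ensures that the first-order surrogate gain dominates both penalties simultaneously. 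A cleaner route is to split \eqref{eq:sum_value} into its two advantage heads and argue that $\bar{\pi}_L$ is a non-worsening move for each head under a common descent direction produced by the natural-gradient step on \eqref{eq:policy_loss}.

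In Phase 2, with $\bar{\pi}_L$ now fixed, the higher-tier agent faces a standard single-agent MDP and its update \eqref{eq:rollout} is a one-step greedy improvement against the bootstrapped $Q$-estimate. I would first invoke the classical policy improvement theorem: under the exact value $V^{\pi_H}(\cdot;\bar{\pi}_L)$, the rollout policy $\bar{\pi}_H$ satisfies $V^{\bar{\pi}_H}({\bm{s}}_H;\bar{\pi}_L)\ge V^{\pi_H}({\bm{s}}_H;\bar{\pi}_L)$ pointwise via Bellman-operator monotonicity, which immediately gives $\rho_H(\bar{\pi}_H,\bar{\pi}_L)\ge\rho_H(\pi_H,\bar{\pi}_L)$. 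I would then absorb the two approximation sources---the finite rollout horizon $\bar{n}$ contributing a truncation error of order $\gamma_H^{\bar{n}}/(1-\gamma_H)$ times the reward range, and the DNN tail-value approximation $V^{\bm{\theta}_H}$ trained via \eqref{eq:critic_loss} contributing a residual---into the greedy improvement gap, following the standard approximate-policy-iteration argument. Finally, because $\bar{\pi}_H$ only reshapes the higher-tier action without altering $\bar{\pi}_L$, the lower-tier return is unaffected in this phase beyond the near-stationarity regime already invoked.

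The main obstacle I anticipate is closing the feedback loop between the phases: the Phase 1 improvement is stated against the \emph{future} higher-tier policy $f(\pi_L)$, whereas Phase 2 is the very mechanism that realizes $f$. Making this rigorous requires the near-stationarity $f(\pi_L)\approx\pi_H$ used right after Lemma 1, together with a continuity argument that bounds $\|f(\pi_L)-\pi_H\|$ by the Phase 2 trust-region radius $\alpha_H$; the factor $1-(1-\alpha_L)(1-\alpha_H)^{1/T}$ in \eqref{eq:pi_ltier} is exactly the handle for turning such a bound into a vanishing correction. Chaining the Phase 1 and Phase 2 bounds and taking the step sizes $\alpha_L,\alpha_H$ small enough that both penalty contributions are dominated by the surrogate and greedy-improvement gains, one obtains $\rho_H$ and $\rho_L$ both non-decreasing across a full updating stage, which is the statement of Proposition 1.
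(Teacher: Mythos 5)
Your proposal is correct in substance and shares the skeleton of the paper's argument, but the two Phase-1 justifications differ in a meaningful way. The paper's Appendix B never touches the penalty terms of Lemma 1 or any smallness condition on $\alpha_L,\alpha_H$: for the higher tier it writes $V^{\bar{\pi}_H}({\bm{s}}_H;\bar{\pi}_L)$ as a maximum over $\bar{\pi}_H$ of the Bellman backup, lower-bounds it by substituting the old $\pi_H$, and then identifies the residual as $\max_{\bar{\pi}_L}[\rho_H(\pi_H,\bar{\pi}_L)-\rho_H(\pi_H,\pi_L)]$, which is non-negative simply because the no-change policy is feasible in the maximization of \eqref{eq:sum_value}; the lower-tier case is handled symmetrically after invoking $\bar{\pi}_H\approx f(\bar{\pi}_L)$, exactly the near-stationarity step you identify. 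Your route instead passes through the surrogate-plus-penalty decomposition of Lemma~1 and argues that a small enough trust-region radius makes the surrogate dominate both penalties; this buys an explicit quantitative handle on when the improvement holds, at the cost of extra machinery the paper does not need. The one soft spot you correctly flag---that maximizing the \emph{sum} of the two improvement terms in \eqref{eq:sum_value} does not by itself force each summand to be non-negative---is present in the paper's proof as well (it silently treats $\bar{\pi}_L$ as if it maximized each $\rho$-difference individually), and your proposed fixes (vanishing penalties as $\alpha_L\to 0$, or a per-head non-worsening argument) are reasonable patches, though neither is carried out rigorously. Your Phase-2 argument via Bellman-operator monotonicity of the one-step rollout is exactly the paper's, with the additional (and welcome) observation that the finite-horizon truncation and the DNN tail estimate introduce errors of order $\gamma_H^{\bar{n}}$ that the paper absorbs implicitly.
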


\begin{proof}
Please refer to Appendix B.
\end{proof}

Since the proposed scheme is an online learning scheme, each agent needs to updates its value function parameters based on the decision experiences (i.e., a tuple of state, action, reward) stored in its replay memory. As shown in Fig. \ref{fig:mmdp} and Fig. \ref{fig:time_diagram}, at the beginning of the higher-tier agent's $k$-th control cycle, the higher-tier agent visits a state ${\bm{s}}_H^{k}$ from the environment, and makes decision ${\bm{a}}_H^{k}$. Then, the higher-tier agent needs to send its current state and action and last reward to the lower-tier agent. In the subsequent $T$ time slots (i.e., from $n = kT$ to $n = kT+(T-1)$), the lower-tier agent visits a state ${\bm{s}}_L^{i}$, makes a decision ${\bm{a}}_L^{i}$, and receives corresponding reward $R_L^{i}$ at each time slot. At the end of each time slot, the lower-tier agent updates its value and policy function parameters based on its local sampled decision experiences and received decision experiences of the higher-tier agent. After the lower-tier agent updates its policy at time slot $kT + (T-1)$, it generates a reference decision trajectory of future finite steps with its updated policy parameters ${\bm{\varphi}}_L$, which is sent to the higher-tier agent. Next, the higher-tier agent obtains corresponding reward $R_H^k$ for $k$-th control cycle, and a new state ${\bm{s}}_H^{k+1}$ for making decision in $(k+1)$-th control cycle. Then, the higher-tier agent stores its decision experience $\{ {\bm{s}}_H^{k},{\bm{a}}_H^{k}, R_H^{k}\}$ to update its value function parameters. Based on the new state ${\bm{s}}_H^{k+1}$ and received reference decision trajectory, the higher-tier agent selects action ${\bm{a}}_H^{k+1}$ with respect to the maximum approximated value obtained through the $\bar{n}$-step rollout algorithm, and then the lower-tier agent visits states and makes decisions in the subsequent $T$ time slots.

\begin{figure}[t]
\centering
\includegraphics[scale = 0.36]{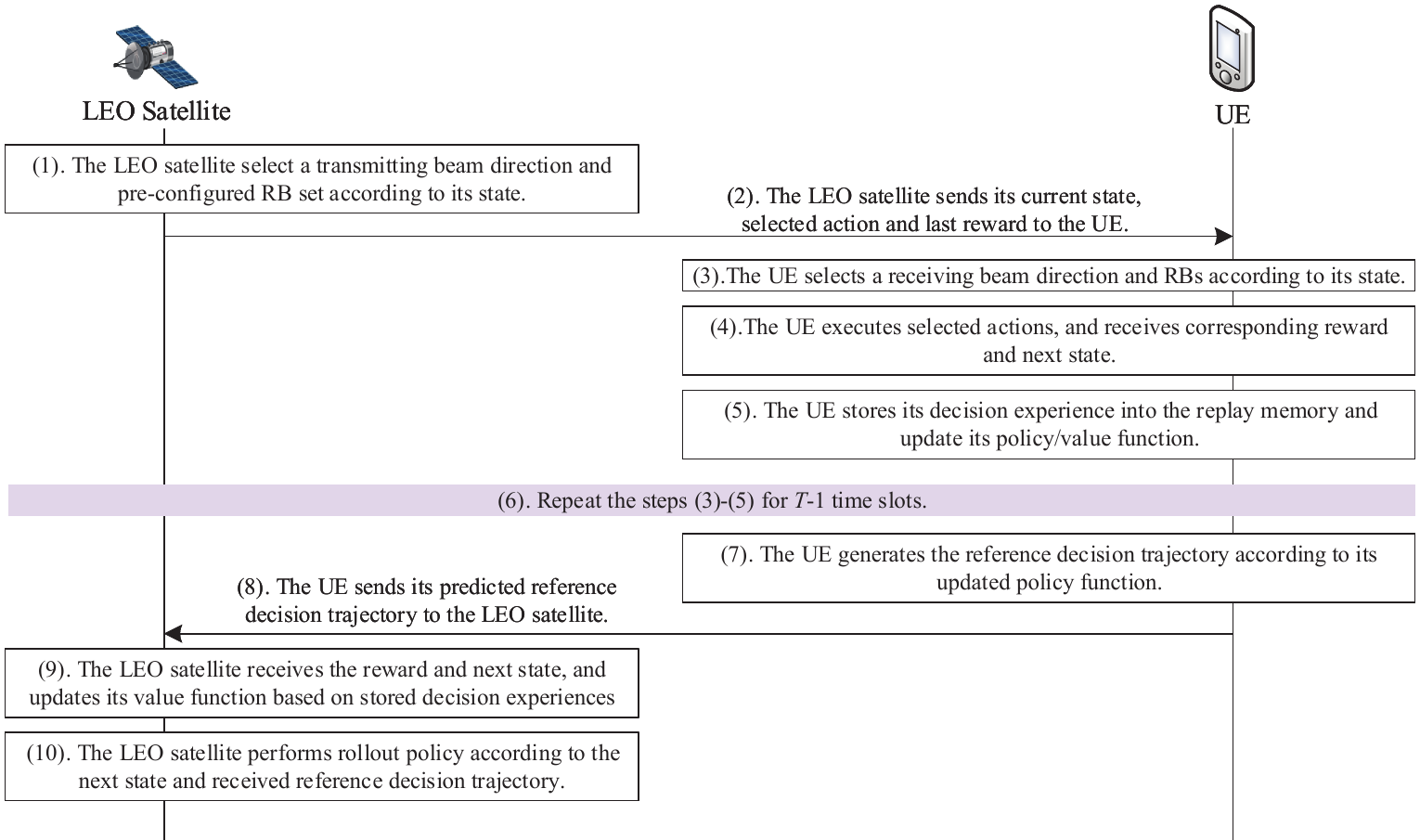}
\caption{The time diagram of the proposed two-time-scale collaborative DRL scheme in this paper.}\label{fig:time_diagram}
\end{figure}

\begin{algorithm}
{\normalsize
\begin{algorithmic}[1]
\STATE \textbf{Initialize Stage}:
\STATE Lower-tier agent (i.e., the ground UE) constructs two DNNs with randomly initialized parameters, and a replay memory ${\cal{D}}_L$.
\STATE Higher-tier agent (i.e., the LEO satellite) constructs one DNN with randomly initialized parameters, and a replay memory ${\cal{D}}_H$.
\STATE Higher-tier agent initializes its action randomly.
\STATE \textbf{Training Stage}:
\REPEAT
\STATE Higher-tier agent sends its current state-action pair $({\bm{s}}_H^{k}, {\bm{a}}_H^{k})$ and last reward ${R}_H^{k-1}$ to the lower-tier agent.
\FOR {$i = 0$ to $T-1$}
\STATE Lower-tier agent obtains action ${\bm{a}}_L^i$ based on the lower-tier agent's current state and constructed DNN-based policy function.
\STATE Lower-tier agent executes the action and receives an instantaneous reward from the environment.
\STATE Lower-tier stores the newly obtained decision experience $\{ {\bm{s}}_L^i,{\bm{a}}_L^i, R_L^i\}$ into its replay memory ${\cal{D}}_L$.
\STATE Lower-tier agent samples decision experiences from ${\cal{D}}_L$ to calculate the losses based on \eqref{eq:policy_loss} and \eqref{eq:critic_loss}.
\STATE Lower-tier agent updates policy and value function parameters ${\bm{\varphi}}_L$ and ${\bm{\theta}}_L$ through minimizing the losses.
\ENDFOR
\STATE Lower-tier agent generates the reference decision trajectory $\tau_L(\bar{\pi}_L)$ according to the updated policy.
\STATE Higher-tier agent receives the reward and reference decision trajectory from the lower-tier agent and visits a new state ${\bm{s}}_H^{k+1}$.
\STATE Higher-tier agent stores the decision experience $\{ {\bm{s}}_H^{k},{\bm{a}}_H^{k}, R_H^{k}\}$ into its replay memory ${\cal{D}}_H$.
\STATE Higher-tier agent updates ${\bm{\theta}}_H$ through minimizing the MSE over sampled decision experiences from ${\cal{D}}_H$.
\STATE Higher-tier agent obtains decision ${\bm{a}}_H^{k+1}$ through performing rollout policy in \eqref{eq:rollout} based on $\tau_L(\bar{\pi}_L)$.
\UNTIL{The update of the DNN parameters at each tier is less than a given error threshold $\epsilon$.}
\end{algorithmic}}
\caption{Proposed Two-time-scale Collaborative DRL Scheme}
\end{algorithm}

Particularly, in the updating stage of the proposed scheme, the communication overheads are incurred by two types of signaling, i.e., the decision experience of the higher-tier agent (namely, LEO satellite) and the reference decision trajectory of the lower-tier agent (namely, UE). Since the number of elements of the LEO satellite's state, action and reward are $3+T$ and $2+M$ and $1$, respectively, the data size of one state-action pair of the LEO satellite can be denoted as $(6+T+M)D_e$, where $D_e$ is the data size of one element. Additionally, since the reference decision trajectory is composed of $\bar{n}T$ actions of the lower-tier agent, the data size of the reference decision trajectory can be denoted as $(2+M)\bar{n}TD_e$. Therefore, the overall communication overhead can be denoted as $(6+T+M)D_e + (2+M)\bar{n}TD_e  = (6+T+M + (2+M)\bar{n}T)D_e$. This indicates that the overall communication overhead is dominated by the higher-tier agent's control cycle $T$, the number of estimation steps $\bar{n}$ and the number of RBs $M$. In \textbf{Algorithm 1}, the concrete procedure of the proposed scheme is summarized.

\subsection{Convergence Analysis}

In this section, we present the convergence analysis and derive the bounds of the convergence error and convergence time of our proposed two-time-scale collaborative DRL scheme. Since the optimal policies $\pi^*_H$ and $\pi^*_L$ correspond to the optimal value functions $V^{\pi_H^{*}}(s_H; \pi^{*}_L)$ and $V^{\pi_L^{*}}(s_L; \pi^{*}_H)$, respectively, the convergence of the value function is equivalent to the convergence of the policy. Consequently, to analyze the convergence of the policy, we concentrate on analyzing the convergence property of the value function instead. With the spirit of the two-time-scale model in \cite{Borkar1997}, value functions at different tiers can be regarded as being updated at the same time scale through introducing different updating rates (i.e., $a(n)$ in \eqref{eq:h_tier} and $b(n)$ in \eqref{eq:l_tier} in the following \textbf{Proposition 2}) to indicate different updating speeds incurred by different control cycles, and thus the general updating rules of agents at different tiers can be rewritten below.

\begin{myPro}
With uniform state distributions for agents at different tiers, the updating rules of different agents can be rewritten as random processes of value functions, i.e,
{\begin{align}\nonumber
 x(n+1)
& = V^{\pi_H^{n+1}}({\bm{s}}_H; \pi^{n+1}_L) \\ \nonumber
& = V^{\pi^n_H}({\bm{s}}_H; \pi^n_L) + a(n)\{\max\limits_{\pi_H}(R_H({\bm{s}}_H, \pi_H({\bm{s}}_H);\\ \nonumber
&\quad \tau_L(\pi_L^{n+1})) + \gamma_HV^{\pi_H^n}(h({\bm{s}}_H, \pi_H, \pi^{n+1}_L), \pi^{n}_L))\\ \label{eq:h_tier}
& \quad - V^{\pi^n_H}({\bm{s}}_H; \pi^n_L) + \beta_H^{n+1}\},\quad \forall {\bm{s}}_H \in {\cal{S}}_H, \\ \nonumber
 y(n+1)
& = V^{\pi^{n+1}_L}({\bm{s}}_L; \pi^{n+1}_H)\\ \nonumber
& = V^{\pi^n_L}({\bm{s}}_L; \pi^n_H) + b(n)\{R_L({\bm{s}}_L, \pi_L^{n+1}({\bm{s}}_L);\\ \nonumber
&\quad \tau_H(\pi_H^{n+1})) + \gamma_LV^{\pi^{n}_L}(g({\bm{s}}_L, \pi^{n+1}_H, \pi^{n+1}_L); \pi^{n}_H) \\ \label{eq:l_tier}
& \quad - V^{\pi^n_L}({\bm{s}}_L; \pi^n_H) + \beta_L^{n+1}\}, \quad \forall {\bm{s}}_L \in {\cal{S}}_L,
\end{align}}
where $\beta_H^{n+1} =  \max\limits_{\pi_H}\sum_{k = 0}^{\bar{n} - 1}{\gamma_H}^{k}R_H({\bm{s}}_H^{k}, \pi_H({\bm{s}}_H^{k})); \tau_L(\pi^{n+1}_L))\\ - \max\limits_{\pi_H}V^{\pi_H}({\bm{s}}_H; \pi^{n+1}_L)$ with ${\bm{s}}_H^{0} = {\bm{s}}_H$,  and $\beta_L^{n+1} = V^{{\pi}_L^{n+1}}({\bm{s}}_L; \hat{\pi}_H^{n+1}) - V^{\pi_L^{n+1}}({\bm{s}}_L; \pi^{n+1}_H)$ with $\hat{\pi}_H^{n+1} = f(\pi_L^{n+1})$,
, $g(\cdot)$ is the state transition function at the lower-tier and $a(n) > 0$ and $b(n) > 0$ are step sizes and $\frac{a(n)}{b(n)} \to \infty$, i.e., $a(n) = \frac{1 - (1 - \alpha_L)^{n}(1- (1 -\alpha_H))^{\lfloor\frac{n}{T}\rfloor}}{\lfloor\frac{n}{T}\rfloor} \approx \frac{1}{1 + n\log{n}}$ and $b(n) = 1 - (1 - \alpha_L)^n(1 - \alpha_H)^{\lfloor\frac{n}{T}\rfloor} \approx \frac{1}{n}$, which satisfy the Robbins–Monro conditions \cite{Borkar1997}, i.e., $\sum_{n = 0}^{\infty}a(n) = \sum_{n = 0}^{\infty}b(n) = \infty$ and $\sum_{n = 0}^{\infty}a(n)^2 = \sum_{n = 0}^{\infty}b(n)^2 < \infty$.
\end{myPro}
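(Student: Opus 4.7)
The plan is to recognize this proposition as translating the learning dynamics of the proposed scheme into the standard two-time-scale stochastic approximation form of \cite{Borkar1997}, so that existing asymptotic tools become applicable. First, I would fix $\bm{s}_H\in\mathcal{S}_H$ and expand $V^{\pi_H^{n+1}}(\bm{s}_H;\pi_L^{n+1})$ via the $\bar{n}$-step rollout identity \eqref{eq:rollout} applied at the updated lower-tier policy $\pi_L^{n+1}$. Adding and subtracting $V^{\pi_H^n}(\bm{s}_H;\pi_L^n)$ isolates a Bellman-style drift of the form $\max_{\pi_H}\{R_H(\bm{s}_H,\pi_H(\bm{s}_H);\tau_L(\pi_L^{n+1}))+\gamma_H V^{\pi_H^n}(h(\cdot),\pi_L^n)\}-V^{\pi_H^n}(\bm{s}_H;\pi_L^n)$, plus exactly the residual $\beta_H^{n+1}$ appearing in \eqref{eq:h_tier}, which quantifies the gap between the truncated rollout value and the true value under the new lower-tier policy. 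An analogous manipulation at the lower tier, after inserting and subtracting the counterfactual value $V^{\pi_L^{n+1}}(\bm{s}_L;\hat\pi_H^{n+1})$ with $\hat\pi_H^{n+1}=f(\pi_L^{n+1})$, produces the lower-tier Bellman drift together with the residual $\beta_L^{n+1}$ that captures the mismatch between the UE's imagined best response and the rollout-based $\pi_H^{n+1}$ actually deployed by the satellite.

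Next, I would derive the effective step sizes by tracking how much contraction each agent has accumulated by slot $n$. Because the UE applies a trust-region update with coefficient $\alpha_L$ every slot and the LEO satellite applies one with coefficient $\alpha_H$ every $T$ slots, the residual contraction on the lower tier after $n$ slots is $(1-\alpha_L)^n(1-\alpha_H)^{\lfloor n/T\rfloor}$, so the complementary effective fast step is $b(n)=1-(1-\alpha_L)^n(1-\alpha_H)^{\lfloor n/T\rfloor}$. Dividing by the $\lfloor n/T\rfloor$ updates that the higher tier has performed yields the slow step $a(n)=b(n)/\lfloor n/T\rfloor$. Expanding $\log(1-x)\approx -x$ for small $\alpha_L,\alpha_H$ and using $\sum_{k=1}^{n}k^{-1}\sim\log n$ reduces these to the claimed asymptotics $b(n)\sim 1/n$ and $a(n)\sim 1/(1+n\log n)$.

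The final step is verification of the Robbins--Monro conditions: divergence of $\sum_n a(n)$ and $\sum_n b(n)$ follows by the integral test from $\int dn/(n\log n)=\log\log n$ and $\int dn/n=\log n$, while summability of $\sum_n a(n)^2$ and $\sum_n b(n)^2$ follows from the majorants $1/(n\log n)^2$ and $1/n^2$. The time-scale separation $b(n)/a(n)\sim\log n\to\infty$ identifies the UE as the fast-scale iterate and the LEO satellite as the slow one (the ratio in the statement should be read with this orientation), which is exactly what Borkar's coupled-ODE analysis requires in order to freeze the slow component while the fast component tracks its quasi-stationary manifold.

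The main obstacle I expect is not the algebraic rewriting but the probabilistic control of the residuals $\beta_H^{n+1}$ and $\beta_L^{n+1}$: one must argue they form a martingale-difference sequence with bounded second moments relative to the natural filtration, so that they behave as admissible stochastic approximation noise rather than persistent bias. Boundedness is automatic because the rewards \eqref{eq:h_reward} and \eqref{eq:l_reward} are bounded on the compact beam/RB action sets, which forces the value functions and the truncated rollout tail $\gamma_H^{\bar{n}}V^{\bm{\theta}_H}$ to be uniformly bounded. The zero-mean property for $\beta_L^{n+1}$ requires invoking the near-stationarity $f(\pi_L)\approx\pi_H$ together with the policy improvement bound \eqref{eq:pi_ltier} from \textbf{Lemma 1}, which shows the best-response mismatch is of order $(1-(1-\alpha_L)(1-\alpha_H)^{1/T})$ and hence vanishes faster than the slow step; the analogous bound \eqref{eq:pi_htier} controls $\beta_H^{n+1}$ through the factor $(1-(1-\alpha_L)^T)^2$, completing the embedding of the learning dynamics into the two-time-scale stochastic approximation template.
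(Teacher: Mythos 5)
You should first note that the paper itself supplies no proof of this proposition: there is no appendix devoted to it, and the surrounding text simply asserts the recasting ``with the spirit of the two-time-scale model in \cite{Borkar1997}.'' So your proposal is being measured against an implicit justification rather than an explicit one. Your first step --- expanding $V^{\pi_H^{n+1}}$ through the $\bar{n}$-step rollout, adding and subtracting $V^{\pi_H^n}({\bm{s}}_H;\pi_L^n)$ to expose the Bellman drift, and identifying $\beta_H^{n+1}$ as the truncation error and $\beta_L^{n+1}$ as the best-response mismatch $V^{\pi_L^{n+1}}({\bm{s}}_L;\hat{\pi}_H^{n+1})-V^{\pi_L^{n+1}}({\bm{s}}_L;\pi_H^{n+1})$ --- is exactly consistent with the definitions in the statement and with how Appendix C later manipulates these quantities. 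Your observation that the ratio $a(n)/b(n)\to\infty$ as printed is inconsistent with the stated asymptotics $a(n)\approx 1/(1+n\log n)$, $b(n)\approx 1/n$ (which give $a(n)/b(n)\to 0$, i.e.\ the lower tier is the fast iterate) is a correct and worthwhile catch.

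There are, however, two genuine gaps. First, the step-size ``derivation'' does not go through: the quantity $b(n)=1-(1-\alpha_L)^n(1-\alpha_H)^{\lfloor n/T\rfloor}$ tends to $1$ as $n\to\infty$, not to $0$, so no expansion of $\log(1-x)$ combined with harmonic sums can produce $b(n)\sim 1/n$, and likewise $a(n)=b(n)/\lfloor n/T\rfloor\sim T/n$ rather than $1/(1+n\log n)$. This defect is inherited from the paper, but your claim to derive the stated asymptotics from the contraction-accumulation argument is not something you could actually carry out; the honest statement is that the closed forms and the $1/n$, $1/(n\log n)$ schedules are two different objects that the paper conflates, and only the latter satisfy Robbins--Monro. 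Second, your treatment of the residuals as martingale differences invokes the policy-improvement bounds of \textbf{Lemma 1} to argue the mismatch ``vanishes faster than the slow step''; this is not how the paper handles it and does not establish the zero-conditional-mean property. The paper's own route (Appendix C, i.e.\ \textbf{Proposition 3}) bounds $\mathbb{E}[\lVert\beta^{n+1}\rVert_\infty^2\mid{\cal{F}}_n]\le K(1+\lVert x(n)\rVert_\infty^2+\lVert y(n)\rVert_\infty^2)$ directly from boundedness of the rewards and then uses the contraction property of $G_H,G_L$ --- a second-moment argument, not a bias-decay argument. That said, the martingale property is formally the content of \textbf{Proposition 3}, not of this proposition, so its absence here is a misplacement rather than a fatal omission for the statement at hand.
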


From the above updating rules, we can observe that $\pi_H$ and $\pi_L$ can be regarded as converged policies for value functions when the second terms in \eqref{eq:h_tier} and \eqref{eq:l_tier} approach zero. To illustrate the variations of these terms, we first adopt $G_H(x(n), y(n))$ and $G_L(x(n), y(n))$ to indicate the target values over $ V^{\pi^n_H}({\bm{s}}_H; \pi^n_L)$ and $ V^{\pi^n_L}({\bm{s}}_L; \pi^n_H)$, respectively, i.e.,
{\begin{align}\nonumber
G_H(x(n), y(n)) & = \max\limits_{\pi_H}(R_H({\bm{s}}_H, \pi_H({\bm{s}}_H); \tau_L(\pi_L^{n+1})) \\
&\quad + \gamma_HV^{\pi_H}(h({\bm{s}}_H, \pi_H,
\pi^{n+1}_L); \pi^{n}_L),\\ \nonumber
G_L(x(n), y(n)) & = R_L({\bm{s}}_L, \pi_L^{n+1}({\bm{s}}_L); \tau_H(\pi_H^{n+1})) \\
&\quad + \gamma_LV^{\pi^{n}_L}(g({\bm{s}}_L, \pi^{n+1}_H, \pi^{n+1}_L); \pi^{n}_H).
\end{align}}
Then, we can provide some valid properties of these stochastic terms (namely, $G_H(x(n), y(n))$, $G_L(x(n), y(n))$, $\beta_H^n$ and $\beta_L^n$) as the basis for the subsequent analysis.

\begin{myPro}
$G_H(x(n), y(n))$  and  $G_L(x(n), y(n))$ are Lipschitz with regard to the maximum norm, and $\{\beta_H^{n}\}$ and $\{\beta_L^{n}\}$ are martingale difference sequences with regard to increasing $\delta$-fields ${\cal{F}}_n \triangleq \{V^{\pi^m_H}({\bm{s}}_H; \pi^m_L), V^{\pi^m_L}({\bm{s}}_L; \pi^m_H), \beta_H^m, \beta_L^m, m \le n\}$, $n \ge 0$, satisfying $\sum_{n=0}^{\infty}a(n)\beta_H^n  < \infty$ and $\sum_{n=0}^{\infty}b(n)\beta_L^n < \infty$.
\end{myPro}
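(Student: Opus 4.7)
The plan is to split the argument into two pieces: first establishing the Lipschitz continuity of $G_H$ and $G_L$ via standard Bellman-contraction arguments, and second verifying that $\{\beta_H^n\}$ and $\{\beta_L^n\}$ form zero-mean, square-summable martingale differences with respect to ${\cal{F}}_n$.

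For the Lipschitz part, I would start from $G_H(x,y)=\max_{\pi_H}\{R_H(\cdot;\tau_L(\pi_L^{n+1}))+\gamma_H V^{\pi_H}(h(\cdot);\pi_L^{n})\}$ and use the pointwise inequality $|\max_{\pi_H}f-\max_{\pi_H}g|\le\max_{\pi_H}|f-g|$ to reduce the Lipschitz estimate to a bound on the integrand. The reward $R_H$ in \eqref{eq:h_reward} is uniformly bounded because the Shannon rate is bounded above by a constant determined by the finite transmit power, bandwidth, and array gains; moreover, its dependence on $y$ enters only through $\pi_L^{n+1}$, which is Lipschitz in $y$ as the output of a DNN policy with bounded parameters. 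The $\gamma_H V^{\pi_H}$ term contributes a contraction of factor $\gamma_H<1$ in the $x$ component. Combining these pieces yields $\|G_H(x_1,y_1)-G_H(x_2,y_2)\|_\infty\le \gamma_H\|x_1-x_2\|_\infty+C_y\|y_1-y_2\|_\infty$. The same argument (without the outer $\max$) gives the Lipschitz bound for $G_L$ with constants $(\gamma_L,C_x)$.

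For the martingale-difference part, I would identify $\beta_H^{n+1}$ as the deviation between the $\bar{n}$-step rollout approximation in \eqref{eq:rollout} and the true infinite-horizon value of $\pi_H^{n+1}$, and $\beta_L^{n+1}$ as the deviation introduced by replacing $\pi_H^{n+1}$ with $\hat{\pi}_H^{n+1}=f(\pi_L^{n+1})$. The ${\cal{F}}_{n+1}$-measurability is immediate from the construction. For the zero-mean condition $\mathbb{E}[\beta_H^{n+1}\mid{\cal{F}}_n]=0$, I would invoke the fact that the DNN tail critic $V^{{\bm{\theta}}_H}$ minimizes the MSE loss in \eqref{eq:critic_loss}, so at its fixed point it is an unbiased estimator of the discounted tail ${\gamma_H}^{\bar{n}}V^{\pi_H}$; the near-stationarity $f(\pi_L)\approx\pi_H$ established after \textbf{Lemma 1} yields the corresponding statement for $\beta_L^{n+1}$. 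Square-integrability of $\beta_\bullet^n$ follows from the uniform bound $\|V\|_\infty\le R_{\max}/(1-\gamma)$, and combined with the Robbins--Monro summability $\sum_n a(n)^2<\infty$ and $\sum_n b(n)^2<\infty$ stated in \textbf{Proposition 2}, the martingale convergence theorem gives almost-sure convergence of $\sum_n a(n)\beta_H^n$ and $\sum_n b(n)\beta_L^n$.

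The main technical obstacle will be rigorously justifying the zero-mean condition during the transient phase, since the tail critic $V^{{\bm{\theta}}_H}$ and the auxiliary map $f$ are only approximately unbiased before training has converged. I would resolve this via the two-time-scale separation implied by the concrete choices $a(n)\sim 1/(1+n\log n)$ and $b(n)\sim 1/n$: because the two scales decay at different rates, one can view the slower component as quasi-static relative to the faster one and absorb the transient bias into an asymptotically vanishing error, following the two-time-scale stochastic-approximation framework of Borkar that the paper has already invoked before the proposition.
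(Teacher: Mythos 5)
Your overall structure matches the paper's Appendix C — contraction of the Bellman-type operators for the Lipschitz claim, boundedness of the rewards for the $\beta$ terms, and square-summability of the step sizes for the weighted sums — but you diverge in the details of the second half, and one of your divergences is on shakier ground than the paper's route. For the Lipschitz part you aim at a joint bound $\gamma_H\|x_1-x_2\|_\infty + C_y\|y_1-y_2\|_\infty$; the paper only proves the marginal property (Lipschitz in $x$ with $y(n)$ fixed for $G_H$, and in $y$ with $x(n)$ fixed for $G_L$), which is all the Borkar framework needs. Your extra cross term rests on the claim that the dependence on $y$ through $\pi_L^{n+1}$ is Lipschitz ``as the output of a DNN policy with bounded parameters,'' which conflates the value-function iterate $y$ with the policy-network parameters and is not established anywhere; you should either drop the cross term or supply that map explicitly. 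For the martingale-difference property, the paper does not argue exact unbiasedness of the critic at its MSE fixed point as you do; it instead bounds $\mathbb{E}[\|\beta^{n+1}\|_\infty^2\mid\mathcal{F}_n]$ by $K(1+\|x(n)\|_\infty^2+\|y(n)\|_\infty^2)$ and lets the conditional mean vanish asymptotically via the contractive iterates. Both justifications are heuristic, but yours additionally requires the DNN to realize the exact conditional expectation, which is a stronger and unverified assumption. Finally, for $\sum_n a(n)\beta_H^n<\infty$ and $\sum_n b(n)\beta_L^n<\infty$ the paper uses Cauchy--Schwarz together with the deterministic bounds $|\beta_H^n|\le\gamma_H^{\bar n}R_H^{\max}/(1-\gamma_H)$ and the reward bound for $\beta_L^n$, whereas you invoke the martingale convergence theorem; the paper's route is the more robust one here precisely because it needs only boundedness and $\sum_n a(n)^2<\infty$, not the exact zero-conditional-mean property that neither argument fully secures.
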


\begin{proof}
Please refer to Appendix C.
\end{proof}

After proving the martingale and Lipschitz property of the stochastic terms in \eqref{eq:h_tier} and \eqref{eq:l_tier}, the converged two-time-scale sequences should satisfy the assumptions in \textbf{Lemma 2}, which is provided in the following.

\begin{myLem}
The two-time-scale iterates $\{x_n, y_n\}$ following \cite{Borkar1997} converge to $\{\lambda(y^{*}), y^{*}\}$, almost surely, under the following assumptions: 1) \textbf{Assumption 1}: $\dot{x}(t) = h(x(t), y)$ has a unique global asymptotically stable equilibrium $\lambda(y)$, where $\lambda(\cdot)$ is a Lipschitz map. 2) \textbf{Assumption 2}: $\dot{y}(t) = g(\lambda(y(t)), y(t))$ has a unique global asymptotically stable equilibrium $y^*$. 3) \textbf{Assumption 3}: $\sup_n[\lVert x_n \rVert_{\infty} + \lVert y_n \rVert_{\infty}] < \infty$, almost surly.
\end{myLem}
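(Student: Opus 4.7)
The plan is to cast \eqref{eq:h_tier}--\eqref{eq:l_tier} into Borkar's canonical two-time-scale stochastic approximation form and invoke the corresponding convergence theorem, since the structural ingredients have already been assembled in Propositions 2 and 3. Writing the dynamics as
\begin{align*}
x(n+1) &= x(n) + a(n)\bigl[G_H(x(n), y(n)) - x(n) + \beta_H^{n+1}\bigr],\\
y(n+1) &= y(n) + b(n)\bigl[G_L(x(n), y(n)) - y(n) + \beta_L^{n+1}\bigr],
\end{align*}
I would collect the required inputs: Lipschitz continuity of $G_H,G_L$ and the martingale-difference property of $\{\beta_H^n\},\{\beta_L^n\}$ from Proposition 3; the Robbins--Monro conditions and the timescale separation from Proposition 2; uniform boundedness of the iterates from Assumption 3; and the equilibrium properties of the two limiting ODEs from Assumptions 1 and 2. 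These are exactly the hypotheses that unlock the ODE method of Borkar.

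The first substantive step would be the fast-timescale analysis. Working on the faster clock, I would construct piecewise linear interpolations of both iterates; the timescale separation causes the $y$-trajectory to be essentially frozen over intervals on which the $x$-trajectory evolves. A Gronwall-type argument, combined with Doob-type bounds on the accumulated martingale noise $\sum_n a(n)\beta_H^{n+1}$ supplied by Proposition 3, would then show that the interpolation of $\{x_n\}$ asymptotically shadows solutions of $\dot{x}=h(x,y)$ with $y$ held fixed. Assumption 1 delivers $\lVert x_n-\lambda(y_n)\rVert \to 0$ almost surely, and the Lipschitz regularity of $\lambda(\cdot)$ ensures this asymptotic relation is preserved as $y_n$ itself drifts.

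The second substantive step would be the slow-timescale analysis. Substituting $x_n\approx\lambda(y_n)$ into the $y$-iteration reduces its limiting behaviour to that of $\dot{y}=g(\lambda(y),y)$, modulo vanishing shadowing and noise errors. Assumption 2 then yields $y_n\to y^{*}$ almost surely, and combining with the fast-scale conclusion gives $x_n\to\lambda(y^{*})$, which is the claim.

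The step I expect to be the principal obstacle is rigorously justifying the fast-timescale shadowing in the presence of the policy-level coupling between the two iterates. The drift terms $G_H$ and $G_L$ depend on the \emph{other} agent's future decision trajectory ($\tau_L(\pi_L)$ or $\tau_H(\pi_H)$) rather than on its instantaneous iterate, so the usual one-step Lipschitz estimates must be lifted to the trajectory level before the time-rescaling argument can be closed. A closely related subtlety is verifying Assumption 1 for the rollout-based operator in \eqref{eq:rollout}: existence of the equilibrium map $\lambda$ and its Lipschitz dependence on $y$ should follow from the contraction of the $\gamma_H$-discounted Bellman operator combined with the Lipschitz bounds of Proposition 3, but the $\max_{\pi_H}$ operation needs to be handled carefully to ensure that these properties are preserved.
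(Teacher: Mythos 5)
The paper does not actually prove this lemma: it is imported verbatim from Borkar's two-time-scale stochastic approximation result \cite{Borkar1997} and used as a black box, with the only proof work in the paper being the verification (in Appendix D, as the proof of \textbf{Theorem 1}) that the specific iterates \eqref{eq:h_tier}--\eqref{eq:l_tier} satisfy Assumptions 1--3. Your sketch is therefore doing two different things at once. The first part --- piecewise linear interpolation on the fast clock, Gronwall plus martingale-noise bounds to show the $x$-interpolation shadows $\dot{x}=h(x,y)$ with $y$ frozen, then $\lVert x_n-\lambda(y_n)\rVert\to 0$ by Assumption 1 and the reduction of the slow iterate to $\dot{y}=g(\lambda(y),y)$ --- is a correct and standard outline of how Borkar's theorem is actually proved, so as a reconstruction of the cited result it is sound (though at the level of a sketch: the rigorous shadowing argument via asymptotic pseudo-trajectories and the handling of the vanishing discretization and noise errors are exactly where the technical weight of Borkar's paper lies, and you do not supply them). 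The second part --- casting $G_H,G_L$ into canonical form, worrying about the trajectory-level dependence on $\tau_L(\pi_L)$ and $\tau_H(\pi_H)$, and verifying Assumption 1 for the rollout operator via contraction of the discounted Bellman operator --- is not part of this lemma at all; it is the content of the paper's Theorem 1, whose Appendix D proof establishes the required equilibrium and Lipschitz properties through the contraction estimates $\lVert G_H(x,y)-G_H(x^*,y)\rVert\le\gamma_H\lVert x-x^*\rVert_\infty$ and its lower-tier analogue, together with boundedness of the rewards for Assumption 3. So there is no gap in what you wrote, but you should be aware that you are proving (in outline) a theorem the paper only cites, while folding in verification steps that the paper assigns to a separate theorem.
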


Based on the assumptions in \textbf{Lemma 2}, policies at different tiers obtained based on updating rules in \eqref{eq:h_tier} and \eqref{eq:l_tier} can achieve convergence.
\begin{myTheo}
$\{\pi_H^n, \pi_L^n\}$ converge to $\{\pi_H^*, \pi_L^*\}$ almost surly.
\end{myTheo}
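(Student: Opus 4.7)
The plan is to invoke the two-time-scale stochastic approximation result of Lemma 2 on the coupled value-function recursions \eqref{eq:h_tier} and \eqref{eq:l_tier}, and then transfer the resulting value-function convergence to the induced policies. First I would identify $x(n)=V^{\pi_H^n}(\cdot;\pi_L^n)$ and $y(n)=V^{\pi_L^n}(\cdot;\pi_H^n)$ as the two iterates driven by step sizes $a(n)$ and $b(n)$. By Proposition 3 the driving fields $G_H,G_L$ are sup-norm Lipschitz, the noise sequences $\{\beta_H^n\},\{\beta_L^n\}$ are square-summable martingale differences, and by construction $\{a(n)\},\{b(n)\}$ satisfy the Robbins--Monro conditions with separated time scales. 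Hence the standard ODE-approximation machinery applies and it suffices to verify the three assumptions of Lemma 2.

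The next step is to check each assumption. For Assumption 1, fixing the slow variable $y$, the fast ODE $\dot{x}(t)=G_H(x(t),y)-x(t)$ has right-hand side of the form ``Bellman optimality operator minus identity''; since the discounted Bellman operator is a $\gamma_H$-contraction in the sup-norm, standard Banach fixed-point/Lyapunov arguments (with Lyapunov function $\lVert x-\lambda(y)\rVert_\infty$) give a unique globally asymptotically stable equilibrium $\lambda(y)$, and a routine perturbation bound using the contraction constant shows that $y\mapsto\lambda(y)$ is Lipschitz. For Assumption 2, substituting $x=\lambda(y)$ into the slow ODE gives $\dot{y}(t)=G_L(\lambda(y(t)),y(t))-y(t)$, whose right-hand side is again contractive because it equals the lower-tier Bellman operator induced by the higher-tier policy that is greedy with respect to $\lambda(y)$; the trust-region shrinkage coefficients $\alpha_H,\alpha_L$ from Lemma 1 together with the monotone improvement guaranteed by Proposition 1 supply a Lyapunov function forcing a unique globally asymptotically stable equilibrium $y^\ast$. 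Assumption 3 follows because the rewards $R_H^k,R_L^i$ defined in \eqref{eq:h_reward}, \eqref{eq:l_reward} are uniformly bounded (the beam gains and capacities live in a compact range and the penalty $\eta\Omega_i$ is bounded by the rate demand), so discounted sums of bounded rewards keep both iterates in a bounded set almost surely.

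Granted these three assumptions, Lemma 2 yields $(x(n),y(n))\to(\lambda(y^\ast),y^\ast)$ almost surely. To promote this to policy convergence, I would note that $\pi_H^{n+1}$ is obtained by the finite-step rollout in \eqref{eq:rollout} applied to $V^{\pi_H^n}$ and $\pi_L^{n+1}$ comes from the TRPO update \eqref{eq:sum_value}; both mappings from value functions to policies are continuous on the compact discrete action sets defined in Definitions~1 and 2, and the limiting pair $(\lambda(y^\ast),y^\ast)$ is fixed by these two maps exactly when the underlying policies satisfy the joint Bellman optimality conditions that characterize $(\pi_H^\ast,\pi_L^\ast)$. Continuity of the $\arg\max$ on compact sets then gives $\pi_H^n\to\pi_H^\ast$ and $\pi_L^n\to\pi_L^\ast$ almost surely, yielding the theorem.

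The hard part will be Assumption 2: the slow ODE couples two Bellman operators through the Lipschitz map $\lambda(\cdot)$, and ensuring that the composition is a strict contraction (not merely non-expansive) is delicate because the higher-tier reward $R_H$ depends on the lower-tier decision trajectory $\tau_L(\pi_L)$ and vice versa. The natural remedy is to build a Lyapunov function from the monotone improvement statement of Proposition 1 combined with the trust-region radii $\alpha_H,\alpha_L$ of Lemma 1, so that the cross-coupling is dominated by the product discount factor $\gamma_H\gamma_L$; verifying this domination uniformly over the state space is the step that requires the most care, and everything else is a routine application of the two-time-scale framework.
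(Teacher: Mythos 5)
Your proposal follows essentially the same route as the paper's own proof: verify the three assumptions of Lemma~2 by using the $\gamma$-contraction of the Bellman-type operators $G_H, G_L$ in the sup-norm (Assumptions~1 and~2), use boundedness of the rewards in \eqref{eq:h_reward} and \eqref{eq:l_reward} for Assumption~3, invoke the two-time-scale convergence result, and then transfer value-function convergence to policy convergence via the policy--value consistency. The only noteworthy differences are cosmetic or in your favor: you assign the fast role to $x$ as in the literal statement of Lemma~2 while the paper's appendix treats $y$ (the UE, which updates every slot) as the fast iterate, and you explicitly flag the delicate point that the slow ODE involves the composition $G(\cdot,\lambda(\cdot))$ rather than a single-argument contraction --- a coupling the paper handles only implicitly by fixing the other argument and declaring the residual term negligible as the step size vanishes.
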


\begin{proof}
Please refer to Appendix D.
\end{proof}

In addition to providing the existence theorem of converged sequences, we also give the convergence error bounds under the $\infty$-norm in the following theorem.
\begin{myTheo}
\vspace{-0.5em}
After $n$ training time slots, with probability at least $1 - \delta$, the error bounds of updated value functions with respect to the optimal value functions (i.e., $x^{*}$ and $y^{*}$) are characterized as,
{\begin{align}
\lVert x^{*} - x(n) \rVert_{\infty} & = \frac{4R_H^{max}}{1 - \gamma_H}(\frac{1}{n(1 - \gamma_H)} +2\sqrt{\frac{2}{n}\ln\frac{2\lvert{\cal{S}}_H\rvert\lvert{\cal{A}}_H\rvert}{\delta}}), \label{eq:x_error} \\
\lVert y^{*} - y(n) \rVert_{\infty} & = \frac{4R_L^{max}}{1 - \gamma_L}(\frac{1}{n(1 - \gamma_L)} +2\sqrt{\frac{2}{n}\ln\frac{2\lvert{\cal{S}}_L\rvert\lvert{\cal{A}}_L\rvert}{\delta}}). \label{eq:y_error}
\end{align}}
\end{myTheo}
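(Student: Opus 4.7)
The plan is to treat the two bounds symmetrically: each one has the structure (deterministic Bellman–contraction bias) $+$ (stochastic concentration term with a union bound over $|{\cal S}||{\cal A}|$), so I will derive the bound for $\lVert x^{*}-x(n)\rVert_\infty$ in detail and obtain the bound for $\lVert y^{*}-y(n)\rVert_\infty$ by an identical argument applied to the lower-tier recursion, leveraging the time-scale separation established by Proposition 2 and Theorem 1 (which tell us that, along the path, $y(n)$ may be treated as frozen at $\lambda(x)$ on the $x$-time-scale and vice versa).

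The starting point is to rewrite the update \eqref{eq:h_tier} in Bellman-operator form, $x(n+1)=(1-a(n))x(n)+a(n)(T_H x(n)+\beta_H^{n+1})$, where $T_H$ is the higher-tier Bellman operator induced by the current lower-tier policy. Because $T_H$ is a $\gamma_H$-contraction in the $\infty$-norm and has unique fixed point $x^{*}$, subtracting $x^{*}$ gives the pointwise error recursion
\begin{equation}
\lVert x(n+1)-x^{*}\rVert_\infty \le (1-a(n)(1-\gamma_H))\lVert x(n)-x^{*}\rVert_\infty + a(n)\lVert M_H^{n+1}\rVert_\infty,
\end{equation}
where $M_H^{n+1}$ collects the noise $\beta_H^{n+1}$ together with the rollout/sampling error. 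I will then split $x(n)-x^{*}=e_{\text{bias}}(n)+e_{\text{stoch}}(n)$, where $e_{\text{bias}}$ solves the noise-free recursion and $e_{\text{stoch}}$ is the zero-mean martingale part driven by $\{M_H^{k}\}_{k\le n}$.

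For $e_{\text{bias}}$, I will iterate the contraction with $a(n)\asymp 1/n$, using that rewards lie in $[0,R_H^{\max}]$ so $\lVert x^{*}-x(0)\rVert_\infty\le R_H^{\max}/(1-\gamma_H)$; telescoping the product $\prod_{k=1}^{n}(1-a(k)(1-\gamma_H))$ and bounding the resulting sum yields the $\tfrac{4R_H^{\max}}{n(1-\gamma_H)^2}$ piece of \eqref{eq:x_error}. For $e_{\text{stoch}}$, Proposition 3 has already established that $\{\beta_H^{n}\}$ is a bounded martingale difference sequence with respect to ${\cal F}_n$; hence, for a fixed state–action pair, Azuma–Hoeffding gives $\Pr(|e_{\text{stoch}}(n)({\bm s},{\bm a})|>\epsilon)\le 2\exp(-n\epsilon^{2}/(2c^{2}))$ with $c=R_H^{\max}/(1-\gamma_H)$. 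A union bound over ${\cal S}_H\times{\cal A}_H$ and setting the right-hand side to $\delta$ produces the $2\sqrt{(2/n)\ln(2|{\cal S}_H||{\cal A}_H|/\delta)}$ factor, and multiplying by the contraction prefactor $\tfrac{4R_H^{\max}}{1-\gamma_H}$ absorbs the dependency on the effective horizon. Summing the two pieces gives \eqref{eq:x_error}, and repeating the argument for the lower tier (with $T_L$, $\gamma_L$, $R_L^{\max}$ and the faster step size $b(n)\asymp 1/n$) delivers \eqref{eq:y_error}.

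The main obstacle I expect is controlling the cross-coupling noise: on the higher tier, the lower-tier policy $\pi_L^{n+1}$ used inside $R_H$ and $\tau_L(\pi_L^{n+1})$ is itself random and changing, so the $\beta_H^{n+1}$ term is not literally i.i.d.\ noise. The key is to invoke the two-time-scale separation from Theorem 1: because $a(n)/b(n)$ enforces a separation of scales, one may condition on ${\cal F}_n$ and treat $\pi_L^{n+1}$ as essentially stationary at $\lambda(\cdot)$ up to a perturbation that is absorbed into the constant $c$ without changing the $1/n$ and $\sqrt{\ln(|{\cal S}||{\cal A}|/\delta)/n}$ rates. Getting the numerical constants ($4$ in front, $2$ inside the square root) to match the statement exactly, rather than unspecified absolute constants, will require care in the Azuma step and in bounding $\lVert M_H^{n+1}\rVert_\infty$ by $2R_H^{\max}/(1-\gamma_H)$ rather than by $R_H^{\max}/(1-\gamma_H)$.
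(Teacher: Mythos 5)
Your overall architecture matches the paper's: both rewrite the updates \eqref{eq:h_tier}--\eqref{eq:l_tier} in Bellman-operator form, split the error into a deterministic contraction/bias piece plus a martingale piece, and finish with Hoeffding--Azuma and a union bound over $\lvert{\cal S}\rvert\lvert{\cal A}\rvert$ state--action pairs (the paper's exponent $\exp(-\epsilon^2(1-\gamma)^2/(32n{R^{\max}}^2))$ corresponds to increments of size $4R^{\max}/(1-\gamma)$, so your worry about the constant in the Azuma step is well placed). However, there is a genuine gap in how you obtain the $\tfrac{4R_H^{\max}}{n(1-\gamma_H)^2}$ bias term. You propose to iterate the recursion
\begin{equation*}
\lVert x(n+1)-x^{*}\rVert_\infty \le \bigl(1-a(n)(1-\gamma_H)\bigr)\lVert x(n)-x^{*}\rVert_\infty + a(n)\lVert M_H^{n+1}\rVert_\infty
\end{equation*}
with $a(n)\asymp 1/n$ and telescope the product $\prod_{k=1}^{n}\bigl(1-\tfrac{1-\gamma_H}{k}\bigr)$. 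That product behaves like $n^{-(1-\gamma_H)}$, so the noise-free part of the error decays only polynomially with exponent $1-\gamma_H$, not at the $1/n$ rate claimed in \eqref{eq:x_error}; for $\gamma_H$ close to $1$ this is far too slow, and no amount of care with constants will recover the stated bound from this recursion.

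The paper sidesteps this by analyzing a Ces\`aro average of the Bellman iterates rather than the raw recursion: it sets $\hat{y}=\tfrac{1}{n}\sum_{i=0}^{n-1}(\hat{\cal T}_L^{i}\hat{y}^{i}+\beta_L^{i+1})$, decomposes each term as ${\cal T}_L^{i}y^{*}$ plus an error $e_L^{i}$, and bounds $\lVert {\cal T}_L^{i+1}y^{*}-{\cal T}_L^{i}y^{*}\rVert_\infty\le \gamma_L^{i}\cdot 2V_L^{\max}$, so the bias is a geometric sum $\tfrac{1}{n}\sum_i \gamma_L^{i}\cdot 2V_L^{\max}\le \tfrac{2R_L^{\max}}{n(1-\gamma_L)^2}$, which is then doubled by a triangle inequality between $y^{*}$, $\hat{y}$, and $y(n)$ to produce the prefactor $4$. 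The accumulated noise $\tfrac{1}{n}\lVert\sum_i\beta_L^{i+1}+\sum_i e_L^{i}\rVert_\infty$ is then the object to which Hoeffding--Azuma is applied, exactly as you intend. If you replace your product-telescoping step with this averaging argument (or otherwise justify a genuine $1/n$ decay of the deterministic part), the rest of your plan, including the symmetric treatment of the two tiers and the appeal to Proposition 3 for the martingale property of $\{\beta_H^{n}\}$ and $\{\beta_L^{n}\}$, goes through along the paper's lines.
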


\begin{proof}
Please refer to Appendix E.
\end{proof}

The above theorem shows that the error bound at each time scale decreases with the increase of the number of training time slots, and increases with the dimension of the decision space. Therefore, the bound of the convergence time of our proposed scheme can be derived.

\begin{myCor}
Toward implementing $\epsilon$ error for each sequence with probability greater than $1 - \delta$, the bound of the convergence time is determined by the maximum order of the number of training slots of different sequences, i.e., $\max\{{\cal{O}}(\frac{{R_H^{max}}^2}{{\epsilon^2}(1 - \gamma_H)^2}\ln\frac{\lvert{\cal{S}}_H\rvert\lvert{\cal{A}}_H\rvert}{\delta}) ,  {\cal{O}}(\frac{{R_L^{max}}^2}{{\epsilon^2}(1 - \gamma_L)^2}\ln\frac{\lvert{\cal{S}}_L\rvert\lvert{\cal{A}}_L\rvert}{\delta})\}$.
\end{myCor}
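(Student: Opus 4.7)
The plan is to invert each of the two error bounds in Theorem 2 to obtain a sufficient condition on the number of training slots $n$ that yields an $\ell_\infty$ error of at most $\epsilon$, and then take the maximum of the two resulting orders, since convergence of the overall two-time-scale scheme requires that both sequences reach the $\epsilon$-tolerance simultaneously with joint failure probability at most $\delta$.

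First I would look at the higher-tier bound in \eqref{eq:x_error}. Requiring $\lVert x^{*} - x(n) \rVert_{\infty} \le \epsilon$ gives
\begin{equation*}
\frac{4R_H^{max}}{1 - \gamma_H}\Bigl(\frac{1}{n(1 - \gamma_H)} + 2\sqrt{\frac{2}{n}\ln\frac{2\lvert{\cal{S}}_H\rvert\lvert{\cal{A}}_H\rvert}{\delta}}\Bigr) \le \epsilon.
\end{equation*}
The bracketed quantity is the sum of an $O(1/n)$ term and an $O(1/\sqrt{n})$ term; for $n$ large enough the $1/\sqrt{n}$ term dominates, so it suffices to enforce the stronger inequality obtained by dropping the $1/n$ contribution (up to a constant factor absorbed into the $O(\cdot)$). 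Squaring and rearranging the dominant inequality $\tfrac{8R_H^{max}}{1-\gamma_H}\sqrt{\tfrac{2}{n}\ln\tfrac{2|{\cal S}_H||{\cal A}_H|}{\delta}} \le \epsilon$ yields
\begin{equation*}
n \ge \frac{128\,(R_H^{max})^{2}}{\epsilon^{2}(1-\gamma_H)^{2}}\,\ln\!\frac{2\lvert{\cal{S}}_H\rvert\lvert{\cal{A}}_H\rvert}{\delta}
\;=\; {\cal{O}}\!\left(\frac{(R_H^{max})^{2}}{\epsilon^{2}(1-\gamma_H)^{2}}\ln\frac{\lvert{\cal{S}}_H\rvert\lvert{\cal{A}}_H\rvert}{\delta}\right).
\end{equation*}
Repeating the same manipulation on \eqref{eq:y_error} produces the analogous lower-tier order ${\cal O}\!\left(\tfrac{(R_L^{max})^{2}}{\epsilon^{2}(1-\gamma_L)^{2}}\ln\tfrac{\lvert{\cal S}_L\rvert\lvert{\cal A}_L\rvert}{\delta}\right)$.

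Since the two-time-scale iteration progresses in a synchronized fashion but the error on the two sequences decreases at possibly different rates, the event that both iterates are within $\epsilon$ of their respective optima occurs only once $n$ exceeds both thresholds; a union bound over the two confidence events (each with failure probability $\delta$, and the statement is up to constants in $\delta$) keeps the joint failure probability controlled. Therefore the overall convergence time is the larger of the two derived thresholds, giving the claimed $\max\{\cdot,\cdot\}$ expression.

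I do not expect any essential obstacle beyond careful algebra; the only subtle point is justifying that the $1/\sqrt{n}$ term, rather than the $1/n$ term, controls the asymptotic order so that the Hoeffding-type piece of the bound alone determines the convergence time. This is verified by checking that for the values of $n$ that saturate the $1/\sqrt{n}$ constraint, the $1/n$ term is already of order $\epsilon^{2}(1-\gamma_H)/R_H^{max}$, i.e., negligible compared with $\epsilon$, so it contributes only lower-order corrections that are absorbed into the ${\cal O}(\cdot)$.
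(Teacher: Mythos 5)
Your proposal is correct and follows essentially the same route as the paper, whose proof simply sets the right-hand sides of \eqref{eq:x_error} and \eqref{eq:y_error} equal to $\epsilon$ and solves for $n$; you merely make explicit the observation that the $1/\sqrt{n}$ Hoeffding term dominates and hence determines the order. No gaps.
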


\begin{proof}
Let the right-hand sides of \eqref{eq:x_error} and \eqref{eq:y_error} equal to $\epsilon$, and the bound of the convergence time in \textbf{Corollary 1} can be derived.
\end{proof}

From \textbf{Corollary 1}, the time complexity of our proposed scheme is dominated by the dimension of the decision space. Particularly, after the convergence is achieved, the resource optimization decisions for the LEO satellite and UE can be directly obtained with the converged DRL models. Therefore, the proposed scheme can be trained off-line and then deployed in the practical NTNs.

\section{Performance and Evaluation}

\subsection{Simulation Settings}

To evaluate the performance of our proposed scheme, consider the LEO satellite with $K = 100$ RBs, which are divided into $5$ RB groups, to service a ground UE with dynamic receiving rate demands, and RBs in \textbf{Optimization 1} are allocated in the unit of RB group. The pathloss and channel gain are generated according to 3GPP TR 38.901 \cite{3GPP2019} and TR 38.821 \cite{3GPP2021}. The ephemeris of the LEO satellite is generated by the Satellite Communication Box of MATLAB, and the minimum elevation angle between the LEO satellite and UE is $\pi/6$. Additionally, the receiving rate demand of the UE follows a Poisson distribution with a mean of $2$, and the basic unit of the rate demand is $10$ megabits per slot. For DRL models, one fully-connected DNN consisting of two layers with 300 and 200 neurons in each layer, respectively, is deployed at the LEO satellite and UE to approximate the value function. Additionally, one fully-connected DNN consisting of three layers with 400, 300 and 200 neurons in each layer, respectively, is adopted as the actor network at the UE side. Particularly, two separate layers (both with 200 neurons) are utilized to output the beam direction and RB allocation action distributions at the UE side, respectively. Besides, when calculating the beam direction of the LEO satellite/UE, the unit adjustment angle $\Delta$ is set as 5 degree. The remaining main simulation parameters are summarized in Table I.

\begin{table}\label{tab:2}
\centering
\begin{threeparttable}
\caption{Parameters for simulation.}
\vspace{-1em}
\centering
\begin{tabular}{ll}
\hline
  Parameters & Value \\
  \hline
  LEO transmission power & 30 dBW \\
  LEO/UE antenna gain  &  30 dBi \\
  Carrier frequency & 4 GHz (S band)\tnote3 \\
  Bandwidth of each RB & 180 kHz \\
  Boltzmann constant & $1.380649 \times 10^{-23} J/K $\\
  Noise temperature & 290 K \\
  Parameter optimizer &  Adam \\
  Learning rate &  0.001 \\
  Discounting factor of LEO/UE & 0.99 \\
  Replay memory size of LEO/UE & 1200/9600\\
  \hline
 \end{tabular}
 \begin{tablenotes}
     \item[3]  This paper does not consider the interferences from terrestrial BSs, which will be investigated in the future.
   \end{tablenotes}
 \end{threeparttable}
\end{table}

To evaluate the performance of the proposed DRL-based algorithm, three benchmarking schemes are considered: single-estimation, independent scheme and traditional separated optimization architecture. In the single-estimation scheme, the UE only aims to improve its advantage instead of global advantage. Besides, in the independent scheme, the LEO satellite and UE make decisions independently. The traditional separated optimization architecture includes a family of traditional optimization schemes, in which beam and RB allocation optimizations are performed sequentially. Specifically, in the beam optimization stage, the brute-force searching (BFS) scheme (in which beam-sweeping is executed with $10$ degree in each angular space) and periodic beam update (PBU) schemes in \cite{Zhao2021} (which is designed based on the geometric relationship of the LEO-UE pair) are considered. Furthermore, we consider the greedy scheme, fixed allocation scheme and multi-armed bandit (MAB) scheme for RB allocation optimization. In the greedy and MAB schemes, the LEO satellite uses all the RB groups in the RB pool to service the UE, which accesses with the greedy or MAB manner to satisfy its receiving rate demand. While in the fixed allocation scheme, the LEO satellite always allocates a fixed number of RB groups to UE randomly. Particularly, the BFS with the greedy allocation scheme can achieve the upper bound performance (i.e., the ideal performance).

\subsection{Results and Analysis}

\subsubsection{Performance with Different Numbers of Estimated Steps $\bar{n}$}
Since the LEO satellite's policy is mainly determined by the number of estimation steps, we first evaluate the moving average reward and throughput performances of the proposed scheme under different numbers of estimation steps $\bar{n}$ in Fig. \ref{fig:estimation}. Specifically, the moving average result at slot $n$ is calculated through the average operation over the original result from slot $n - N_m + 1$ to slot $n$, and the average window lengths (namely, $N_m$) for the reward and throughput performances are $10,000$ and $10$, respectively. Particularly, in simulations, one episode is defined as the duration in which the elevation angle between the UE and the LEO satellite is larger than the minimum setting value (i.e., $\pi/6$). From Fig. \ref{fig:reward_estimation}, we can observe that there is a poor reward performance for the curves with a small estimation value (i.e., $\bar{n} \le 4$). Instead, the proposed scheme achieves a higher reward performance when the estimation value is large (i.e., $\bar{n} \ge 6$). According to the illustration of $\bar{n}$-step rollout policy in Section IV-B, a large estimation $\bar{n}$ can lead to an accurate estimate of the higher-tier agent's value function, and thus a more stable policy improvement can be obtained with the increase of the estimation depth. Hence, the reward performance curve increases gradually when the number of estimation step $\bar{n}$ increases from $6$ to $8$. Additionally, please note that the DNN parameters in DRL schemes converge to a group of candidate parameters. Therefore, the parameters are switched between those final candidate parameters, leading to the reward performance perturbations after a period of training \cite{Bertsekas2019}. With this spirit, we adopt the standard deviation (SD) of the reward performance to justify the convergence. For reward curves with $\bar{n} = 6$ and $\bar{n} = 8$, their corresponding SDs over results of $7 \times 10^3$ slots decreases from $(215.8454, 161.7692)$ to $(208.7553, 156.9334)$ after a period of training, which indicates that the convergence of the reward performance with $\bar{n} = 8$ is superior to that with $\bar{n} = 6$. Therefore, in the following simulations, the number of estimation steps is set to $8$ to guarantee the performance. Furthermore, Fig. \ref{fig:rate_estimation} shows that the proposed scheme with a larger number of estimation steps can achieve an increasing moving average throughput performance when the LEO satellite and UE iteratively improve their policies. This also justifies that the rewards defined in \eqref{eq:h_reward} and \eqref{eq:l_reward} are effective in finding the proper beam direction and resource allocation decisions.

\begin{figure*}[htb]
    \subfigure[Reward]{	
	\begin{minipage}[b]{0.49\linewidth}
        \centering
        \includegraphics[scale=0.22]{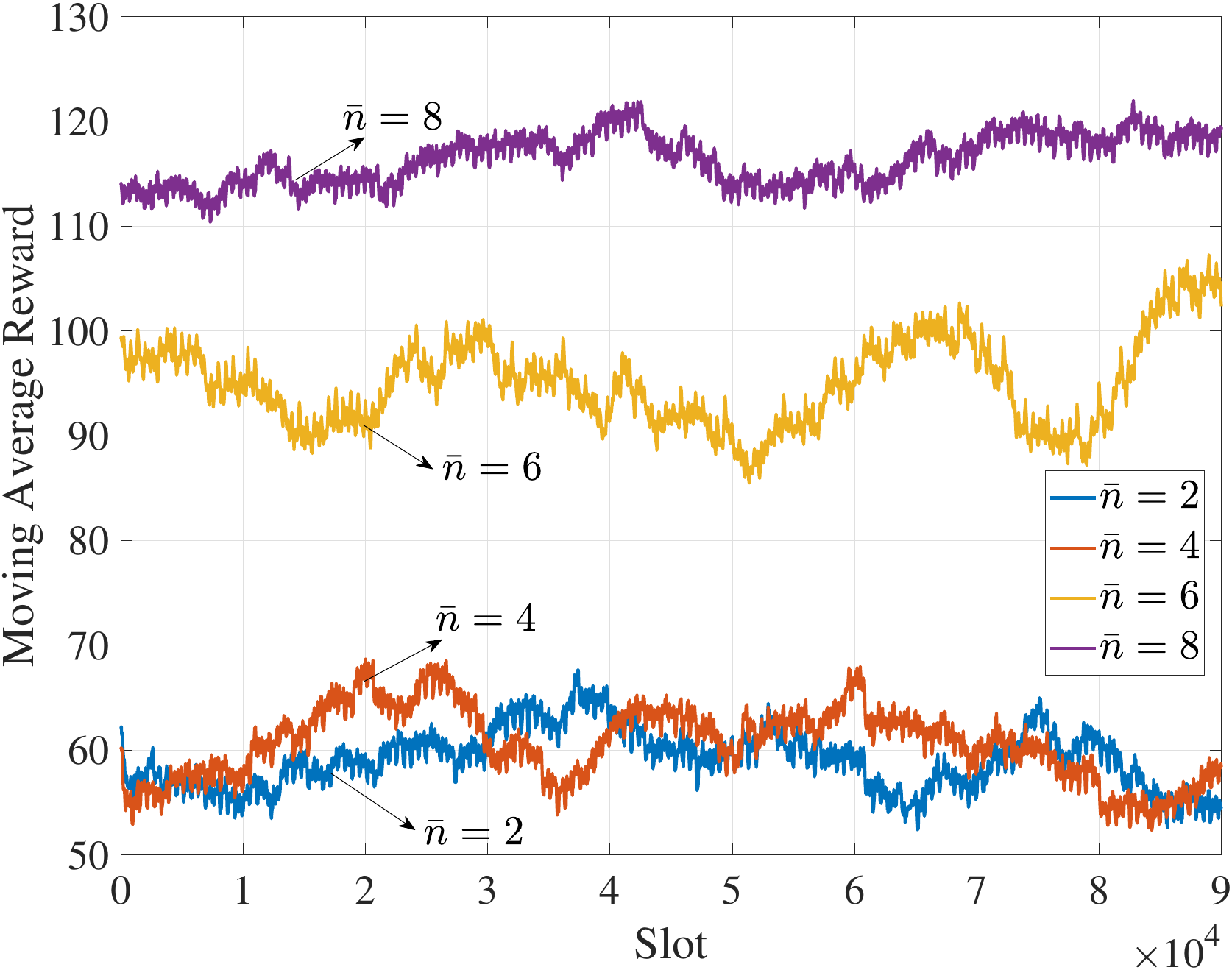}\label{fig:reward_estimation}
    \end{minipage}%
    }
    \subfigure[Throughput]{
    \begin{minipage}[b]{0.49\linewidth}
        \centering
        \includegraphics[scale=0.22]{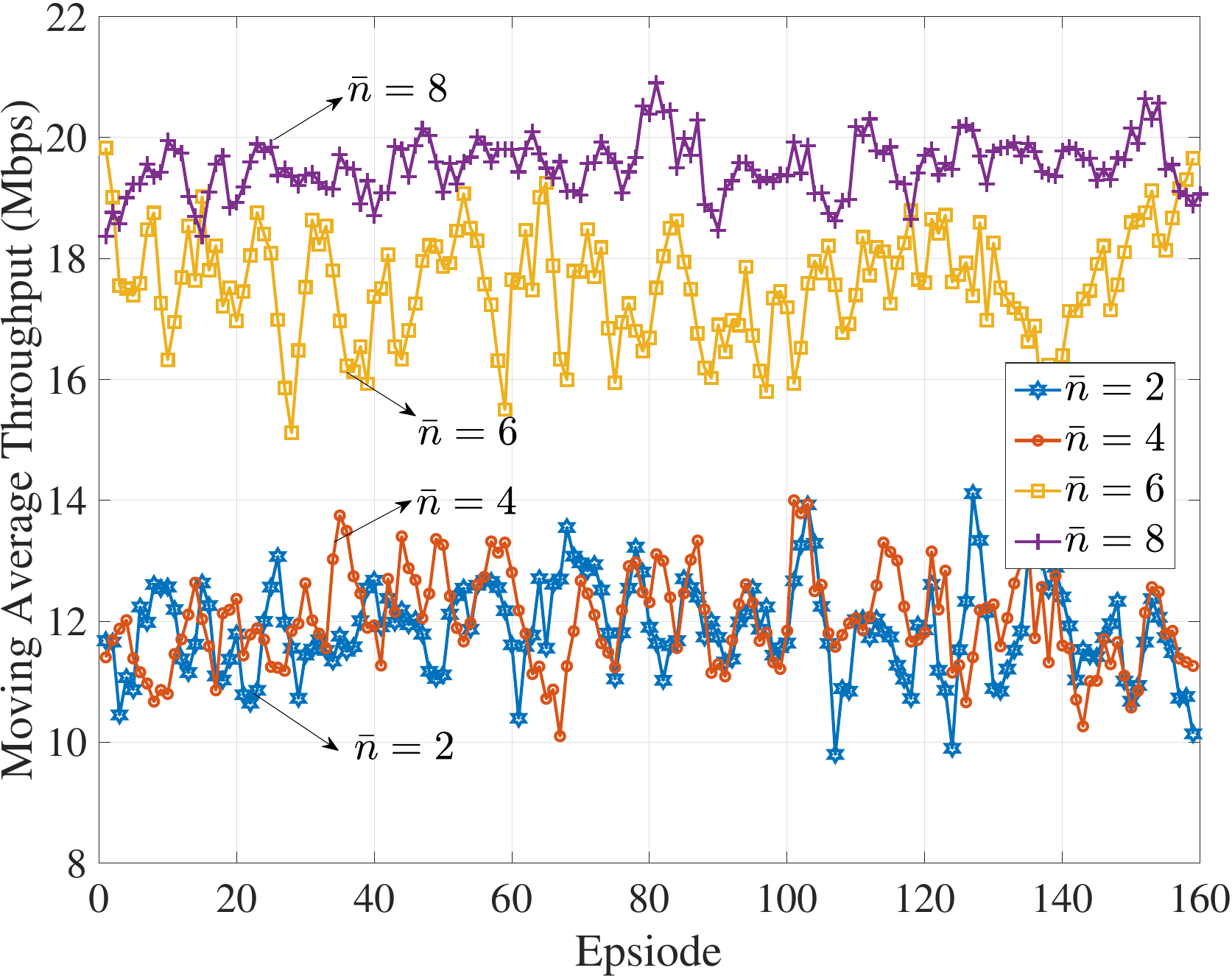}\label{fig:rate_estimation}
    \end{minipage}} \\[0.90mm]
	\caption{Moving average reward and throughput with different numbers of estimation steps $
\bar{n}$.}\label{fig:estimation}
\end{figure*}

\subsubsection{Performance with Different Numbers of Transmitting and Receiving Antennas $(N_t, N_r)$}
As illustrated in \eqref{eq:snr}, the beam direction decision is of most importance in improving the SNR of each RB. Subsequently, we evaluate the moving average reward and throughput performances of the proposed scheme with different numbers of transmitting and receiving antennas, as shown in Fig. \ref{fig:antenna}. Since a larger number of antennas leads to a higher spatial resolution, the transmitting and receiving beams will become narrower and deliver more power. In this case, the SNR at the UE will decrease significantly if the transmitting and receiving beams are not well aligned, and thus the satellite and UE normally cannot obtain a positive reward at most time according to the reward definitions in \eqref{eq:h_reward} and \eqref{eq:l_reward}. In the meantime, with the training loss in \eqref{eq:critic_loss}, a lower value function is estimated based on the obtained rewards so that the policies cannot be fully improved. Therefore, in Fig. \ref{fig:reward_antenna}, we can observe that the proposed scheme can achieve increasing reward performances with the increase of time slots when lower numbers of transmitting and receiving antennas are adopted, but the reward performance decreases with the increase of the number of antennas since the narrow beam cannot be captured easily. Additionally, for the throughput performance comparison, we also present the performance using the combination of the BFS scheme and greedy RB allocation scheme (i.e., BFS-Greedy) in Fig. \ref{fig:rate_antenna}. As aforementioned, the BFS-Greedy scheme achieves the ideal performance. With a small number of transmitting and receiving antennas (i.e., $(4^2, 8^2)$), the proposed scheme can achieve $78.5 \%$ throughput performance of the BFS-Greedy scheme. When the number of transmitting and receiving antennas increases, the throughput performance of the proposed scheme degrades to $58.9\%$ of that of the BFS-Greedy scheme. Although performance degradations exist, the searching space dimension of the BFS-Greedy scheme (i.e., $18^4$ at each time slot) is significantly larger than that of the beam action space (i.e., $7^2 \times \frac{3^2}{T}$) in our proposed scheme. Therefore, our proposed scheme can achieve a better tradeoff between throughput performance and computational complexity than that of the BFS-Greedy scheme. In the following simulations, $(N_t, N_r)$ is set as $(16^2,8^2)$.

\begin{figure*}[htb]
    \subfigure[Reward]{	
	\begin{minipage}[b]{0.49\linewidth}
        \centering
        \includegraphics[scale=0.22]{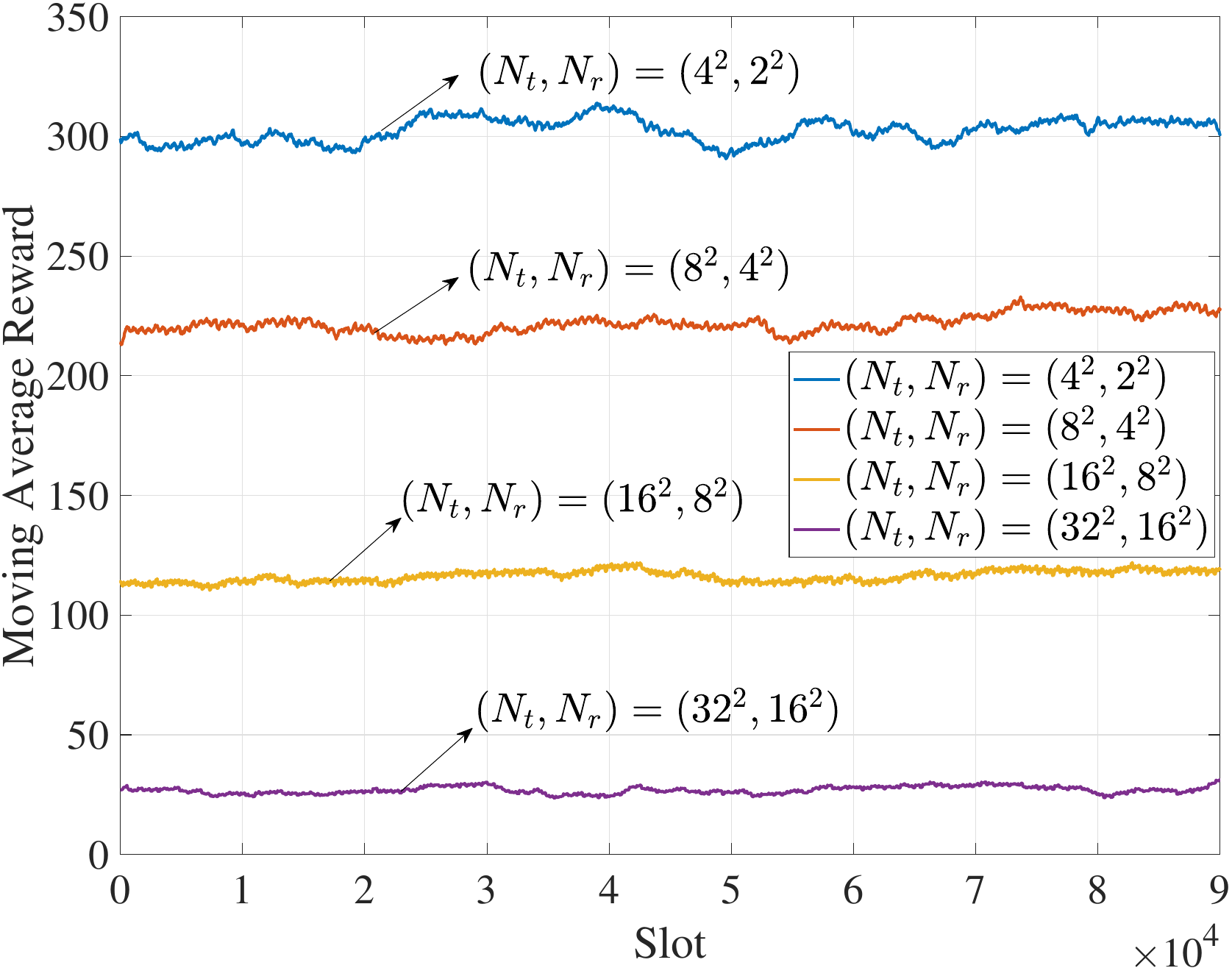}\label{fig:reward_antenna}
    \end{minipage}%
    }
    \subfigure[Throughput]{
    \begin{minipage}[b]{0.49\linewidth}
        \centering
        \includegraphics[scale=0.22]{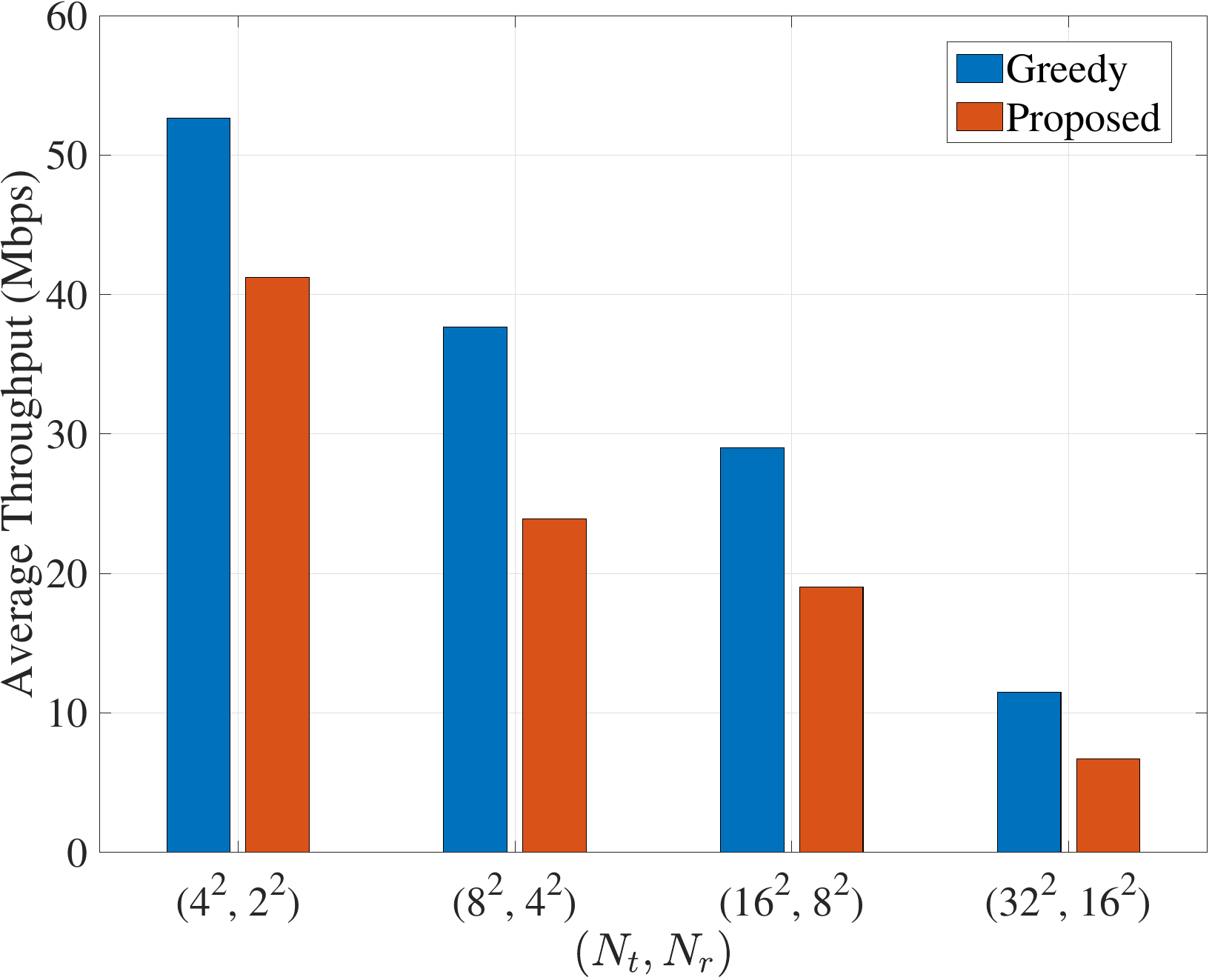}\label{fig:rate_antenna}
    \end{minipage}} \\[0.90mm]
	\caption{Moving average reward and throughput with different DRL-based algorithms.}\label{fig:antenna}
\end{figure*}

\subsubsection{Performance with Different DRL-based Schemes}
Next, to justify the effectiveness of our proposed scheme, we present the moving average reward and throughput performances of the proposed scheme and other DRL-based variants in Fig. \ref{fig:drl}. Through comparing the performance curves in Fig. \ref{fig:drl}, we can find that the proposed scheme and the single-estimation scheme can achieve a large performance enhancement over the independent scheme in terms of the reward and throughput, since each agent in the independent scheme can only exploit its own local information for decision-makings without collaborations. However, due to the differences of the control cycles, the environment of the NTN is non-stationary, leading to severe performance degradations. While in the proposed scheme and the single-estimation scheme, the sequential manner can lead to a stable policy improvement for the LEO satellite and UE since each agent fixes its policy and provides a temporarily stationary environment to the other agent. On the other hand, in Fig. \ref{fig:drl}, the performances curves of the proposed scheme are higher than those of the single-estimation scheme. This is because that, the UE estimates the policy advantages of the satellite and UE in \eqref{eq:policy_loss} so that it can not only improve the UE's value function but also enable the LEO satellite to improve the sum of value functions through performing the finite-step rollout policy in \eqref{eq:rollout}. In this case, the collaboration effects of the proposed scheme can be improved as compared with that of the single-estimation scheme, in which the UE only estimates its own policy advantage.

\begin{figure*}[htb]
    \subfigure[Reward]{	
	\begin{minipage}[b]{0.49\linewidth}
        \centering
        \includegraphics[scale=0.22]{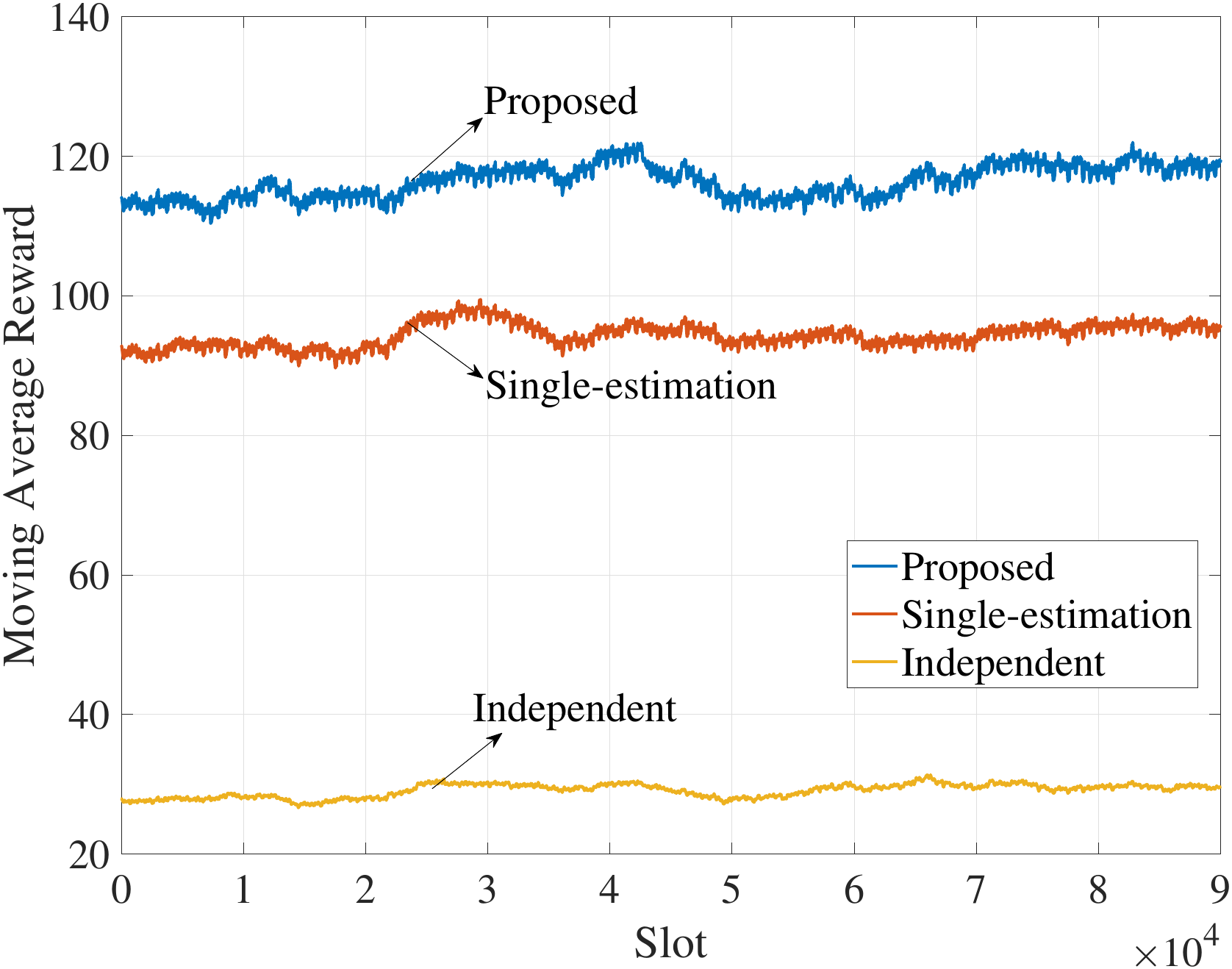}\label{fig:reward_drl}
    \end{minipage}%
    }
    \subfigure[Throughput]{
    \begin{minipage}[b]{0.49\linewidth}
        \centering
        \includegraphics[scale=0.22]{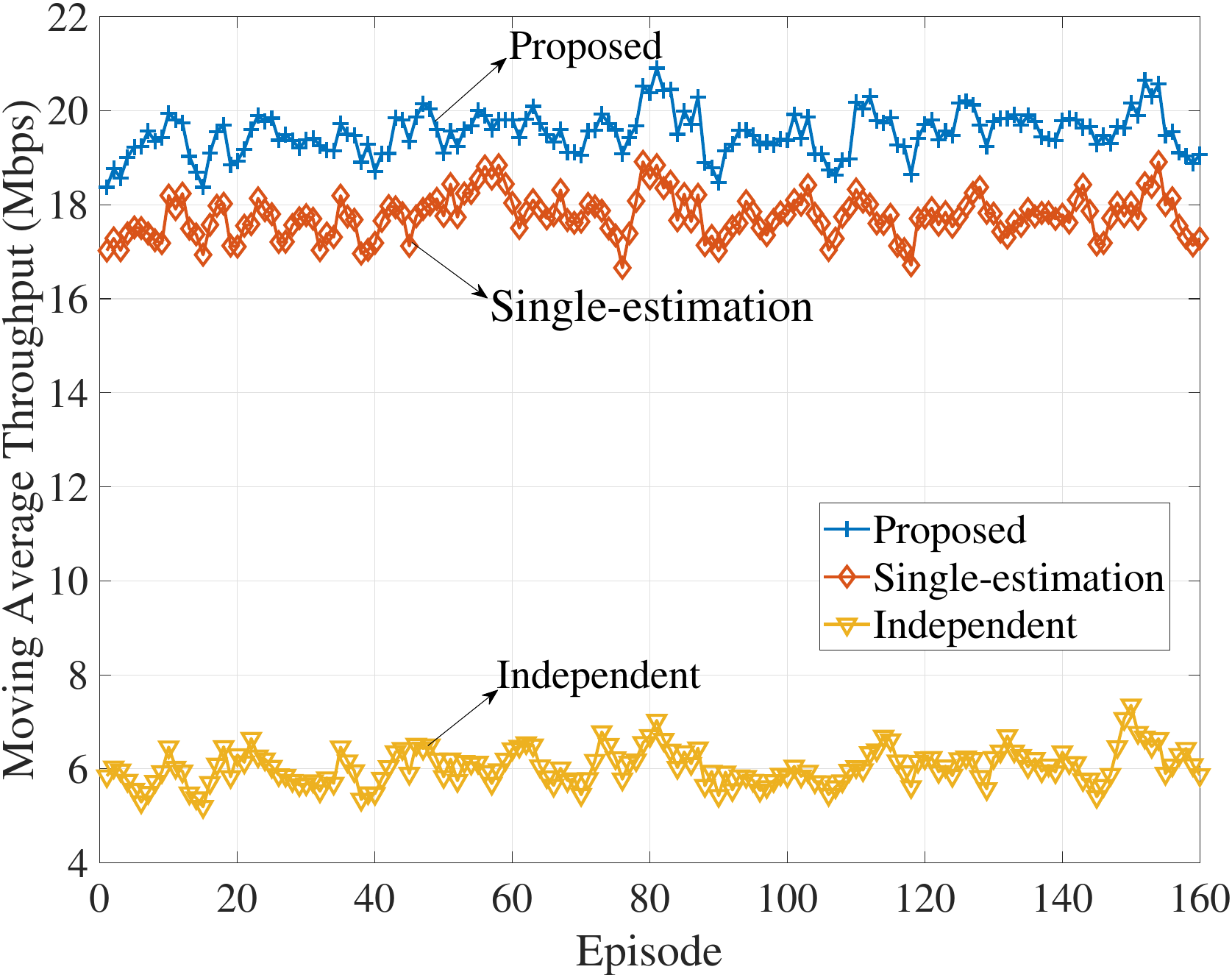}\label{fig:rate_drl}
    \end{minipage}}\\
	\caption{Moving average reward and throughput with different DRL-based schemes.}\label{fig:drl}
\end{figure*}

\subsubsection{Performance with Different Optimization Schemes}

\begin{figure*}[htb]
    \subfigure[Satisfactory error]{	
	\begin{minipage}[b]{0.49\linewidth}
        \centering
        \includegraphics[scale=0.22]{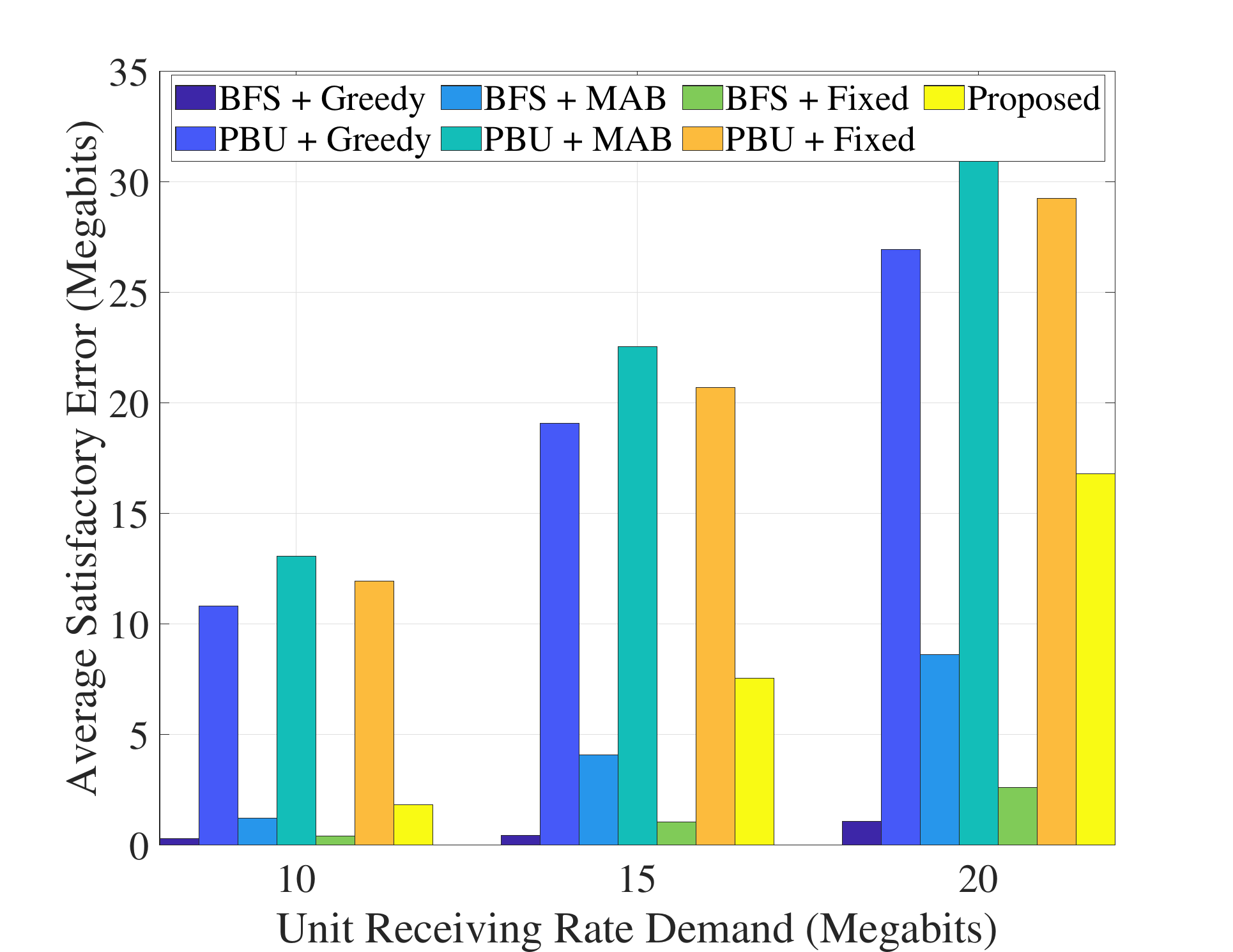}\label{fig:satisfactory}
    \end{minipage}%
    }
    \subfigure[Number of allocated RB groups]{
    \begin{minipage}[b]{0.49\linewidth}
        \centering
        \includegraphics[scale=0.22]{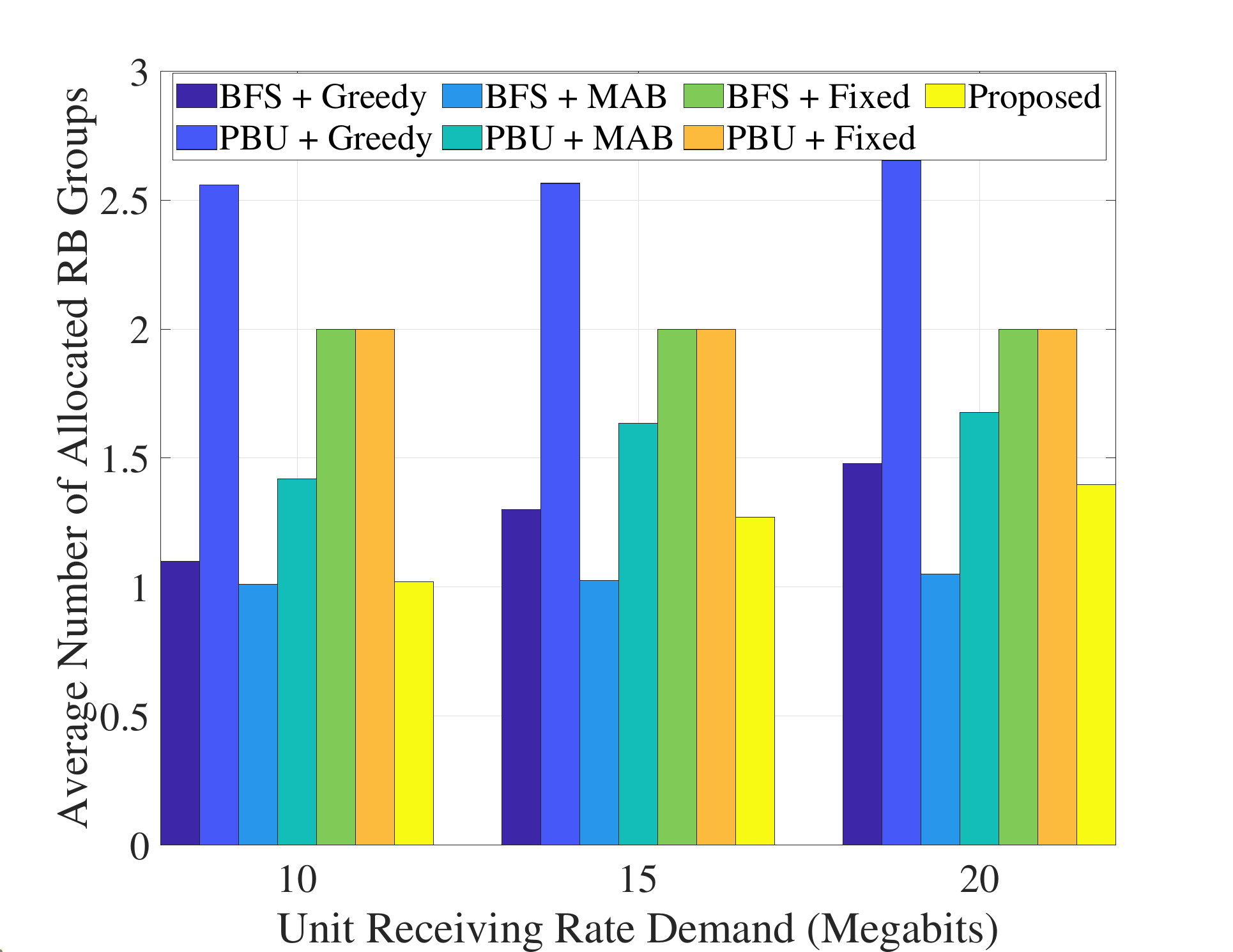}\label{fig:satisfactory_num}
    \end{minipage}} \\[0.90mm]
	\caption{Average satisfactory error and number of allocated RB groups with different schemes.}\label{fig:satis}
\end{figure*}

Finally, since the objective of the proposed scheme is to satisfy the dynamic receiving rate demand of the UE using the minimum number of RB groups, we present the average satisfactory error performances and numbers of allocated RB groups of the proposed scheme and traditional separate optimization schemes in Fig. \ref{fig:satis}. In Fig. \ref{fig:satisfactory}, the satisfactory error is defined as the absolute value of $\Omega_i$ at each time slot, and its optimal value is $0$. From Fig. \ref{fig:satisfactory_num}, we can observe that our proposed scheme achieves a significant performance enhancement over the other schemes in terms of the number of allocated RB groups since the UE reward function increases with the decrease of the number of utilized RB groups. Similarly, the MAB-based schemes also offer the performance gain in minimizing the number of allocated RB groups due to the reward function. Through comparing the performances shown in Fig. \ref{fig:satisfactory}, the satisfactory error performance achieved by the proposed scheme is close to that of the BFS-MAB scheme. Additionally, the proposed scheme outperforms the other three PBU-based schemes in minimizing the satisfactory error with a less amount of allocated RB groups. To further explore the relationship between the transmission performance and computational complexity, a weighted-sum multi-attribute utility function \cite{Senouci2016} is introduced to measure the composite score of the satisfactory error, number of allocate RB groups and computational complexity of these different schemes under different weight combinations, and the results are presented in Table \ref{tab:utility}. Four example combinations are considered, i.e., 1) the weight combination $[\frac{1}{3},  \frac{1}{3}, \frac{1}{3}]$ corresponds to the generalized scenario; 2)  the weight combination $[\frac{1}{2},  \frac{1}{4}, \frac{1}{4}]$ corresponds to the latency-sensitive scenario; 3)  the weight combination $[\frac{1}{4},  \frac{1}{2}, \frac{1}{4}]$ corresponds to the bandwidth-constrained scenario; and 4)  the weight combination $[\frac{1}{4},  \frac{1}{4}, \frac{1}{2}]$ corresponds to the energy-sensitive scenario. In PBU-based schemes, since the beam direction is calculated based on the Newton's method at each time slot, the logarithmic complexity of its maximum adjustable angular value (i.e., $\pi = 180$ degree) is adopted. In Table \ref{tab:utility}, we can observe that the proposed scheme can achieve the optimal utility (i.e., the minimum utility score) over the other schemes under different weight scenarios. Although the BPU-based schemes are with less computational complexity, there are larger satisfactory errors, and more RB groups are also required. Moreover, the BFS-Greedy scheme can achieve the minimum satisfactory error with a relatively small number of RB groups, but it is with the maximum computational complexity. Therefore, the proposed scheme can achieve a superior performance to balance the satisfactory error, utilized resources and computational complexity.

\begin{table*}
\centering
\caption{Weighted-sum Utility of Different Schemes.}
\begin{tabular}{ccccccccc}
\hline
\hline
\multirow{2}{*}{\thead{Utility \\ Coefficients}}&
\multirow{2}{*}{\thead{Unit Receiving\\ Rate Demand\\(Megabits)}}
&\multicolumn{7}{c}{Scheme}\cr\cline{3-9}
& &  \thead{BFS-\\Greedy}&\thead{PBU-\\Greedy}&\thead{BFS-\\MAB}&\thead{PBU-\\MAB}&\thead{BFS-\\Fixed}&\thead{PBU-\\Fixed}&Proposed\cr
\hline

\multirow{3}{*}{\thead{$[\frac{1}{3},  \frac{1}{3}, \frac{1}{3}]$\\(generalized scenario)}}
     & 10 & 0.2794 & 0.3642 & 0.2872 & 0.3152 & 0.3485 & 0.3405 & 0.1056\cr\cline{2-9}
     & 15 & 0.2896 & 0.3554 & 0.3027 & 0.3199 & 0.3452 & 0.3294 & 0.1590\cr\cline{2-9}
     & 20 & 0.3017 & 0.3493 & 0.3181 & 0.3110 & 0.3470 & 0.3181 & 0.2001\cr\hline
\multirow{3}{*}{\thead{$[\frac{1}{2},  \frac{1}{4}, \frac{1}{4}]$\\(latency-sensitive scenario)}}
     & 10 & 0.3539 & 0.2732 & 0.3597 & 0.2364 & 0.4057 & 0.2553 & 0.0793\cr\cline{2-9}
     & 15 & 0.3615 & 0.2666 & 0.3713 & 0.2399 & 0.4032 & 0.2471 & 0.1193\cr\cline{2-9}
     & 20 & 0.3706 & 0.2620 & 0.3829 & 0.2333 & 0.4046 & 0.2386 & 0.1501\cr\hline
\multirow{3}{*}{\thead{$[\frac{1}{4},  \frac{1}{2}, \frac{1}{4}]$\\(bandwidth-constrained scenario)}}
     & 10 & 0.2130 & 0.4026 & 0.2298 & 0.3930 & 0.2662 & 0.3984 & 0.1011\cr\cline{2-9}
     & 15 & 0.2201 & 0.3951 & 0.2545 & 0.3919 & 0.2659 & 0.3866 & 0.1701\cr\cline{2-9}
     & 20 & 0.2312 & 0.3859 & 0.2783 & 0.3793 & 0.2722 & 0.3732 & 0.2273\cr\hline
\multirow{3}{*}{\thead{$[\frac{1}{4},  \frac{1}{4}, \frac{1}{2}]$\\(energy-sensitive scenario)}}
     & 10 & 0.2713 & 0.4168 & 0.2721 & 0.3161 & 0.3736 & 0.3676 & 0.1365\cr\cline{2-9}
     & 15 & 0.2871 & 0.4045 & 0.2822 & 0.3278 & 0.3665 & 0.3547 & 0.1875\cr\cline{2-9}
     & 20 & 0.3032 & 0.4000 & 0.2932 & 0.3205 & 0.3643 & 0.3426 & 0.2227\cr
\hline
\hline
\end{tabular}\label{tab:utility}
\end{table*}

\section{Conclusion}

In this paper, we investigate the beam management and resource allocation in earth-fixed cells of NTNs, and propose a two-time-scale collaborative DRL scheme to enable LEO satellite and UE as agents with different control cycles to perform decision-making tasks collaboratively. Specifically, to obtain monotonic policy improvements, a sequential policy updating manner is adopted. Targeting at performing computations for the LEO satellite, the UE with a powerful computing capability improves its policy based on the sum advantage function of both two agents, and predicts its reference decision trajectory of future finite steps, which is sent to the LEO satellite. With the reference decision trajectory from the UE, the LEO satellite only needs to select actions through the finite-step rollout policy. Particularly, we also provide the theoretical analysis on the convergence of the proposed scheme. Simulation results show that the proposed scheme outperforms the DRL-based schemes without proper collaboration designs in terms of the throughput performance, and can effectively balance the satisfactory error, amount of allocated RBs and computational complexity over other BFS-based and PBU-based schemes. Through carefully designing the TTMDP model, the proposed two-time-scale collaborative DRL scheme can be transferred to solve multi-domain resource optimizations of non-terrestrial communications and spectrum-sharing between non-terrestrial and terrestrial communications. With the established analytical foundations of the two-time-scale two-agent DRL, this research also creates a new research roadmap and foundations toward implementing intelligent resource optimizations for multi-time-scale multi-agent scenarios of NTNs and other multi-tier heterogeneous networks such as the space-aerial-ground-integrated three-tier networks involving various satellite and UAV flying BSs.

\begin{appendices}

\section{Proof of Lemma 1}

When updating the policy of the lower-tier agent, the corresponding policy improvement of the higher-tier agent is equivalent to the difference on its advantage function \cite{Schulman2015}, i.e., $
\mathop{\mathbb{E}}_{{\bm{s}}_H^0, {\bm{a}}_H^0, \ldots}[ \sum_{k=0}^{\infty}{\gamma_H}^{k}A_{\pi_H}({\bm{s}}_H^{k}, {\bm{a}}_H^{k}; \tau_L(\pi_L))]  = -\rho_H(\pi_H, \pi'_L) + \rho_H(\pi_H, \pi_L)$,
where $s_H^0 \sim d_H^0({\bm{s}}_H^0)$, ${\bm{a}}_H^{k} \sim \pi_H(\cdot|{\bm{s}}_H^k)$ and ${\bm{s}}_H^{k+1} \sim P_H({\bm{s}}_H^{k}, \pi_H, \pi_L)$. In this proof, the $\alpha$-couple policy assumption is adopted, i.e., $P(\bar{\pi}_L \neq \pi_L|{\bm{s}}) \le \alpha_L$ and $P(\bar{\pi}_H \neq \pi_H|{\bm{s}}) \le \alpha_H$, $\forall$ ${\bm{s}} = ({\bm{s}}_L, {\bm{s}}_H)$ $\in {\cal{S}}_L \times {\cal{S}}_H$. Therefore, the probabilities of adopting $\pi'_L$ and $\pi_L$ are $1-\alpha_L$ and $\alpha_L$, respectively. The probability of selecting $\pi_L$ to improve the value function of the higher-tier agent before updating stage $k$ is $1 - (1 - \alpha_L)^{kT}$, i.e., $P(\rho_H^k > 0) = 1-(1 - \alpha_L)^{kT}$. Therefore, the expected policy improvement of higher-tier agent incurred by the policy of the lower-tier agent is bounded, i.e,
{\setlength\abovedisplayskip{3pt}
 \setlength\belowdisplayskip{3pt}
\begin{small}
\begin{align*}\nonumber
& \quad \rho_H(\pi_H, \pi_L) - \rho_H(\pi_H, \pi'_L) \\
& = \sum\limits_{{\bm{s}}_H}d_H^{\pi'_L, \pi_H}({\bm{s}}_H)\sum\limits_{{\bm{a}}_H \in {\cal{A}}_H}\pi_H({\bm{a}}_H|{\bm{s}}_H)\mathop{\mathbb{E}}\limits_{{\bm{s}}_H^{k+1}}[R_H({\bm{s}}_H, {\bm{a}}_H; \tau_L(\pi_L))\\
&\quad + {\gamma_H}V^{\pi_H}({\bm{s}}_H^{k+1};\pi'_L)- V^{\pi_H}({\bm{s}}_H^k;\pi'_L)]\\
& = L_{\pi_H}(\pi_L) - \sum\limits_{k=0}^{\infty}{\gamma_H}^k P(\rho_H^k > 0)\{\mathop{\mathbb{E}}\limits_{{\bm{s}}_H^k}[A_{\pi_H}({\bm{s}}_H^k, {\bm{a}}_H^k; \tau_L(\pi_L))|\\
&\quad \rho_H^k = 0] - \mathop{\mathbb{E}}\limits_{{\bm{s}}_H^k}[A_{\pi_H}({\bm{s}}_H^k, {\bm{a}}_H^k; \tau_L(\pi_L))|\rho_H^k > 0]\}\\
& \ge L_{\pi_H}(\pi_L) - 4(1 -(1 -\alpha_L)^T)\epsilon_H(\pi_L)\sum\limits_{k=0}^{\infty}{\gamma_H}^k(1- (1 - \alpha_L)^{kT}) \\
 &= L_{\pi_H}(\pi_L)- \frac{4\epsilon_H(\pi_L)(1 -(1 -\alpha_L)^T)^2\gamma_H}{(1-\gamma_H)(1-(1-\alpha_L)^T\gamma_H)}.
\end{align*}
\end{small}}
Next, we concentrate on the policy improvement of the lower-tier agent. Similarly, the policy improvement part is also analyzed from the advantage function, i.e., $
\mathop{\mathbb{E}}_{{\bm{s}}_L^0, {\bm{a}}_L^0, \ldots}[\sum_{i=0}^{\infty}{\gamma_L}^{i}A_{\pi_L}({\bm{s}}_L^{i}, {\bm{a}}_L^{i}; f(\pi_H))] = \rho_L(\pi_L, f(\pi_L))-\rho_L(\pi'_L, \pi_H)$,
where ${\bm{s}}_L^0 \sim d_L^0({\bm{s}}_L^0)$, ${\bm{a}}_L^{i} \sim \pi_L(\cdot|{\bm{s}}_L^i)$ and ${\bm{s}}_L^{i+1} \sim P_L({\bm{s}}_L^{i}, \pi_L, f(\pi_L))$. Similarly, the probability of selecting $\pi_L$ and $f(\pi_L)$ is $1 - (1 - \alpha_L)^i(1 - \alpha_H)^{\lfloor\frac{i}{T}\rfloor}$, i.e.,  $P(\rho_L^i > 0) = 1 - (1 - \alpha_L)^i(1 - \alpha_H)^{\lfloor\frac{i}{T}\rfloor}$, and we have

{\setlength\abovedisplayskip{3pt}
 \setlength\belowdisplayskip{3pt}
\begin{small}
\begin{align*}
& \quad \rho_L(\pi_L, f(\pi_L)) - \rho_L(\pi'_L, \pi_H) \\
& =  L_{\pi'_L, \pi_H}(\pi_L, f(\pi_L)) - \sum\limits_{i = 0}^{\infty}\gamma_L^i P(\rho_L^i > 0)\{\mathop{\mathbb{E}} \limits_{{\bm{s}}_L^i}[A_{\pi_L}({\bm{s}}_L^i, {\bm{a}}_L^i;\\
&\quad \tau_H(f(\pi_L)))|\rho_L^i = 0] -\mathop{\mathbb{E}}\limits_{{\bm{s}}_L^i}[A_{\pi_L}({\bm{s}}_L^t, {\bm{a}}_L^i; \tau_H(f(\pi_L)))|\rho_L^i> 0]\}\\
& \ge L_{\pi'_L, \pi_H}(\pi_L, f(\pi_L)) - 2\sum\limits_{i=0}^{\infty}{\gamma_L}^i(1 - (1 - \alpha_L)^i(1 - \alpha_H)^{\lfloor\frac{i}{T}\rfloor})\\
&\quad \cdot 2(1 - (1-\alpha_L)(1 - \alpha_H)^{\frac{1}{T}})\epsilon_L(\pi_H)\\
&= L_{\pi'_L, \pi_H}(\pi_L, f(\pi_L)) - 4(1 - (1-\alpha_L)(1 - \alpha_H)^{\frac{1}{T}})\epsilon_L(\pi_H)(\\
& \quad \sum\limits_{t=0}^{\infty}{\gamma_L}^i - \sum\limits_{i=0}^{\infty}{\gamma_L}^i(1 - (1 - \alpha_L)^i(1 -  \alpha_H)^{\lfloor\frac{i}{T}\rfloor})).
\end{align*}
\end{small}}
Since
{\setlength\abovedisplayskip{3pt}
\setlength\belowdisplayskip{3pt}
\begin{small}
\begin{align*}
&\quad  \sum\limits_{i=0}^{\infty}{\gamma_L}^i(1 - \alpha_L)^i(1 - \alpha_H)^{\lfloor\frac{i}{T}\rfloor} = \sum\limits_{i=0}^{\infty}\frac{(\gamma_L(1-\alpha_L)(1- \alpha_H)^{\frac{1}{T}})^i}{(1-\alpha_H)^{(\frac{i}{T} - \lfloor\frac{i}{T}\rfloor)}} \\
& = \sum\limits_{j = 0}^{T-1}\frac{\sum\limits_{i =0}^{\infty}(\gamma_L(1-\alpha_L)(1- \alpha_H)^{\frac{1}{T}})^{iT+j}}{(1-\alpha_H)^{\frac{j}{T}}}\\ 
&= \frac{1-{\gamma_L}^T(1-\alpha_L)^T}{(1 - \gamma_L(1-\alpha_L))(1 - {\gamma_L}^T(1-\alpha_L)^T(1- \alpha_H))},
\end{align*}
\end{small}}
we can derive the lower-tier agent's policy improvement bound illustrated in \eqref{eq:pi_ltier}.

\section{Proof of Proposition 1}

First, we proof that the value function of the higher-tier agent can be improved in a sequential updating manner, i.e.,
{\setlength\abovedisplayskip{3pt}
 \setlength\belowdisplayskip{3pt}
\begin{small}
\begin{align*}\nonumber
&\quad V^{\bar{\pi}_H}({\bm{s}}_H;\bar{\pi}_L) \\
& = \max\limits_{\bar{\pi}_H }\mathbb{E}[R_H({\bm{s}}_H, \bar{\pi}_H({\bm{s}}_H); \tau_L(\bar{\pi}_L)) + \gamma_HV^{\bar{\pi}_H}(h({\bm{s}}_H, \bar{\pi}_H, \bar{\pi}_L); \bar{\pi}_L) ]\\
                     & = \max\limits_{\bar{\pi}_H }\mathbb{E}[R_H({\bm{s}}_H, \bar{\pi}_H({\bm{s}}_H); \tau_L(\bar{\pi}_L)) +\gamma_HV^{\pi_H}(h({\bm{s}}_H, \bar{\pi}_H, \bar{\pi}_L); \pi_L)\\
                     &\quad -V^{\pi_H}({\bm{s}}_H; \pi_L) + V^{\pi_H}({\bm{s}}_H; \pi_L)]\\
                     & \ge \mathbb{E}[R_H({\bm{s}}_H, \pi_H({\bm{s}}_H); \tau_L(\bar{\pi}_L)) +\gamma_HV^{\pi_H}(h({\bm{s}}_H, \pi_H, \bar{\pi}_L); \pi_L)\\
                     &\quad -V^{\pi_H}({\bm{s}}_H; \pi_L) + V^{\pi_H}({\bm{s}}_H; \pi_L)]\\
                     &= \max\limits_{\bar{\pi}_L}[\rho_H(\pi_H, \bar{\pi}_L) - \rho_H(\pi_H, \pi_L)]+ V^{\pi_H}({\bm{s}}_H; \pi_L)\\
                     & \ge V^{\pi_H}({\bm{s}}_H;\pi_L), \forall {\bm{s}}_H \in {\cal{S}}_H.
\end{align*}
\end{small}}
Next, note that the policy of the higher-tier agent is updated toward the future policy estimated by the lower-tier agent, i.e.,  $\bar{\pi}_H \approx f(\bar{\pi}_L)$. We have
{\setlength\abovedisplayskip{3pt}
\setlength\belowdisplayskip{3pt}
\begin{small}
\begin{align*}\nonumber
&\quad V^{\bar{\pi}_L}({\bm{s}}_L;\bar{\pi}_H)\\
&  = \mathbb{E}[R_L({\bm{s}}_L, \bar{\pi}_L({\bm{s}}_L); \tau_H(\bar{\pi}_H)) + \gamma_LV^{\pi_L}(g({\bm{s}}_L, \bar{\pi}_L, \bar{\pi}_H)); \pi_L)\\
& \quad - V^{\pi_L}(s_L;\pi_H) +V^{\pi_L}(s_L;\pi_H)] \\
                     & \approx \mathbb{E}[R_L({\bm{s}}_L, \bar{\pi}_L({\bm{s}}_L); \tau_H(f(\bar{\pi}_L))) + \gamma_LV^{\pi_L}(g({\bm{s}}_L, \bar{\pi}_L, f(\bar{\pi}_L));\\
                     &\quad f(\bar{\pi}_L)) - V^{\pi_L}({\bm{s}}_L;\pi_H)+V^{\pi_L}({\bm{s}}_L;\pi_H)] \\
                     & = \max\limits_{\bar{\pi}_L}[\rho_L(\bar{\pi}_L, f(\bar{\pi}_L)) - \rho_L(\pi_H, \pi_L)] + V^{\pi_L}({\bm{s}}_L;\pi_H)\\
                     & \ge V^{\pi_L}({\bm{s}}_L;\pi_H), \forall {\bm{s}}_L \in {\cal{S}}_L,
\end{align*}
\end{small}}
Therefore, monotonic policy improvements of different agents can be achieved after the policy updating stage.

\section{Proof of Proposition 3}

We first proof the Lipschitz characteristic of $G_H(\cdot)$ and $G_L(\cdot)$. Given lower-tier policy $\pi_L^n$,
{\setlength\abovedisplayskip{0pt}
\setlength\belowdisplayskip{0pt}
\begin{small}
\begin{align*}
&\quad \lVert G_H(x(n), y(n)) -  G_H(x^*, y(n))\rVert \\
& = \lVert \max\limits_{\pi_H}(R_H({\bm{s}}_H, \pi_H(s_H); \tau_L(\pi_L^{n+1})) + \gamma_HV^{\pi^n_H}(h({\bm{s}}_H, \pi_H, \pi^{n+1}_L); \\ &\quad \pi^{n}_L))- \max\limits_{\pi_H}(R_H({\bm{s}}_H, \pi_H(s_H); \tau_L(\pi_L^{n+1})) + \gamma_HV^{\pi_H^*}(h({\bm{s}}_H, \pi_H,\\
&\quad \pi^{n+1}_L); \pi^{n}_L))\rVert \\
& \le \lVert\gamma_H\max\limits_{\pi_H}V^{\pi^n_H}(h({\bm{s}}_H, \pi_H, \pi^{n+1}_L), \pi^{n}_L)) \\
& \quad -  \gamma_H\max\limits_{{\bm{s}}_H \in {\cal{S}}_H}V^{\pi_H^*}(h({\bm{s}}_H, \pi_H^*, \pi^{n+1}_L); \pi^{n}_L)\rVert \\
&\le \gamma_H\max_{{\bm{s}}_H \in {\cal{S}}_H}\lVert V^{\pi_H^{n}}({\bm{s}}_H; \pi^{n}_L) - V^{\pi_H^{*}}({\bm{s}}_H; \pi^n_L)\rVert\\
& = \gamma_H\lVert x(n) - x^{*} \rVert_{\infty}.
\end{align*}
\end{small}}
Similarly, with a fixed higher-tier policy $\pi_H^n$ (namely, $\pi^{n+1}_H = \pi^n_H$),
{\setlength\abovedisplayskip{0pt}
\setlength\belowdisplayskip{0pt}
\begin{small}
\begin{align*}
& \quad \lVert G_L(x(n), y(n)) -  G_L(x(n), y^*)\rVert \\
& = \lVert R_L({\bm{s}}_L, \pi_L^{n+1}({\bm{s}}_L); \tau_H(\pi_H^{n+1})) + \gamma_LV^{\pi^{n}_L}(g({\bm{s}}_L, \pi^{n+1}_H, \pi^{n+1}_L);\\
& \quad \pi^{n}_H) - R_L({\bm{s}}_L, \pi_L^{*}({\bm{s}}_L); \tau_H(\pi_H^{n+1})) - \gamma_LV^{\pi^{*}_L}(g({\bm{s}}_L, \pi^{n+1}_H,\\
& \quad  \pi^{*}_L); \pi^{n}_H) \rVert \\
& \le \lVert \gamma_LV^{\pi^{n}_L}(g({\bm{s}}_L, \pi^{n+1}_H, \pi^{n+1}_L); \pi^{n}_H)\\
&\quad - \gamma_L\max_{{\bm{s}}_L \in {\cal{S}}_L}V^{\pi^{*}_L}(g({\bm{s}}_L, \pi^{n+1}_H, \pi^{*}_L); \pi^{n}_H) \rVert \\
& \le \gamma_L \max_{{\bm{s}}_L \in {\cal{S}}_L}\lVert V^{\pi_L^{n}}({\bm{s}}_L; \pi^{n}_H) - V^{\pi_L^{*}}({\bm{s}}_L; \pi^n_H)\rVert\\
& = \gamma_L\lVert y(n) - y^{*} \rVert_{\infty}.
\end{align*}
\end{small}}
Therefore, $G_H(\cdot)$ and $G_L(\cdot)$ are Lipschitz with regard to the $\infty$-norm (namely, $\lVert \cdot \rVert_{\infty}$). Next, since the rewards defined in \eqref{eq:h_reward} and \eqref{eq:l_reward} are bounded, $\beta_H^{n}$ and $\beta_L^{n}$ as estimated errors of accumulated rewards are also bounded, and thus we have
{\setlength\abovedisplayskip{0pt}
\setlength\belowdisplayskip{0pt}
\begin{small}
\begin{align*}
&\quad \mathbb{E}[\lVert\beta_H^{n+1}\rVert_{\infty}^2|{\cal{F}}_n] = \mathbb{E}[\lVert \max\limits_{\pi_H}\sum_{k = 0}^{\bar{n} - 1}\gamma_H^{k}R_H({\bm{s}}_H^{n+k}, \pi_H; \tau_L(\pi^{n+1}_L))\\
&\quad  - \max\limits_{\pi_H}V^{\pi_H}({\bm{s}}_H; \pi^{n+1}_L)\rVert_{\infty}^2] \le K(1 + \lVert x(n) \rVert_{\infty}^2 + \lVert y(n) \rVert_{\infty}^2), \\
\end{align*}
\end{small}}
and
{\begin{small}
\begin{align*}
& \quad \mathbb{E}[\lVert\beta_L^{n+1}\rVert_{\infty}^2|{\cal{F}}_n]\\
& = \mathbb{E}[\lVert V^{\pi^{n+1}_L}({\bm{s}}_L;\hat{\pi}^{n+1}_H) - V^{\pi^{n+1}_L}({\bm{s}}_L;\pi^{n+1}_H)\rVert_{\infty}^2]\\
& = \mathbb{E}[\lVert(R_L({\bm{s}}_L, \pi_L^{n+1}({\bm{s}}_L); \tau_H(\hat{\pi}^{n+1}_H)) - R_L({\bm{s}}_L, \pi_L^{n+1}({\bm{s}}_L); \\
& \quad \tau_H(\pi^{n+1}_H))) +\gamma_L(V^{\pi^{n+1}_L}(g({\bm{s}}_L, \pi_L^{n+1}, \hat{\pi}^{n+1}_H);\hat{\pi}^{n+1}_H) - V^{\pi^{n+1}_L}(\\
& \quad g({\bm{s}}_L, \pi_L^{n+1}, \pi^{n+1}_H); \pi^{n+1}_H))\rVert_{\infty}^2]\\
& \le K(1 + \lVert x(n) \rVert_{\infty}^2 + \lVert y(n) \rVert_{\infty}^2).
\end{align*}
\end{small}}
Since the Lipschitz characteristics of $G_H(\cdot)$ and $G_L(\cdot)$, $x(n)$ and $y(n)$ can be regarded as contractive sequences. Therefore, the value of $K(1 + \lVert x(n) \rVert_{\infty}^2 + \lVert y(n) \rVert_{\infty}^2)$ with proper $K$ decreases close to $0$ with $n \rightarrow \infty$. In this case, $\mathbb{E}[\beta_H^{n+1}|{\cal{F}}_n] \rightarrow 0$  and $\mathbb{E}[\beta_L^{n+1}|{\cal{F}}_n] \rightarrow 0$. Consequently, $\beta_H^{n}$ and $\beta_L^{n}$ are martingale difference sequences with regard to ${\cal{F}}_n$. Therefore, we have
{\setlength\abovedisplayskip{3pt}
\setlength\belowdisplayskip{3pt}
\begin{small}
\begin{align*}
&\quad \sum\limits_{n=0}^{\infty}a(n)\beta_H^n\\
& < \sqrt{\sum\limits_{n = 0}^{\infty}(\beta_H^n)^2\sum\limits_{n = 0}^{\infty}a(n)^2} <\sqrt{(\frac{\gamma_H^{\bar{n}}R_H^{max}}{1 - \gamma_H})^2\sum\limits_{n = 0}^{\infty}a(n)^2} < \infty,\\
& \sum\limits_{n=0}^{\infty}b(n)\beta_L^n < \sqrt{\sum\limits_{n = 0}^{\infty}(\beta_L^n)^2\sum\limits_{n = 0}^{\infty}b(n)^2} <\sqrt{\frac{(R_L^{max})^2}{1-\gamma_L^2}\sum\limits_{n = 0}^{\infty}b(n)^2} < \infty.
\end{align*}
\end{small}}

\section{Proof of Theorem 1}

At first, according to the definitions in \textbf{Remark 2} and \textbf{Proposition 3}, iterates $\{x(n), y(n)\}$ is following the definition of the two time-scale iterates in \cite{Borkar1997}. Therefore,
{\setlength\abovedisplayskip{3pt}
\setlength\belowdisplayskip{3pt}
\begin{small}
\begin{align*}
&\quad\quad \lVert y^* -y(n + 1) \rVert_{\infty}  =  \lVert {\cal{T}}_Ly^* - {\cal{T}}_Ly(n)\rVert_{\infty} \\
& \le  \lVert y^* + b(n)G_L(x(n), y^*) - b(n)y^* - (y(n) \\
& \quad + b(n)G_L(x(n), y(n))- b(n)y(n)) \rVert_{\infty} + \lVert b(n)\beta_L^{n+1} \rVert_{\infty}\\
& \le (1 - b(n))\lVert y^* - y(n) \rVert_{\infty} + b(n)\lVert G_L(x(n), y^*)\\
&\quad - G_L(x(n), y(n)) \rVert_{\infty} + \lVert b(n)\beta_L^{n+1} \rVert_{\infty}\\
& \le (1 - b(n)) \lVert y^* - y(n) \rVert_{\infty} + b(n)(\gamma_L\lVert  y^* - y(n)  \rVert_{\infty} + \lVert\beta_L^{n+1} \rVert_{\infty}).
\end{align*}
\end{small}}
where the error term $b(n)(\gamma_L\lVert  y^* - y(n)  \rVert_{\infty} + \lVert\beta_L^{n+1} \rVert_{\infty})$ can be negligible with $b(n) \to 0$. In this case, there exits a unique global asymptotically stable equilibrium. Additionally, since $G_L(\cdot)$ has been proved to be Lipschitz, $\lambda(\cdot) = \max(G_L(\cdot))$ should also be Lipschitz with regard to the $\infty$-norm. Thus, for each $x(n) \varpropto \pi_H^n$, $\dot{y}(n) = G_L(x(n), y(n))$ has a unique global asymptotically stable equilibrium $y(n) = \lambda(x(n))$ such that $\lambda(\cdot)$ is Lipschitz, and this assumption is satisfied.

Next, with updated $y(n+1) = \lambda(x(n))$, $G_H(x(n), \lambda(x(n)))$ has a deterministic improvement incurred by the $\lambda(x(n))$, and we have
{\setlength\abovedisplayskip{3pt}
\setlength\belowdisplayskip{3pt}
\begin{small}
\begin{align*}
&\quad\quad   \lVert x^* - x(n + 1) \rVert_{\infty} =  \lVert {{\cal{T}}_Hx^* - \cal{T}}_Hx(n)\rVert_{\infty} \\
& \le  \lVert x^* + a(n)G_H(x^*, y(n)) - a(n)x^* - (x(n)\\
& \quad + a(n)G_H(x(n), y(n)) - a(n)x(n))\rVert_{\infty} + \lVert a(n)\beta_H^{n+1} \rVert_{\infty} \\
&\le (1 - a(n))\lVert x^* - x(n) \rVert_{\infty} + a(n)\lVert G_H(x(n), y(n)) \\
&\quad - G_H(x^*, y(n))\rVert_{\infty} + \lVert a(n)\beta_H^{n+1} \rVert_{\infty} \\
&\le (1 - a(n)) \lVert x^* - x(n) \rVert_{\infty} +  a(n)(\gamma_H\lVert x^* - x(n) \rVert_{\infty} + \lVert \beta_H^{n+1} \rVert_{\infty}).
\end{align*}
\end{small}}
where the error term $a(n)(\gamma_H\lVert x^* - x(n) \rVert_{\infty} + \lVert \beta_H^{n+1} \rVert_{\infty})$ can be negligible with $a(n) \to 0$. In this case, $x(n)$ can be updated through iterations towards the unique global asymptotically stable equilibrium. Therefore, $\dot{x}(n) = G_H(x(n), \lambda(x(n)))$ has a unique global asymptotically stable equilibrium $x^*$, and \textbf{Assumption 2} is satisfied. Finally, since the rewards defined in \eqref{eq:h_reward} and \eqref{eq:l_reward} are bounded, $x(n)$ or $y(n)$ equivalent to corresponding discounted accumulated reward is also bounded. Therefore, \textbf{Assumption 3} is satisfied. Then, according to \textbf{Lemma 1}, iterates $\{x(n), y(n)\}$ converge to the optimum. Since the consistency between policy and value function, policies converges to the optimum, and the proof is completed.

\section{Proof of Theorem 2}

In this proof, we first define the optimal Bellman operators as
{\setlength\abovedisplayskip{3pt}
\setlength\belowdisplayskip{3pt}
\begin{small}
\begin{align*}
 {\cal{T}}_HV^{\pi_H}({\bm{s}}_H; \pi_L) & = \max\limits_{\pi_H}(R_H({\bm{s}}_H, \pi_H({\bm{s}}_H); \tau_L(\pi_L^{*}))\\
  &\quad + \gamma_HV^{\pi_H}(h({\bm{s}}_H, \pi_H, \pi_L^{*}); \pi_L^{*})),\\
{\cal{T}}_LV^{\pi_L}({\bm{s}}_L; \pi_H) & = \max\limits_{\pi_L}(R_L({\bm{s}}_L, \pi_L({\bm{s}}_L); \tau_H(\pi_H^{*})) \\
  &\quad + \gamma_LV^{\pi_L}(h({\bm{s}}_L, \pi_L, \pi_H^{*}); \pi_H^{*})).
\end{align*}
\end{small}}
Then, we define empirical Bellman operators $\hat{\cal{T}}_H$ and $\hat{\cal{T}}_L$, i.e.,
{\setlength\abovedisplayskip{3pt}
\setlength\belowdisplayskip{3pt}
\begin{small}
\begin{align*}
\hat{\cal{T}}^n_HV^{\pi_H}({\bm{s}}_H; \pi_L) & = \max\limits_{\pi_H}(R_H({\bm{s}}_H, \pi_H({\bm{s}}_H); \tau_L(\pi^n_L)) \\
  &\quad + \gamma_HV^{\pi^n_H}(h({\bm{s}}_H, \pi_H, \pi^n_L); \pi^n_L))\\
\hat{\cal{T}}^n_LV^{\pi_L}({\bm{s}}_L; \pi_H) &= \max\limits_{\pi_L}(R_L({\bm{s}}_L, \pi_L({\bm{s}}_L); \tau_H(\pi^n_H)) \\
  &\quad + \gamma_LV^{\pi^n_L}(h({\bm{s}}_L, \pi_L, \pi^n_H); \pi^n_H)).
\end{align*}
\end{small}}
Therefore, sequences in \eqref{eq:h_tier} and \eqref{eq:l_tier} can be rewritten as
{\setlength\abovedisplayskip{3pt}
\setlength\belowdisplayskip{3pt}
\begin{small}
\begin{align*}
x(n+1) &= (1 - a(n))x(n) + a(n)\hat{\cal{T}}^n_Hx(n) + a(n)\beta_H^{n+1}\\
y(n+1) &= (1 - b(n))y(n) + b(n)\hat{\cal{T}}^n_Ly(n) + b(n)\beta_L^{n+1}.
\end{align*}
\end{small}}
Since the rewards defined in \eqref{eq:h_reward} and \eqref{eq:l_reward} are bounded,  $V^{\pi_H}(\cdot)$ and $V^{\pi_L}(\cdot)$ are bounded by $\frac{R_H^{max}}{1- \gamma_H}$ and $\frac{R_L^{max}}{1 - \gamma_L}$, respectively. If we denote $e_L^{n} = \hat{\cal{T}}^n_L\hat{y}^{n} - {\cal{T}}_L^{n}y^*$ with $\hat{y}^{0} = y^{*}$,  we have $\hat{y} = \frac{1}{n}\sum_{i = 0}^{n-1}({\hat{\cal{T}}^i_L\hat{y}^{i}} + \beta_L^{i+1}) = \frac{1}{n}(\sum_{i = 0}^{n-1}{\cal{T}}^i_Ly^{*} + \sum_{i = 0}^{n-1}\beta_L^{i+1} + \sum_{i = 0}^{n-1}e_L^i)$, and
{\setlength\abovedisplayskip{3pt}
 \setlength\belowdisplayskip{3pt}
\begin{small}
\begin{align*}
&\quad \lVert y^{*} - \hat{y} \rVert_{\infty}\\
& = \lVert \frac{1}{n}\sum\limits_{i = 0}^{n}{y^{*}} - \frac{1}{n}(\sum\limits_{i = 0}^{n-1}{\cal{T}}_L^{i}y^{*} + \sum\limits_{i = 0}^{n-1}\beta_L^{i+1} + \sum\limits_{i = 0}^{n-1}e_L^i) \rVert_{\infty} \\
& \le \frac{1}{n}\sum\limits_{i = 0}^{n-1}\lVert {\cal{T}}^{i+1}_L y^{*} - {\cal{T}}_L^{i}y^{*} \rVert_{\infty} + \frac{1}{n}\lVert \sum\limits_{i = 0}^{n-1}\beta_L^{i+1} + \sum\limits_{i = 0}^{n-1}e_L^i \rVert_{\infty} \\
& \le \frac{1}{n}\sum\limits_{i = 0}^{n-1}{\gamma_L}^{i} \lVert {\cal{T}}_Ly^{*} -y^{*} \rVert_{\infty} + \frac{1}{n}\lVert \sum\limits_{i = 0}^{n-1}\beta_L^{i+1} + \sum\limits_{i = 0}^{n-1}e_L^i \rVert_{\infty} \\
& \le \frac{1}{n}\sum\limits_{i = 0}^{n-1}{\gamma_L}^{i}2V_L^{max} + \frac{1}{n}\lVert \sum\limits_{i = 0}^{n-1}\beta_L^{i+1} + \sum\limits_{i = 0}^{n-1}e_L^i \rVert_{\infty}\\
&  = \frac{2R_L^{max}(1 - {\gamma_L}^n)}{n(1 - \gamma_L)^2} + \frac{1}{n}\lVert \sum\limits_{i = 0}^{n-1}\beta_L^{i+1} + \sum\limits_{i = 0}^{n-1}e_L^i \rVert_{\infty}\\
& \le \frac{2R_L^{max}}{n(1 - \gamma_L)^2} + \frac{1}{n}\lVert \sum\limits_{i = 0}^{n-1}\beta_L^{i+1} + \sum\limits_{i = 0}^{n-1}e_L^i \rVert_{\infty}.
\end{align*}
\end{small}}

Therefore, $\lVert y^* - y(n) \rVert_{\infty} =  \lVert y^* - {\hat{\cal{T}}_L}^{n}y \rVert_{\infty}  \le  \lVert y^* - \hat{y} \rVert_{\infty} + \lVert \hat{y} - {\hat{\cal{T}}_L}^{n}y \rVert_{\infty}  = 2(\frac{2R_L^{max}}{n(1 - \gamma_L)^2} + \frac{1}{n}\lVert \sum_{i = 0}^{n-1}\beta_L^{i+1} + \sum_{i = 0}^{n-1}e_L^i \rVert_{\infty})  = \frac{4R_L^{max}}{n(1 - \gamma_L)^2} + \frac{2}{n}\lVert \sum_{i = 0}^{n-1}\beta_L^{i+1} + \sum_{i = 0}^{n-1}e_L^i \rVert_{\infty}$.
According to Hoeffding-Azuma's inequality \cite{Amiri2019} and a union bound of the decision space, the probability of failure in any condition is smaller than $\mathbb{P}(\lVert \sum_{i = 0}^{n-1}\beta_L^{i+1} + \sum_{i = 0}^{n-1}e_L^i \rVert_{\infty} \ge \epsilon) \le 2\lvert{\cal{S}}_L\rvert\lvert{\cal{A}}_L\rvert\cdot\exp(- \frac{\epsilon^2(1 - \gamma_L)^{2}}{32n{R_L^{max}}^2})$. Similarly, $\hat{x}$ can also be approximated as a sum of past estimations, i.e., $ \hat{x} = \frac{1}{n}(\sum_{i = 0}^{n-1}{\cal{T}}^i_Hx^{*} + \sum_{i = 0}^{n-1}\beta_H^{i+1} + \sum_{i = 0}^{n-1}e_H^i)$, and the probability of failure can be given by $\mathbb{P}(\lVert \sum_{i = 0}^{n-1}\beta_H^{i+1} + \sum_{i = 0}^{n-1}e_H^i \rVert_{\infty} \ge \epsilon) \le 2\lvert{\cal{S}}_H\rvert\lvert{\cal{A}}_H\rvert\cdot\exp(- \frac{\epsilon^2(1 - \gamma_H)^{2}}{32n{R_H^{max}}^2})$.
Set the error rate less than $\delta$. Hence, $\mathbb{P}(\frac{2}{n}\lVert \sum_{i = 0}^{n-1}\beta_L^{i+1} + \sum_{i = 0}^{n-1}e_L^i \rVert_{\infty} \le 8\sqrt{\frac{2{R_L^{max}}^2}{n(1 - \gamma_L)^2}\ln\frac{2\lvert{\cal{S}}_L\rvert\lvert{\cal{A}}_L\rvert}{\delta}} ) \ge 1 - \delta$ and $\mathbb{P}(\frac{2}{n}\lVert \sum_{i = 0}^{n-1}\beta_H^{i+1} + \sum_{i = 0}^{n-1}e_H^i \rVert_{\infty} \le 8\sqrt{\frac{2{R_H^{max}}^2}{n(1 - \gamma_H)^2}\ln\frac{2\lvert{\cal{S}}_H\rvert\lvert{\cal{A}}_H\rvert}{\delta}} ) \ge 1 - \delta$. Therefore, we can derive \eqref{eq:x_error} and \eqref{eq:y_error}.
\end{appendices}



\begin{thebibliography}{10}
\providecommand{\url}[1]{#1}
\csname url@samestyle\endcsname
\providecommand{\newblock}{\relax}
\providecommand{\bibinfo}[2]{#2}
\providecommand{\BIBentrySTDinterwordspacing}{\spaceskip=0pt\relax}
\providecommand{\BIBentryALTinterwordstretchfactor}{4}
\providecommand{\BIBentryALTinterwordspacing}{\spaceskip=\fontdimen2\font plus
\BIBentryALTinterwordstretchfactor\fontdimen3\font minus
  \fontdimen4\font\relax}
\providecommand{\BIBforeignlanguage}[2]{{%
\expandafter\ifx\csname l@#1\endcsname\relax
\typeout{** WARNING: IEEEtran.bst: No hyphenation pattern has been}%
\typeout{** loaded for the language `#1'. Using the pattern for}%
\typeout{** the default language instead.}%
\else
\language=\csname l@#1\endcsname
\fi
#2}}
\providecommand{\BIBdecl}{\relax}
\BIBdecl

\bibitem{Cao2023}
Y.~{Cao}, S.-Y.~{Lien}, Y.-C.~{Liang}, D.~{Niyato} and X.~{Shen}, ``Collaborative deep reinforcement learning for
resource optimization in non-rerrestrial networks,'' in \emph{ Proc. IEEE Ann. Int. Sym. Personal Indoor Mobile Radio
 Commun. (PIMRC)}, 2023, pp. 1-6.

\bibitem{3GPP2021}
3GPP TR 38.821 v16.1.0, ``Solutions for NR to support non-terrestrial networks,'' Oct. 2021.

\bibitem{Lin2021}
X. Lin, S. Rommer, S. Euler, E. A. Yavuz and R. S. Karlsson, ``5G from space: An overview of 3GPP non-terrestrial networks,''
 \emph{IEEE Commun. Standards Mag.}, vol. 5, no. 4, pp. 147-153, Dec. 2021.

\bibitem{Ortigueira2021}
R. Ortigueira, J. A. Fraire, A. Becerra, T. Ferrer and S. Céspedes, ``RESS-IoT: A scalable energy-efficient MAC protocol for direct-to-satellite IoT,''
 \emph{IEEE Access}, vol. 9, pp. 164440-164453, 2021.

\bibitem{Yu2022}
S. Yu, X. Gong, Q. Shi, X. Wang and X. Chen, ``EC-SAGINs: Edge-computing-enhanced space–air–ground-integrated networks for internet of vehicles,''
  \emph{IEEE Internet Things J.}, vol. 9, no. 8, pp. 5742-5754, Apr., 2022.


\bibitem{Nie2019}
J. Nie, J. Luo, Z. Xiong, D. Niyato and P. Wang, ``A Stackelberg game approach toward socially-aware incentive mechanisms for mobile crowdsensing,''
 \emph{IEEE Trans. Wirel. Commun.}, vol. 18, no. 1, pp. 724-738, Jan. 2019.



\bibitem{Feriani2021}
A. Feriani and E. Hossain, ``Single and multi-agent deep reinforcement learning for AI-enabled wireless networks: A tutorial,''
 \emph{IEEE Commun. Sur. Tutor.}, vol. 23, no. 2, pp. 1226-1252, Secondquarter 2021.

\bibitem{Palacios2021}
J. Palacios, N. González-Prelcic, C. Mosquera and T. Shimizu, ``A dynamic codebook design for analog beamforming in MIMO LEO satellite communications,''
 \emph{arXiv preprint arXiv: 2111.08655}, 2021

\bibitem{Roper2022}
M. Röper, B. Matthiesen, D. Wübben, P. Popovski and A. Dekorsy, ``Beamspace MIMO for satellite swarms,''
\emph{2022 IEEE Wireless Communications and Networking Conference (WCNC)}, 2022, pp. 1307--1312.

\bibitem{Mayorga2021}
I. Leyva-Mayorga et al., ``Inter-plane inter-satellite connectivity in LEO constellations: Beam switching vs. beam steering,''
\emph{2021 IEEE Global Communications Conference (GLOBECOM)}, 2021, pp. 1--6.

\bibitem{Li2022}
K.-X. Li et al., ``Downlink transmit design for massive MIMO LEO satellite communications,''
\emph{IEEE Trans. Commun.}, vol.~70, no.~2, pp. 1014--1028, Feb. 2022.

\bibitem{Zhao2021}
F. Zhao, Y. Chen, R. Li and J. Wang,``On the beamforming of LEO earth fixed cells,'' \emph{2021 IEEE 94th Vehicular Technology Conference (VTC2021-Fall)}, 2021, pp. 1--5.

\bibitem{Liao2020}
X. Liao et al., ``Distributed intelligence: A verification for multi-agent DRL-based multibeam satellite resource allocation,''
 \emph{IEEE Commun. Lett.}, vol.~24, no.~12, pp. 2785--2789, Dec. 2020.

\bibitem{Lin2022}
Z. Lin, et al., ``Dynamic beam pattern and bandwidth allocation based on multi-agent deep reinforcement learning for beam hopping satellite systems,''
 \emph{IEEE Trans. Veh. Technol.}, vol. 71, no. 4, pp. 3917-3930, Apr. 2022.


\bibitem{Cao2021}
Y. Cao, S.-Y. Lien and Y.-C. Liang, ``Multi-tier collaborative deep reinforcement learning for non-terrestrial network empowered vehicular connections,''
\emph{2021 IEEE 29th International Conference on Network Protocols (ICNP) Workshop}, 2021, pp. 1--6.


\bibitem{Razmi2022}
N. Razmi, B. Matthiesen, A. Dekorsy and P. Popovski, ``Ground-assisted federated learning in LEO satellite constellations,''
 \emph{IEEE Wirel. Commun. Lett.}, vol.~11, no.~4, pp. 717--721, Apr. 2022.

\bibitem{Talwar2021}
S. Talwar, N. Himayat, H. Nikopour, F. Xue, G. Wu and V. Ilderem, ``6G: Connectivity in the Era of Distributed Intelligence,''
 \emph{IEEE Commun. Mag.}, vol. 59, no. 11, pp. 45--50, Nov. 2021.

\bibitem{Tan2019}
L. T. Tan, R. Q. Hu and L. Hanzo, ``Twin-timescale artificial intelligence aided mobility-aware edge caching and computing in vehicular networks,''
 \emph{IEEE Trans. Veh. Technol.}, vol.~68, no.~4, pp. 3086--3099, Apr. 2019.

\bibitem{Qin2019}
M. Qin et al., ``Learning-aided multiple time-scale SON function coordination in ultra-dense small-cell networks,''
 \emph{IEEE Trans. Wirel. Commun.}, vol.~18, no.~4, pp. 2080--2092, Apr. 2019.

\bibitem{Han2023}
D. Han et al., ``Two-timescale learning-based task offloading for remote IoT in integrated satellite-terrestrial networks''
 \emph{IEEE Internet Things J.}, in early access, Jan. 2023, doi: 10.1109/JIOT.2023.3237209.

\bibitem{Chai2022}
S. Chai and V. K. N. Lau, ``Mixed-timescale request-driven user association, trajectory and radio resource control for cache-enabled multi-UAV networks,''
 \emph{IEEE Trans. Signal Process.}, vol.~70, pp. 4997--5011, Oct. 2022.

\bibitem{Lien2016}
S.-Y. Lien, C.-C. Chien, H.-L. Tsai, Y.-C. Liang and D. I. Kim, ``Configurable 3GPP licensed assisted access to unlicensed spectrum,''
 \emph{IEEE Wirel. Commun.}, vol.~23, no.~6, pp. 32--39, Dec. 2016.

\bibitem{Bester2019}
 C. J. Bester, S. D. James, and G. D. Konidaris, ``Multi-pass Q-networks for deep reinforcement learning with parameterised action spaces,'' \emph{arXiv preprint arXiv:1905.04388}, 2019.



\bibitem{Chang2003}
Hyeong Soo Chang, P. J. Fard, S. I. Marcus and M. Shayman, ``Multitime scale Markov decision processes,''
\emph{IEEE Trans. Automat. Contr.}, vol.~48, no.~6, pp. 976--987, June 2003.


\bibitem{Kulkarni2016}
T. D. Kulkarni, K. Narasimhan, A. Saeedi, and J. Tenenbaum, ``Hierarchical deep reinforcement learning: Integrating temporal abstraction
and intrinsic motivation,'' \emph{Advances in neural information processing systems}, pp. 3675–3683, 2016.

\bibitem{Sutton1999}
R. Sutton, D. Precup, and S. Singh, ``Between MDPs and Semi-MDPs: a framework for temporal abstraction in reinforcement learning,''
 \emph{Artificial Intelligence}, vol.~112, pp. 181–211, 1999.

\bibitem{Li2019}
A. C. Li et al. ``Sub-policy adaptation for hierarchical reinforcement learning,'' \emph{ICML}, 2019

\bibitem{Wen2021}
Y. Wen et al., ``A Game-theoretic approach to multi-agent trust region optimization,'' \emph{arXiv preprint arXiv:2106.06828}, 2021.

\bibitem{Yang2021}
B. Yang, L. Zheng, L. J. Ratliff, B. Boots, and J. R. Smith, ``Stackelberg MADDPG: Learning emergent behaviors
 via information asymmetry in competitive games,'' \emph{AAAI}, 2021

\bibitem{Bertsekas2021}
D. Bertsekas, ``Multiagent reinforcement learning: Rollout and policy iteration,'' \emph{IEEE/CAA J. Autom. Sini.},
 vol.~8, no.~2, pp. 249--272, Feb. 2021.

\bibitem{Schulman2015}
J. Schulman, S. Levine, P. Abbeel, M. Jordan, and P. Moritz, ``Trust region policy optimization,'' \emph{Proc. Int. Conf. Mach. Learning (ICML)}, 2015, pp. 1889–1897.


\bibitem{Borkar1997}
V. S. Borkar, “Stochastic approximation with two time scales,” in \emph{Syst. Control Lett.}, vol.~29, pp. 291–294, 1997.

\bibitem{3GPP2019}
3GPP TR 38.901 V16.0.0, ``Study on channel model for frequencies from 0.5 to 100 GHz,'' Oct. 2019.


\bibitem{Bertsekas2019}
D. P. Bertsekas, \emph{Reinforcement Learning and Optimal Control.} Belmont, MA, USA: Athena Scientific, 2019

\bibitem{Senouci2016}
M.~A.~Senouci, S.~Hoceini and A.~Mellouk, ``Utility function-based TOPSIS for network interface selection in heterogeneous wireless networks,''
 \emph{2016 IEEE International Conference on Communications (ICC)}, 2016, pp. 1-6.

\bibitem{Amiri2019}
R. Amiri, M. A. Almasi, J. G. Andrews and H. Mehrpouyan, ``Reinforcement learning for self organization and power control of two-tier heterogeneous networks,''
 \emph{IEEE Trans. Wirel. Commun.}, vol.~18, no.~8, pp. 3933--3947, Aug. 2019.

\end{thebibliography}
\end{document}